\newcolumntype{Y}{>{\centering\arraybackslash}X}
\theoremstyle{definition}
\newtheorem{theorem}{Theorem}[section]
\newtheorem{corollary}[theorem]{Corollary}
\newtheorem{lemma}[theorem]{Lemma}
\newtheorem{definition}[theorem]{Definition}
\newtheorem{remark}[theorem]{Remark}
\newtheorem{notation}[theorem]{Notation}
\newtheorem{example}[theorem]{Example}
\DeclareMathOperator\rowsp{rowsp}
\DeclareMathOperator\colsp{colsp}
\DeclareMathOperator\tr{Tr}
\def\Fqnm{\mathbb{F}_q^{n\times m}}
\def\Fqnn{\mathbb{F}_q^{n\times n}}
\DeclareMathOperator\maxrank{maxrk}
\DeclareMathOperator\rank{rk}
\def\Fqm{\mathbb{F}_{q^m}}
\def\Fq{\mathbb{F}_q}
\def\Fqn{\mathbb{F}_q^n}
\def\dimq#1{\dim_{\mathbb{F}_q}(#1)}
\DeclareMathOperator\supp{supp}
\DeclareMathOperator\colsupp{colsupp}
\DeclareMathOperator\rowsupp{rowsupp}
\newcommand{\qbin}[2]{\begin{bmatrix}{#1}\\ {#2}\end{bmatrix}_q}
\newcommand{\qbintwo}[2]{\begin{bmatrix}{#1}\\ {#2}\end{bmatrix}_2}
\newcommand{\bin}[2]{\begin{pmatrix}{#1}\\ {#2}\end{pmatrix}}
\newcommand{\qmbin}[2]{\begin{bmatrix}{#1}\\ {#2}\end{bmatrix}_{q^m}}
\DeclareMathOperator\gl{GL}
\DeclareMathOperator\wt{wt}
\def\mC{\mathcal{C}}
\title{Rank-Metric Codes, Generalized Binomial Moments and their Zeta Functions}
\author{Eimear Byrne}
\author{Giuseppe Cotardo$^*$}
\thanks{$^*$The author was supported by the Irish Research Council, grant n. GOIPG/2018/2534.}
\author{Alberto Ravagnani$^{**}$}
\thanks{$^{**}$The author was supported by the Marie Curie Research Grants
Scheme, grant n. 740880.}
\keywords{rank-metric code; zeta function; binomial moments; generalized rank weights;
generalized rank weight distribution}
\subjclass[2010]{11T71; 05A17.}
\begin{document}

	\maketitle
\begin{abstract}
	In this paper we introduce a new class of extremal codes, namely the $i$-BMD codes. We show that for this family 
several of the invariants are determined by the parameters of the underlying code. We refine and extend the notion of an $i$-MRD code and 
show that the $i$-BMD codes form a proper subclass of the $i$-MRD codes.
Using the class of $i$-BMD codes we then obtain a relation between the generalized rank weight enumerator and its corresponding generalized zeta function. We also establish a MacWilliams identity for generalized rank weight distributions. 

\end{abstract}

\section{Introduction}

A well-studied problem in coding theory is the determination of the invariants of a code, such as its rank distribution, its binomial moments and its generalized weights. 
Computation of such invariants for an arbitrary code is a non-trivial problem. On the other hand, for codes in certain families, all or some of these invariants are determined by the standard coding theoretic parameters of length, dimension and minimum distance.

Optimal codes are of great interest in coding theory and often have rigidity properties, which make them interesting as combinatorial objects. 
Perhaps the best known class of optimal codes in the {\em rank metric} are the \textit{maximum rank distance} (MRD) codes. This family were first introduced in the coding theory literature by Delsarte~\cite{delsarte1978bilinear} and are characterised as being the $k$-dimensional subspaces of $\Fqnm$ that attain the rank-metric analogue of the Singleton bound. In this sense, the class of MRD codes may be regarded as a $q$-analogue of the {\em maximum distance separable} (MDS) linear block codes.
Delsarte's construction (see also~\cite{gabidulin1985theory, roth1991maximum}) immediately yields the existence of MRD codes for any choice of the parameters $q,m,n$ and minimum rank distance. Moreover, these parameters fully determine all the invariants of the codes in this class. 
Similarly, the weight enumerator of an MDS code is determined by its parameters $q,n$ and minimum Hamming distance.

An interesting fact about the MRD codes is that for a fixed ambient space $\Fqnm$ (where we write $n\leq m$, without loss of generality),  
the set of $n+1$ distinct MRD rank weight enumerators forms a ${\mathbb Q}$-basis of the space of homogeneous polynomials in ${\mathbb Q}[X,Y]$ of degree $n$. 
The analogous statement holds also for the MDS weight enumerators. 
These observations have been exploited in studying the {\em zeta function} of a linear code.
This object was introduced in~\cite{duursma1999weight,duursma2001weight} for linear block (Hamming metric) codes and in~\cite{blanco2018rank} for rank-metric codes. 
The zeta function is the generating function of the \textit{normalized binomial moments} of a code and can be related to the code's weight enumerator.
The corresponding recurrence formula of the zeta function, the {\em zeta polynomial}, turns out to have coefficients that are exactly the coefficients that arise in the expression of the weight enumerator
as a ${\mathbb Q}$-linear combination of Singleton-optimal weight enumerators (MDS for the Hamming metric and MRD for the rank metric). 

A central problem studied in this paper is on the behaviour of the generalized rank weight distribution of a code with respect to its generalized binomial moments and on the connections between these objects via zeta functions. In order to develop such a theory, what is first required is the correct notion of code optimality, as optimal codes provide the fundamental building blocks for this theory. 

We mention some classes of optimal codes, for example, the $\Fqm$-linear $i$-MRD codes~\cite{ducoat2014rank} and the \textit{dually quasi-MRD} (DQMRD) codes~\cite{de2018weight}. 
In~\cite{de2018dually}, de la Cruz introduced another class of $\Fq$-linear rank-metric codes that are $i$-MRD for $i\in\{1,1+m,\ldots,1+(\left\lceil\frac{k}{m}\right\rceil-1)m\}$.
It turns out that the $i$-MRD property does not provide a class of codes equipped to describe generalized zeta functions. For this reason, we introduce a new class of extremal codes in the rank metric, namely the family of $i$-BMD (\textit{binomial moment determined}) codes, which we will see are a subclass of the $i$-MRD codes. This is another class of codes whose invariants are determined. 
In this paper we introduce the zeta function for generalized rank weights and we show that using the $i$-BMD property, we can extend the connection between the \textit{generalized zeta functions} and the \textit{generalized rank weight enumerators} of a code. 
We also generalize the notion of $i$-MRD for the remaining cases. We study these new objects via an anticode approach. 
Introduced in~\cite{ravagnani2016generalized}, this technique gives a more general analysis of the theory and allows us to easily generalize the results in this paper for codes in other metrics, such as the Hamming metric. The MacWilliams identities~\cite{delsarte1978bilinear,ravagnani2016rank} give an explicit way to compute the binomial moments and the rank distribution of a code from those of its dual. 
We describe these identities for generalized rank weights.

\subsection*{Outline.} \ 
In Section \ref{SectionPrelim} we recall some well-known definitions and results. In Section \ref{SectionBin} we refine and extend, via an anticode approach, the definition of binomial moments introduced in~\cite{blanco2018rank} and we link them with the generalized rank weight distribution. 

In Section \ref{SectioniMRD} we introduce the notion of an $i$-BMD code. We show that, due to their strong rigidity properties, $i$-BMD codes allow us to determine a priori, for all $i\leq j\leq k$, their $j$-th generalized rank weights, their $j$-th generalized binomial moments and their $j$-th generalized rank distributions. We also extend the definition of the class of $i$-MRD codes. 

In Section \ref{SectionZeta} we describe the $i$-th generalized zeta function of a code and 
relate this to the $i$-th generalized rank weight enumerator. We show that the $j$-th generalized rank weight enumerators of the $i$-BMD codes, $i\leq j\leq k$, form a $\mathbb{Q}$-basis for the space of the $j$-th generalized rank weight enumerators. We give an explicit formula to compute the coefficient of a $j$-th generalized rank weight enumerator with respect to this basis in Section \ref{SectionBell}, using the well-known Bell polynomials.

In Section \ref{SectionDuality} we derive the MacWilliams identities for generalized rank weight distributions and we show how to explicitly compute the $i$-th generalized binomial moments ($i$-th generalized normalized binomial moments resp.) of a code, knowing all the $j$-th generalized binomial moment ($j$-th generalized normalized binomial moments resp.), $0\leq j\leq i$, of its dual. We then use these results to compute the $i$-th generalized rank weight distribution and the $i$-th generalized zeta function. Finally, in Section \ref{SectionHamming} we describe $i$-BMD codes for the Hamming codes and we prove that they indeed coincide with the class of $i$-MDS codes.

\section{Preliminaries}
\label{SectionPrelim}
Throughout the paper, $q$ is a prime power and $\mathbb{F}_q$ is the finite field with $q$ elements. We let $n,m$ be positive integers and assume $2\leq n \leq m$ without loss of generality. We denote the row-space and column-space of a matrix $M\in\Fqnm$  by $\rowsp(M)$ and $\colsp(M)$ respectively. 

\begin{definition}
	A (\textbf{matrix} \textbf{rank-metric}) \textbf{code} is a subspace $\mC \le \Fqnm$. The \textbf{maximum rank} of $\mathcal{C}$ is $\maxrank(\mathcal{C}):=\max\{\rank(M):M\in \mathcal{C}\}$. The \textbf{minimum} (\textbf{rank}) \textbf{distance} of a non-zero code $\mathcal{C}$ is $d(\mathcal{C}):=\min\{\rank(M):M\in \mathcal{C},\; M \neq 0\}$.
We also define the minimum distance of $\mathcal{C}=\{0\}$ to be $n+1$ following \cite[Definition~3.1]{gorla2019rank}. 
\end{definition}

From now on, unless otherwise stated, $\mathcal{C}\leq \Fqnm$ is a rank-metric code whose dimension is denoted by $k$
and its minimum distance by $d$.

\begin{definition}
	The \textbf{dual} of  $\mathcal{C}$ is the code
	\begin{equation*}
		\mathcal{C}^\perp:=\{N\in\Fqnm: \tr(MN^t)=0\; \textup{ for all } M\in\mathcal{C}\} \le \Fqnm.
	\end{equation*}
	where $\tr(MN^t)$ is the trace of the square matrix $MN^t$.
\end{definition}
	
We denote by $k^\perp$ and $d^\perp$ the dimension and the minimum distance of $\mathcal{C}^\perp$ respectively.

\begin{definition}
	The \textbf{row-support} and the \textbf{column-support} of $\mathcal{C}$ are 
	\begin{equation*}
		\colsupp(\mathcal{C}):=\sum_{M\in\mathcal{C}}\colsp(M) \quad \mbox{and} \quad  \rowsupp(\mathcal{C}):=\sum_{M\in\mathcal{C}}\rowsp(M).
	\end{equation*} 
Note that $\colsupp(\mathcal{C})$ and $\rowsupp(\mathcal{C})$ are subspaces of $\Fqn$ and $\mathbb{F}_{q}^m$ respectively.
\end{definition}

\begin{definition}
	Let $U$ be a subspace of $\Fqn$. The subcodes of $\mathcal{C}$ \textbf{column-supported} and \textbf{row-supported} 
on $U$ are 
	\begin{equation*}
		 \mathcal{C}(U):=\{M\in\mathcal{C}:\colsp(M)\leq U\} \quad \mbox{and} \quad \mathcal{C}[U]:=\{M\in\mathcal{C}:\rowsp(M)\leq U\},
	\end{equation*}
respectively.
\end{definition}

A rank-metric analogue of the Singleton bound for a rank-metric code $\mathcal{C}$  was proved by Delsarte in \cite[Theorem~5.4]{delsarte1978bilinear} and it can be stated as 
\begin{equation}
\label{RankSingletonBound}
	k\leq m(n-d+1).
\end{equation}

We say that $\mathcal{C}$ is an \textbf{MRD} (\textbf{Maximum Rank Distance}) code if it meets the bound in~\eqref{RankSingletonBound}. One can easily check that the code $\{0\}$ and its dual $\Fqnm$ are MRD. In~\cite[Theorem~5.5]{delsarte1978bilinear}, Delsarte proved that $\mathcal{C}$ is MRD if and only if its dual code $\mathcal{C}^\perp$ is MRD and that such codes exist for every choice of the parameter $n,m$ and $d$. In \cite[Proposition~47]{ravagnani2016rank} another upper bound on the dimension of $\mathcal{C}$ was given as:
\begin{equation}
	\label{RankAnticodeBound}
	k\leq m\cdot\maxrank(\mathcal{C}).
\end{equation}

\begin{definition}
[{\cite[Definition~22]{ravagnani2016generalized}}] We say that
	$\mathcal{C}$ is an \textbf{optimal anticode} if it attains the bound in~\eqref{RankAnticodeBound}.
\end{definition}
	
We denote by $\mathcal{A}$ the set of optimal anticodes in $\Fqnm$ and by $\mathcal{A}_u$ the set of $mu$-dimensional optimal anticodes, for any $0\leq u\leq n$. In particular,
	\begin{equation*}
		\mathcal{A}=\bigsqcup_{u=0}^n\mathcal{A}_u.
	\end{equation*}

It is known that $\mathcal{C}$ is an optimal anticode if and only if $\mathcal{C}^\perp$ is an optimal anticode~\cite{ravagnani2016rank}. Optimal anticodes were characterized by Meshulam~\cite[Theorem~3]{meshulam1985maximal}, who gave a proof for the square case~$n=m$, but from which the case $n<m$ easily follows.

\begin{theorem}
\label{TheoremAnticodeEquiv}
The following hold.
	\begin{itemize}
		\item $\mathcal{A}=\{\Fqnm(U): U\leq\Fqn \}$, if $n<m$;
		\item $\mathcal{A}=\{\Fqnm(U): U\leq\Fqn\}\cup\{\Fqnm[U]: U\leq\Fqn\}$, if $n=m$.
	\end{itemize}
\end{theorem}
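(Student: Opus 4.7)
The plan is to split the statement into the routine inclusion $\supseteq$ and the deeper inclusion $\subseteq$, with the latter handled by reduction to the square case $n=m$, which is Meshulam's theorem quoted as a black box.

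For $\supseteq$: given $U\leq\Fqn$ of dimension $u$, the code $\Fqnm(U)$ consists of matrices whose $m$ columns range freely and independently over $U$, so $\dim_{\Fq}\Fqnm(U)=um$, while $\rank(M)=\dim\colsp(M)\leq u$ with equality achieved by any matrix whose columns span $U$. Hence $\maxrank\Fqnm(U)=u$ and $\Fqnm(U)\in\mathcal{A}_u$. In the square case $n=m$ the transpose-symmetric argument handles $\Fqnm[U]$.

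For $\subseteq$ in the square case $n=m$ I would invoke Meshulam~\cite[Theorem~3]{meshulam1985maximal} directly; this is the deep content of the theorem. For $\subseteq$ when $n<m$, the reduction I have in mind is a zero-padding embedding. Given an optimal anticode $\mathcal{C}\leq\Fqnm$ with $\maxrank(\mathcal{C})=r$ and $\dim_{\Fq}\mathcal{C}=rm$, let $\tilde{\mathcal{C}}\leq\mathbb{F}_q^{m\times m}$ be the image of $\mathcal{C}$ under appending $m-n$ zero rows at the bottom. This map is $\Fq$-linear and rank-preserving, so $\tilde{\mathcal{C}}$ has dimension $rm=m\cdot\maxrank(\tilde{\mathcal{C}})$ and is itself an optimal anticode in $\mathbb{F}_q^{m\times m}$. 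Meshulam then gives $\tilde{\mathcal{C}}=\mathbb{F}_q^{m\times m}(U)$ or $\tilde{\mathcal{C}}=\mathbb{F}_q^{m\times m}[V]$ for some subspace of $\mathbb{F}_q^m$, and the crucial observation is that every matrix in $\tilde{\mathcal{C}}$ has its bottom $m-n$ rows vanishing. In the row-supported alternative this forces $V=\{0\}$, since any nonzero $v\in V$ could be realised as the bottom row of a matrix in $\mathbb{F}_q^{m\times m}[V]$, contradicting the padding constraint; hence $\tilde{\mathcal{C}}=\{0\}$ and $\mathcal{C}=\Fqnm(\{0\})$. In the column-supported alternative the same vanishing of the bottom rows forces $U\leq\Fqn\times\{0\}^{m-n}$, and identifying $U$ with its image $U'\leq\Fqn$ yields $\mathcal{C}=\Fqnm(U')$.

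The main obstacle is absorbed into Meshulam's argument for the square case, which I would invoke rather than reprove; the rest is a bookkeeping reduction verifying that rank and dimension are preserved under padding and that the supporting subspace of the padded code is confined to the first $n$ coordinates. The one place that warrants a moment's care is the dimension count $\dim_{\Fq}\Fqnm(U)=um$, which follows from the observation that an $n\times m$ matrix with columns in $U$ is parametrised freely by its $m$ columns, each ranging independently over a $u$-dimensional space.
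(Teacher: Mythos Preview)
Your proposal is correct and matches the paper's approach: the paper does not supply its own proof but simply attributes the square case $n=m$ to Meshulam~\cite[Theorem~3]{meshulam1985maximal} and remarks that the case $n<m$ ``easily follows.'' Your zero-padding reduction is exactly the natural way to fill in that remark, and the case analysis (ruling out the row-supported alternative by producing a matrix with a nonzero bottom row, and confining $U$ to $\Fqn\times\{0\}^{m-n}$ in the column-supported alternative) is sound.
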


Note moreover that for all $n$ and $m$ and all $0 \le u \le n$ we have $\mathcal{A}_u \neq \emptyset$. 

\begin{lemma}
\label{LemmaAntIntersection}
	Suppose $n=m$ and let $1\leq u\leq n-1$. We have
	\begin{equation*}
		\{\Fqnn(U): U\leq\Fqn, \; \dimq{U}=u\}\cap\{\Fqnn[U]: U\leq\Fqn, \; \dimq{U}=u\}=\emptyset.
	\end{equation*}
\end{lemma}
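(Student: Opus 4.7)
The plan is to argue by contradiction, using rank-one outer products as test matrices. Suppose for contradiction that there exist subspaces $U, V \le \Fqn$ with $\dimq{U} = \dimq{V} = u$, where $1 \le u \le n-1$, such that $\Fqnn(U) = \Fqnn[V]$. The idea is that a column-supported anticode $\Fqnn(U)$ imposes no constraint on row spaces, so I can exhibit matrices inside it whose row spaces collectively exhaust $\Fqn$; this will force $V = \Fqn$ and contradict $u \le n-1$.

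Concretely, since $u \ge 1$ I can pick a nonzero column vector $x \in U$. For an arbitrary column vector $y \in \Fqn$, form the outer product $M := x y^t \in \Fqnn$. Its column space is $\colsp(M) = \langle x \rangle \le U$, so $M \in \Fqnn(U)$. By the assumed equality, $M \in \Fqnn[V]$, which means $\rowsp(M) \le V$. But $\rowsp(x y^t) = \langle y \rangle$, so $y \in V$ for every $y \in \Fqn$, i.e.\ $V = \Fqn$. This contradicts $\dimq{V} = u \le n-1$, finishing the argument.

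There is no real obstacle here; once the rank-one test matrices are chosen, the argument is immediate. The role of each hypothesis is transparent: $u \ge 1$ is used only to guarantee a nonzero $x \in U$, while $u \le n-1$ is used only in the final contradiction. This also clarifies why the two boundary values are excluded from the statement, since for $u=0$ both families consist of $\{0\}$ and for $u=n$ both consist of $\Fqnn$, so in those cases the intersection is of course nonempty.
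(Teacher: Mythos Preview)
Your proof is correct. The rank-one matrices $xy^t$ serve as ideal test elements: fixing a nonzero $x\in U$ forces $\colsp(xy^t)\le U$ for every $y$, while letting $y$ vary sweeps out all one-dimensional row spaces, so the inclusion $\rowsp(xy^t)\le V$ forces $V=\Fqn$.

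The paper argues differently. It chooses invertible matrices $A,B\in\gl_n(\Fq)$ that normalize $U$ and $V$ to the standard coordinate subspace $\langle e_1,\ldots,e_u\rangle$, observes that left multiplication preserves row-supported anticodes and right multiplication preserves column-supported anticodes, and thereby reduces the hypothetical equality $\Fqnn(U)=\Fqnn[V]$ to the manifestly false equality $\Fqnn(\langle e_1,\ldots,e_u\rangle)=\Fqnn[\langle e_1,\ldots,e_u\rangle]$. Your route is more elementary and avoids the coordinate change entirely; the paper's route is more structural, highlighting the symmetry under the natural $\gl_n\times\gl_n$-action on $\Fqnn$. Both hypotheses $1\le u$ and $u\le n-1$ are used in exactly the places you identify.
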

\begin{proof}
	Suppose, toward a contradiction, that $U,V\leq\Fqnn$ satisfy $\Fqnn(U)=\Fqnn[V]$ and $\dimq{U}=\dimq{V}=u$. It is not hard to see that there exist matrices $A,B\in\gl_n(\Fq)$ such that
		\begin{equation*}
			A\cdot\Fqnn(U)=\Fqnn\left(\left<e_1,\ldots,e_u\right>_q\right) \quad \mbox{and} \quad \Fqnn[V]\cdot B=\Fqnn\left[\left<e_1,\ldots,e_u\right>_q\right],
		\end{equation*}
where $e_i$ denotes the $i$-th element of the standard basis of $\mathbb{F}_q^n$.
		Indeed, we can take as $A$ and $B$ the matrix representations of any $\Fq$-isomorphisms $f$, $g$ of $\Fq^n$
such that $f(U)=\left<e_1,\ldots,e_u\right>_q$ and $g(V)=\left<e_1,\ldots,e_u\right>_q$. We then have
		\begin{equation*}
			\Fqnn(U)\cdot B=\Fqnn(U) \quad \mbox{and} \quad A\cdot \Fqnn[V] =\Fqnn[V].
		\end{equation*}
		Therefore,
		\begin{equation*}
			\Fqnm\left(\left<e_1,\ldots,e_u\right>_q\right)=A\cdot\Fqnm(U)\cdot B=A\cdot\Fqnm[V]\cdot B=\Fqnm\left[\left<e_1,\ldots,e_u\right>_q\right],
		\end{equation*}
		which is impossible as $1 \le u \le n-1$.
\end{proof}

We recall the definition of generalized rank weights introduced in \cite{ravagnani2016generalized}. See~\cite[Section~5]{gorla2019rank}
for an overview of the alternative definitions and characterizations that have been proposed.

\begin{definition}
[{\cite[Definition~23]{ravagnani2016generalized}}]
	For $0\leq i\leq k$, the \textbf{$i$-th generalized rank weight} of~$\mathcal{C}$ is 
	\begin{equation*}
		d_i(\mathcal{C}):=\frac{1}{m}\min\{\dimq{A}:A\in\mathcal{A},\; \dimq{A\cap\mathcal{C}}\geq i\}.
	\end{equation*}
\end{definition}

In~\cite{ravagnani2016generalized}, the $d_i(\mathcal{C})$s were called generalized Delsarte weights. When the code $\mathcal{C}$ is clear from context, we simply write $d_i$ for $d_i(\mathcal{C})$ and $d_i^\perp$ for $d_i(\mathcal{C}^\perp)$.

\begin{lemma}[{\cite[Theorem~30]{ravagnani2016generalized}}]
\label{LemmaGenRankWeights}
The following hold:
	\begin{itemize}
		\item $d_1=d$ if $\mathcal{C} \neq \{0\}$;
		\item $d_k\leq n$;
		\item $d_i\leq d_{i+1}$ for any $1\leq i\leq k-1$;
		\item $d_i< d_{i+m}$ for any $1\leq i\leq k-m$;
		\item $d_i\leq n-\left\lfloor\frac{k-i}{m}\right\rfloor$ for any $1\leq i\leq k$;
		\item $d_i\geq \left\lceil\frac{i}{m}\right\rceil$ for any $1\leq i\leq k$. 
	\end{itemize}
\end{lemma}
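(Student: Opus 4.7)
\medskip

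\noindent\textbf{Proof proposal.} The strategy is to treat each of the six bullets by exploiting Theorem~\ref{TheoremAnticodeEquiv}, which tells us that every optimal anticode is of the form $\Fqnm(U)$, or, when $n=m$, of the form $\Fqnm[U]$, together with two elementary facts: if $A\le \Fqnm$ is any subspace then $\dimq{A\cap\mathcal{C}}\ge \dimq{A}+k-nm$ (by Grassmann's formula), and $\dimq{A\cap\mathcal{C}}\le\dimq{A}$. I expect the first three bullets to be almost immediate from the definition of $d_i$, and the hard part will be the strict-gap statement in the fourth bullet.

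\medskip

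\noindent\textbf{The easy parts.} For bullet~(1), if $M\in\mathcal{C}$ has $\rank(M)=d$, then $U=\colsp(M)$ has dimension $d$ and $A=\Fqnm(U)\in\mathcal{A}_d$ contains $M$, giving $d_1\le d$; conversely, any optimal anticode $A\in\mathcal{A}_u$ containing a nonzero codeword $M$ must satisfy $\rank(M)\le u$ (argued directly if $A=\Fqnm(U)$, or by transposing if $A=\Fqnm[U]$ in the case $n=m$), so $u\ge d$. Bullet~(2) follows by taking $A=\Fqnm\in\mathcal{A}_n$, for which $\dimq{A\cap\mathcal{C}}=k$. Bullet~(3) is a triviality: any $A$ witnessing $d_{i+1}$ also satisfies $\dimq{A\cap\mathcal{C}}\ge i+1>i$. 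Bullet~(6) uses the inequality $\dimq{A\cap\mathcal{C}}\le\dimq{A}=mu$: if this is at least $i$ then $u\ge i/m$, hence $u\ge\lceil i/m\rceil$.

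\medskip

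\noindent\textbf{The upper bound in bullet~(5).} Set $u:=n-\lfloor(k-i)/m\rfloor$ and pick any $U\le\Fq^n$ with $\dimq{U}=u$. Then $A=\Fqnm(U)\in\mathcal{A}_u$, and by the Grassmann inequality
\begin{equation*}
\dimq{A\cap\mathcal{C}}\ge\dimq{A}+k-nm=k-m(n-u)=k-m\left\lfloor\frac{k-i}{m}\right\rfloor\ge i,
\end{equation*}
so $d_i\le u$, as required.

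\medskip

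\noindent\textbf{The strict-gap bullet~(4).} This is the key step. Let $A$ be an optimal anticode with $\dimq{A}=m\,d_{i+m}$ and $\dimq{A\cap\mathcal{C}}\ge i+m$. By Theorem~\ref{TheoremAnticodeEquiv}, $A=\Fqnm(U)$ (or $A=\Fqnm[U]$, treated analogously by transposition when $n=m$) for some $U\le\Fq^n$ with $\dimq{U}=d_{i+m}\ge 1$. Choose any hyperplane $U'<U$ of codimension~$1$ and set $A'=\Fqnm(U')\in\mathcal{A}_{d_{i+m}-1}$. Since $A'\subseteq A$ with $\dimq{A/A'}=m$, we have
\begin{equation*}
\dimq{A'\cap\mathcal{C}}\ge\dimq{A\cap\mathcal{C}}-m\ge i,
\end{equation*}
which yields $d_i\le d_{i+m}-1<d_{i+m}$. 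The only subtle point will be the case $n=m$, $A=\Fqnm[U]$: here one either runs the argument after transposing $\mathcal{C}$ (noting that the transpose of an optimal anticode of the second kind is of the first kind), or reproduces it verbatim by cutting~$U$ by a row-hyperplane.
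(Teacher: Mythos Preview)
The paper does not supply its own proof of this lemma: it is quoted verbatim from \cite[Theorem~30]{ravagnani2016generalized} and used as a black box, so there is no in-paper argument to compare against. Your proof is correct and self-contained; each bullet is handled cleanly, and in particular the strict-gap argument for $d_i<d_{i+m}$ via passing to a hyperplane $U'<U$ inside the witnessing anticode is exactly the right idea (the inequality $\dimq{A'\cap\mathcal{C}}\ge\dimq{A\cap\mathcal{C}}-m$ is just Grassmann applied inside $A$, since $A'\cap\mathcal{C}=A'\cap(A\cap\mathcal{C})$ and $\dimq{A/A'}=m$).
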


Another family of cardinality-extremal codes was introduced in~\cite{de2018weight}. The authors show that such codes exist for every choice of $n,m$ and $d$.

\begin{definition}
[{\cite[Definition~10]{de2018weight}}]
We say that	$\mathcal{C}$ is \textbf{QMRD} (\textbf{quasi-MRD}) if $m\nmid k$ and 
	\begin{equation*}
		d=n-\left\lfloor\frac{k-1}{m}\right\rfloor=n-\left\lceil\frac{k}{m}\right\rceil+1.
	\end{equation*}
	We say that $\mathcal{C}$ is \textbf{DQMRD} (\textbf{dually QMRD}) if both $\mathcal{C}$ and $\mathcal{C}^\perp$ are QMRD.
\end{definition}

Finally, we recall the definition and some well-know properties of the $q$-binomial coefficient; a standard reference is~\cite{andrews1998theory}. 

\begin{definition}
	Let $a,b$ be integers. The $q$-binomial coefficient of $a$ and $b$ is
	\begin{equation*}
		\qbin{a}{b}=
		\begin{cases}
			\displaystyle0 & \textup{ if }\; b<0 \;\textup{ or }\; 0\leq a\leq b,\\
			\displaystyle1 & \textup{ if }\; b=0 \;\textup{ and }\; a\geq 0,\\
			\displaystyle\prod_{i=1}^b\frac{q^{a-i+1}-1}{q^i-1} & \textup{ if }\; b> 0 \;\textup{ and }\; a\geq b,\\
			\displaystyle(-1)^bq^{ab-\binom{b}{2}}\qbin{-a+b-1}{b} & \textup{ if }\; b> 0 \;\textup{ and }\; a< 0.\\
		\end{cases}
	\end{equation*}
\end{definition}

\begin{lemma}
\label{LemmaTools}
	Let $a,b,c$ be integers. The following hold. 
	\begin{enumerate}
		\item $\qbin{a}{b}\qbin{b}{c}=\qbin{a}{c}\qbin{a-c}{a-b}$,\quad for $a,b,c\geq 0$;
		\item $\displaystyle\sum_{j=0}^c\qbin{c}{j}(-1)^iq^{\binom{j}{2}}a^{c-j}b^j=
		\begin{cases}\displaystyle
			1 & c=0,\\
			\displaystyle\prod_{j=0}^{c-1}(a-q^jb) & c\geq 1;
		\end{cases}$
		\item $\displaystyle\qbin{a+b}{c}=\sum_{j=0}^cq^{j(b-c+j)}\qbin{a}{j}\qbin{b}{c-j}=\sum_{j=0}^cq^{(c-j)(a-j)}\qbin{a}{j}\qbin{b}{c-j}$.
	\end{enumerate}
\end{lemma}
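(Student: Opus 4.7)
The plan is to treat each identity separately; all three are standard $q$-analogues, proved for example in~\cite{andrews1998theory}, and each admits a short combinatorial or inductive argument.

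For item (1), I would give a double-counting argument on flags $W \le V \le \mathbb{F}_q^a$ with $\dim W = c$ and $\dim V = b$. Counting the flags by first choosing $V$ and then $W \le V$ gives $\qbin{a}{b}\qbin{b}{c}$; counting by first choosing $W$ (in $\qbin{a}{c}$ ways) and then the $(b-c)$-dimensional subspace $V/W$ of $\mathbb{F}_q^a/W \cong \mathbb{F}_q^{a-c}$ gives $\qbin{a}{c}\qbin{a-c}{b-c}$. The standard symmetry $\qbin{a-c}{b-c}=\qbin{a-c}{a-b}$ then closes the argument; alternatively one can verify the identity directly from the product formula.

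For item (2) (the $q$-binomial theorem, after correcting the evident typo $(-1)^i \to (-1)^j$), I would induct on $c$. The base case $c=1$ reads $a-b$ and matches both sides. For the inductive step, apply the $q$-Pascal relation $\qbin{c+1}{j}=\qbin{c}{j}+q^{c+1-j}\qbin{c}{j-1}$ inside the sum on the left, split into two pieces, reindex the second by $j \mapsto j+1$, and use $\binom{j+1}{2}=\binom{j}{2}+j$ so that the $q$-power telescopes to $q^{c}\cdot q^{\binom{j}{2}}$. After collecting terms the expression factors as $(a-q^c b)\,S_c(a,b)$, where $S_c(a,b)$ denotes the left-hand side for the parameter $c$, to which the induction hypothesis applies. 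The main bookkeeping issue is tracking the $q$-powers in the reindexing; this is the trickiest part of the argument.

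For item (3) (the $q$-Vandermonde identity), I would prove the first form combinatorially by counting $c$-dimensional subspaces $W$ of $V_1 \oplus V_2$, with $V_1 = \mathbb{F}_q^a$ and $V_2 = \mathbb{F}_q^b$, stratified by $j := \dim \pi_1(W)$, where $\pi_1 : V_1 \oplus V_2 \to V_1$ is the projection onto $V_1$. Then $W \cap V_2 = \ker(\pi_1|_W)$ has dimension $c-j$. Fixing $\pi_1(W) = U \le V_1$ (contributing $\qbin{a}{j}$) and $W \cap V_2 = K \le V_2$ (contributing $\qbin{b}{c-j}$), the subspace $W$ is determined by a splitting of the projection $W \twoheadrightarrow U$, which is equivalently an $\mathbb{F}_q$-linear map $U \to V_2/K$; there are $q^{j(b-(c-j))} = q^{j(b-c+j)}$ such maps. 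Summing over $j$ yields the first equality, and the second form follows from the first by swapping $a \leftrightarrow b$ and reindexing $j \mapsto c-j$. I expect the exponent bookkeeping in this counting step to be the main point to verify carefully.
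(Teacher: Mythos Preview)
Your proposal is correct in all three parts; the arguments you sketch are standard and the bookkeeping you flag (the $q$-power telescoping in~(2) and the exponent count in~(3)) goes through exactly as you describe. Note that the paper itself does not prove this lemma at all: it is stated in the preliminaries as a recollection of well-known $q$-binomial identities with a reference to Andrews~\cite{andrews1998theory}, so there is nothing to compare against beyond observing that your proofs supply what the paper simply quotes.
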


\section{Generalized Binomial Moments and Rank Distributions}
\label{SectionBin}
In this section, we define the generalized rank weight distributions and generalized binomial moments of a rank-metric code via the anticode approach. Our results extend those of~\cite{blanco2018rank}. Throughout the paper, $i$ is an integer in $\{0,1,\ldots,k\}$, unless otherwise stated.

\begin{lemma}[{\cite[Lemma~28]{ravagnani2016rank}}]
\label{LemmaCcapA}
	Let $A\in\mathcal{A}_u$. We have
	\begin{equation*}
		|\mathcal{C}\cap A|= |\mathcal{C}^\perp\cap A^\perp| \, q^{k-m(n-u)}.
	\end{equation*}
\end{lemma}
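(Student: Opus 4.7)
The plan is to derive the identity by a straightforward dimension count, using the standard duality properties of the trace inner product on $\Fqnm$. The key observation is that the statement does not actually require $A$ to be an optimal anticode, only that $\dim_{\Fq}(A) = mu$; the anticode hypothesis is only used implicitly to fix this dimension. Accordingly, the goal reduces to showing that
\begin{equation*}
    \dim_{\Fq}(\mathcal{C}\cap A) - \dim_{\Fq}(\mathcal{C}^\perp\cap A^\perp) = k - m(n-u).
\end{equation*}

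First I would apply the Grassmann identity inside $\Fqnm$ to $\mathcal{C}$ and $A$, giving
\begin{equation*}
    \dim_{\Fq}(\mathcal{C}\cap A) = k + mu - \dim_{\Fq}(\mathcal{C}+A).
\end{equation*}
Next I would invoke the standard duality relation $(\mathcal{C}+A)^\perp = \mathcal{C}^\perp\cap A^\perp$, valid because the trace form $\langle M,N\rangle = \tr(MN^t)$ is a non-degenerate symmetric $\Fq$-bilinear form on $\Fqnm$. This gives $\dim_{\Fq}(\mathcal{C}+A) = nm - \dim_{\Fq}(\mathcal{C}^\perp\cap A^\perp)$.

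Substituting back, one obtains exactly
\begin{equation*}
    \dim_{\Fq}(\mathcal{C}\cap A) = k + mu - nm + \dim_{\Fq}(\mathcal{C}^\perp\cap A^\perp) = \dim_{\Fq}(\mathcal{C}^\perp\cap A^\perp) + k - m(n-u),
\end{equation*}
and exponentiating both sides with base $q$ yields the claim. There is no real obstacle here: the argument is purely linear-algebraic and the only facts used are non-degeneracy of the trace form and Grassmann's formula. In particular, the identity holds for every subspace $A\leq\Fqnm$ of dimension $mu$, not just for optimal anticodes, so the statement in the form given is really a special case of a general duality identity.
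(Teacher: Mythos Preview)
Your argument is correct. Note, however, that the paper does not actually supply a proof of this lemma: it is simply quoted from \cite[Lemma~28]{ravagnani2016rank} without argument, so there is no in-paper proof to compare against. The dimension-counting approach you give (Grassmann plus $(\mathcal{C}+A)^\perp=\mathcal{C}^\perp\cap A^\perp$ from non-degeneracy of the trace form) is precisely the standard proof of this identity, and your observation that it holds for any subspace $A$ of dimension $mu$, not only optimal anticodes, is also accurate.
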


\begin{lemma}
\label{LemmaDimCcapA}
Let $A\in \mathcal{A}_u$. We have
	\begin{equation*}
		\dimq{\mathcal{C}\cap A}=
		\begin{cases}
			0 & \textup{ if }\; u<d,\\
			k-m(n-u) & \textup{ if }\; u>n-d^\perp.
		\end{cases}
	\end{equation*}
\end{lemma}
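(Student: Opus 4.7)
The plan is to handle the two cases separately, using the structural description of optimal anticodes (Theorem \ref{TheoremAnticodeEquiv}) for the first and a duality argument via Lemma \ref{LemmaCcapA} for the second. I would first observe that $A \in \mathcal{A}_u$ means $\dim_{\mathbb{F}_q}(A) = mu$ with equality in \eqref{RankAnticodeBound}, which forces $\maxrank(A) = u$.

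For the first case $u < d$, I would argue directly. Every matrix $M \in A$ satisfies $\rank(M) \leq \maxrank(A) = u < d$. If $M \in \mathcal{C} \cap A$ were nonzero then we would have $\rank(M) \geq d$, a contradiction. Hence $\mathcal{C} \cap A = \{0\}$ and $\dim_{\mathbb{F}_q}(\mathcal{C} \cap A) = 0$. (The edge cases $\mathcal{C} = \{0\}$ and $u = 0$ are also covered by this reasoning, using the convention $d(\{0\}) = n+1$.)

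For the second case $u > n - d^\perp$, the idea is to reduce to the first case applied to the dual pair $(\mathcal{C}^\perp, A^\perp)$. The key observation is that $A^\perp$ is again an optimal anticode (as recalled after \eqref{RankAnticodeBound}) and satisfies
\begin{equation*}
\dim_{\mathbb{F}_q}(A^\perp) = nm - mu = m(n-u),
\end{equation*}
so $A^\perp \in \mathcal{A}_{n-u}$. Since $n - u < d^\perp$, the first case applied to $\mathcal{C}^\perp$ and $A^\perp$ yields $\mathcal{C}^\perp \cap A^\perp = \{0\}$, i.e.\ $|\mathcal{C}^\perp \cap A^\perp| = 1$. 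Substituting into Lemma \ref{LemmaCcapA} gives
\begin{equation*}
|\mathcal{C} \cap A| = q^{k - m(n-u)},
\end{equation*}
from which $\dim_{\mathbb{F}_q}(\mathcal{C} \cap A) = k - m(n-u)$ follows.

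There is no serious obstacle here: the only subtlety is noticing that the dual of an optimal anticode in $\mathcal{A}_u$ lies in $\mathcal{A}_{n-u}$, which reduces the second statement to the first by swapping roles with the dual code. Everything else is just unpacking definitions and invoking the two preliminary lemmas.
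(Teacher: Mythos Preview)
Your proof is correct and follows essentially the same approach as the paper: the first case is handled directly from $\maxrank(A)=u<d$, and the second case is reduced to the first by passing to the dual pair $(\mathcal{C}^\perp,A^\perp)$ with $A^\perp\in\mathcal{A}_{n-u}$ and then invoking Lemma~\ref{LemmaCcapA}. The only cosmetic difference is that the paper dispatches the trivial code separately and cites \cite[Theorem~54]{ravagnani2016rank} for $A^\perp\in\mathcal{A}_{n-u}$, whereas you derive it from the dimension count.
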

\begin{proof}
The result is straightforward if $\mathcal{C}=\{0\}$. We henceforth assume $k \ge 1$, which implies $1 \le d \le n$.
If $u<d$, then clearly $\mathcal{C} \cap A =\{0\}$. Similarly, if $n-u < d^\perp$, then $\mathcal{C}^\perp \cap A^\perp=\{0\}$, since
$A^\perp \in \mathcal{A}_{n-u}$ by~\cite[Theorem~54]{ravagnani2016rank}. Therefore by Lemma \ref{LemmaCcapA} we conclude
$|\mathcal{C}\cap A|= q^{k-m(n-u)}$, from which the result follows.
\end{proof}

\begin{definition}
\label{DefinitionBu}
	For $0\leq u\leq n$, the \textbf{$(u,i)$-th generalized binomial moment} $B_u^{(i)}(\mathcal{C})$ is 	
	\begin{equation*}
		 B_u^{(i)}(\mathcal{C}):=\begin{cases}
			\displaystyle\sum_{\tiny\begin{matrix}U\leq\Fqn\\\dimq{U}=u\end{matrix}} B_{U}^{(i)}(\mathcal{C},c) &\; \textup{ if }\; n<m,\\\\
			\displaystyle\frac{1}{2}\sum_{\tiny\begin{matrix}U\leq\Fqn\\\dimq{U}=u\end{matrix}} \left(B_{U}^{(i)}(\mathcal{C},c)+B_{U}^{(i)}(\mathcal{C},r)\right) &\; \textup{ if }\; n=m,
		\end{cases}
	\end{equation*}
	where, for every $U\leq\Fqn$, 
	\begin{equation*}
		 B_U^{(i)}(\mathcal{C},c):=\qbin{\dimq{\mathcal{C}(U)}}{i} \qquad\textup{ and }\qquad  B_U^{(i)}(\mathcal{C},r):=\qbin{\dimq{\mathcal{C}[U]}}{i}.	
	\end{equation*}
\end{definition}

Notice that this definition is required in order for the generalized binomial moment to be an invariant under duality.

\begin{lemma}
\label{LemmaBu}
	The following holds for any $0\leq u\leq n$.
	\begin{equation*}
		B_u^{(i)}(\mathcal{C})=
		\begin{cases}
			0 & \textup{ if }\; u<d_i,\\
			\qbin{n}{u}
		\qbin{k-m(n-u)}{i} & \textup{ if }\; u>n-d^\perp.
		\end{cases}
	\end{equation*}
\end{lemma}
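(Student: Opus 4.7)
The plan is to read both cases off the definition by pairing each anticode in $\mathcal{A}_u$ with either a column-type subspace $\Fqnm(U)$ or (in the square case) a row-type subspace $\Fqnm[U]$, then invoking the two preliminary lemmas.

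First I would use Theorem \ref{TheoremAnticodeEquiv} together with the observation that $\mathcal{C}(U)=\mathcal{C}\cap\Fqnm(U)$ and $\mathcal{C}[U]=\mathcal{C}\cap\Fqnm[U]$, so every dimension appearing inside a $q$-binomial in Definition \ref{DefinitionBu} has the form $\dim_{\Fq}(\mathcal{C}\cap A)$ for some $A\in\mathcal{A}_u$. This reduces the statement to controlling $\dim_{\Fq}(\mathcal{C}\cap A)$ over all $A\in\mathcal{A}_u$.

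For the case $u<d_i$, by the very definition of the $i$-th generalized rank weight, any $A\in\mathcal{A}$ with $\dim_{\Fq}(\mathcal{C}\cap A)\ge i$ must satisfy $\dim_{\Fq}(A)\ge md_i$, hence $\dim_{\Fq}(A)>mu$. Therefore every $A\in\mathcal{A}_u$ has $\dim_{\Fq}(\mathcal{C}\cap A)<i$, which makes each $q$-binomial $\qbin{\dim_{\Fq}(\mathcal{C}\cap A)}{i}$ vanish (since the top entry is a nonnegative integer strictly less than $i\ge 1$; if $i=0$ then $d_i=0$ and the case is vacuous). Summing zeros yields $B_u^{(i)}(\mathcal{C})=0$.

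For the case $u>n-d^\perp$, Lemma \ref{LemmaDimCcapA} gives $\dim_{\Fq}(\mathcal{C}\cap A)=k-m(n-u)$ for every $A\in\mathcal{A}_u$. In the non-square case $n<m$, this means $B_U^{(i)}(\mathcal{C},c)=\qbin{k-m(n-u)}{i}$ for all $\qbin{n}{u}$ subspaces $U$ of dimension $u$, and the formula is immediate. In the square case $n=m$, the same value also equals $B_U^{(i)}(\mathcal{C},r)=\qbin{k-m(n-u)}{i}$ because $\Fqnn[U]\in\mathcal{A}_u$ as well; the factor of $\tfrac12$ in the definition then cancels the doubled sum, yielding the same closed form $\qbin{n}{u}\qbin{k-m(n-u)}{i}$.

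I do not expect a genuine obstacle here; the mild bookkeeping issue is the square case, where one must verify that both the column-type and row-type anticodes contribute the same dimension to $\mathcal{C}\cap A$ so that the normalizing factor $1/2$ is consistent with the statement. This is handled directly by Lemma \ref{LemmaDimCcapA}, which is phrased uniformly for all $A\in\mathcal{A}_u$.
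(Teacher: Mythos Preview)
Your proposal is correct and follows essentially the same approach as the paper: both handle the case $u>n-d^\perp$ by invoking Lemma~\ref{LemmaDimCcapA} to make every summand constant, and both handle the case $u<d_i$ by arguing that no anticode of dimension $mu$ can meet $\mathcal{C}$ in dimension at least $i$. The only cosmetic difference is that for $u<d_i$ the paper phrases the contradiction via $\maxrank(\mathcal{D})$ of an $i$-dimensional subcode, whereas you invoke the anticode definition of $d_i$ directly; your formulation is if anything slightly cleaner.
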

\begin{proof}
Suppose $i=0$. Then $d_0=0$ and we only need to treat the case $u>n-d^\perp$. In such a case we have
\begin{equation*}
	B_u^{(0)}(\mathcal{C},c)=\qbin{\dimq{\mathcal{C}(U)}}{0}=1 \quad \mbox{and} \quad B_u^{(0)}(\mathcal{C},r)=\qbin{\dimq{\mathcal{C}[U]}}{0}=1,
\end{equation*}
from which
\begin{equation*}
	B_u^{(0)}=\sum_{\tiny\begin{matrix}U\leq\Fqn\\\dimq{U}=u\end{matrix}}1=\qbin{n}{u}=\qbin{n}{u}\qbin{k-m(n-u)}{0}.
\end{equation*}

Now suppose $i\ne 0$. We continue the proof assuming $n=m$. The case $n<m$ is analogous and in fact simpler. 

Suppose~$u>n-d^\perp$. Applying Lemma~\ref{LemmaDimCcapA} to Definition~\ref{DefinitionBu} we get
\begin{equation*}
\begin{aligned}
	B_u^{(i)}(\mathcal{C})&=\frac{1}{2}\sum_{\tiny\begin{matrix}U\leq\Fqn\\\dimq{U}=u\end{matrix}}\left(\qbin{\dimq{\mathcal{C}(U)}}{i} +\qbin{\dimq{\mathcal{C}[U]}}{i} \right)\\
	&=\frac{1}{2}\sum_{\tiny\begin{matrix}U\leq\Fqn\\\dimq{U}=u\end{matrix}}\left(\qbin{k-m(n-u)}{i} +\qbin{k-m(n-u)}{i} \right)\\
	&=\qbin{k-m(n-u)}{i}\sum_{\tiny\begin{matrix}U\leq\Fqn\\\dimq{U}=u\end{matrix}}1\\
	&=\qbin{n}{u}\qbin{k-m(n-u)}{i}.
\end{aligned}	
\end{equation*}

Suppose now $u<d_i$ and let $U\leq\Fqn$ be of dimension $u$. Assume, towards a contradiction, that $B_u^{(i)}(\mathcal{C})\ne0$, then there exists a subspace $\mathcal{D}$ of $\mathcal{C}(U)$ (or $\mathcal{C}[U]$ respectively) of dimension~$i$. We have  $d_i\leq\maxrank(\mathcal{D})\leq\dimq{U}=u<d_i$, which is a contradiction.
\end{proof}

\begin{remark}
\label{RemarkBuiAnt}
The connection between Definition~\ref{DefinitionBu} and optimal anticodes is the following.
	For $n<m$ and $0\leq u\leq n$, Lemma \ref{LemmaAntIntersection} implies
	\begin{equation*}
		B_u^{(i)}(\mathcal{C})=\displaystyle\sum_{A\in\mathcal{A}_u}\qbin{\dimq{\mathcal{C}\cap A}}{i}.
	\end{equation*}
	On the other hand, if $n=m$ and $1\leq i\leq k$ one can check that for $0\leq u\leq n-1$ we have
	\begin{equation*}
		B_u^{(i)}(\mathcal{C})=\frac{1}{2}\sum_{A\in\mathcal{A}_u}\qbin{\dimq{\mathcal{C}\cap A}}{i}.
	\end{equation*}
	Finally, in the case $u=n=m$ and $1\leq i\leq k$ we have 
	\begin{equation*}
		B_u^{(i)}(\mathcal{C})=\frac{1}{2}\left(B_{\Fqn}^{(i)}(\mathcal{C},c)+B_{\Fqn}^{(i)}(\mathcal{C},r)\right)=\qbin{\dimq{\mathcal{C}}}{i},
	\end{equation*}
	while in the case $n=m$ and $i=0$ we have
	\begin{equation*}
		B_u^{(i)}(\mathcal{C})=\frac{1}{2}\sum_{\tiny\begin{matrix}U\leq\Fqn\\\dimq{U}=u\end{matrix}} \left(\qbin{\dimq{\mathcal{C}(U)}}{0}+\qbin{\dimq{\mathcal{C}[U]}}{0}\right)=\qbin{n}{u}.
	\end{equation*}
\end{remark}

We now introduce a new invariant of $\mathcal{C}$ which extends the notion of rank distribution, showing that this invariant encodes the same information as the generalized binomial moments.

\begin{definition}
	The \textbf{$i$-th generalized rank weight distribution} of $\mathcal{C}$ is the integer vector whose $w$-th component, $0\leq w\leq n$, is defined by
	\begin{equation*}
		 A_w^{(i)}(\mathcal{C}):=\begin{cases}
			\displaystyle\sum_{\tiny\begin{matrix}W\leq\Fqn\\\dimq{W}=w\end{matrix}} A_{W}^{(i)}(\mathcal{C},c) &\; \textup{ if }\; n<m,\\\\
			\displaystyle\frac{1}{2}\sum_{\tiny\begin{matrix}W\leq\Fqn\\\dimq{W}=w\end{matrix}} \left(A_{W}^{(i)}(\mathcal{C},c)+ A_{W}^{(i)}(\mathcal{C},r)\right) &\; \textup{ if }\; n=m,
		\end{cases}
	\end{equation*}
	where, for every $W\leq\Fqn$,
	\begin{equation*}
	\begin{aligned}
		A_{W}^{(i)}(\mathcal{C},c)&:=
			|\{\mathcal{D}\leq\mathcal{C}:\dimq{\mathcal{D}}=i, \, \colsupp(\mathcal{D})=W\}|,\\
		A_{W}^{(i)}(\mathcal{C},r)&:=|\{\mathcal{D}\leq\mathcal{C}:\dimq{\mathcal{D}}=i, \, \rowsupp(\mathcal{D})=W\}|	.
	\end{aligned}
	\end{equation*}
\end{definition}

\begin{remark}
\label{RemarkAwne0}
	We have $A_w^{(i)}(\mathcal{C})=0$ for $0\leq w< d_i$. Indeed, suppose towards a contradiction that $A_w^{(i)}(\mathcal{C})\ne 0$ for a $0\leq w< d_i$. Then there exists $\mathcal{D}\leq\mathcal{C}$ of dimension $i$ such that
	\begin{equation*}
		\begin{cases}
			\dimq{\colsupp(\mathcal{D})}=w &\quad\textup{ if }\; n<m,\\
			\dimq{\colsupp(\mathcal{D})}=w \;\textup{ or }\; \dimq{\rowsupp(\mathcal{D})}=w &\quad\textup{ if }\; n=m.
		\end{cases}
	\end{equation*}
	This implies, by Theorem \ref{TheoremAnticodeEquiv}, that there exists an optimal anticode $A\in\mathcal{A}_w\subseteq \mathcal{A}$ such that~$\mathcal{D}\leq\mathcal{C}\cap A$, which contradicts the minimality of $d_i$.
\end{remark}

The following is the main theorem of this section. It gives inversion formulae connecting the $i$-th generalized rank weight distribution and the sequence of generalized binomial moments indexed by $i$. This result generalizes \cite[Lemma~30]{ravagnani2016rank}.

\begin{theorem}
\label{TheoremBuAw}
	The following hold for $0\leq u,w\leq n$.
	\begin{align}
	\begin{split}\label{EqBuAw}
		B_u^{(i)}(\mathcal{C})&= \sum_{w=0}^u \qbin{n-w}{u-w} A_w^{(i)}(\mathcal{C}),
	\end{split}\\
	\begin{split}\label{EqAwBu}
		A_w^{(i)}(\mathcal{C})&=\sum_{u=0}^w\qbin{n-u}{w-u}(-1)^{w-u}q^{\binom{w-u}{2}}B_u^{(i)}(\mathcal{C}).
	\end{split}
	\end{align}
\end{theorem}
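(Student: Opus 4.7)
The plan is to prove \eqref{EqBuAw} by a direct double-counting argument and then deduce \eqref{EqAwBu} by a standard $q$-binomial inversion using Lemma \ref{LemmaTools}.

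For \eqref{EqBuAw}, first consider the case $n<m$. Unravel Definition \ref{DefinitionBu}: since $\qbin{\dim_{\mathbb{F}_q}\mathcal{C}(U)}{i}$ counts the $i$-dimensional $\mathbb{F}_q$-subspaces of $\mathcal{C}(U)$, and such subspaces are precisely the $i$-dimensional $\mathcal{D}\leq \mathcal{C}$ with $\colsupp(\mathcal{D})\leq U$, we obtain
\begin{equation*}
    B_u^{(i)}(\mathcal{C}) = \sum_{\substack{U\le \Fqn \\ \dim_{\mathbb{F}_q} U = u}} \bigl|\{\mathcal{D}\le \mathcal{C} : \dim_{\mathbb{F}_q}\mathcal{D} = i,\ \colsupp(\mathcal{D})\le U\}\bigr|.
\end{equation*}
Swapping the order of summation and using that the number of $u$-dimensional subspaces of $\Fqn$ containing a fixed $w$-dimensional subspace $W$ is $\qbin{n-w}{u-w}$ immediately yields $B_u^{(i)}(\mathcal{C}) = \sum_{w=0}^{u}\qbin{n-w}{u-w} A_w^{(i)}(\mathcal{C})$. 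For the case $n=m$, the same argument applied separately to the column and row parts of Definition \ref{DefinitionBu}, then averaged, gives the same identity (the factor $\frac12$ matches on both sides).

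For \eqref{EqAwBu}, substitute \eqref{EqBuAw} into the right-hand side and interchange summation order:
\begin{equation*}
    \sum_{u=0}^w (-1)^{w-u} q^{\binom{w-u}{2}} \qbin{n-u}{w-u} B_u^{(i)}(\mathcal{C}) = \sum_{v=0}^w A_v^{(i)}(\mathcal{C}) \sum_{u=v}^w (-1)^{w-u} q^{\binom{w-u}{2}} \qbin{n-u}{w-u}\qbin{n-v}{u-v}.
\end{equation*}
The inner product of $q$-binomials can be regrouped via the standard identity $\qbin{a}{b}\qbin{a-b}{c}=\qbin{a}{b+c}\qbin{b+c}{b}$ (a consequence of Lemma \ref{LemmaTools}(1)), applied with $a=n-v$, $b=u-v$, $c=w-u$, to factor out $\qbin{n-v}{w-v}$ and leave $\qbin{w-v}{u-v}$. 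After the change of variables $j=w-u$ and using the symmetry $\qbin{w-v}{w-v-j}=\qbin{w-v}{j}$, the inner sum becomes $\qbin{n-v}{w-v}\sum_{j=0}^{w-v}(-1)^j q^{\binom{j}{2}}\qbin{w-v}{j}$. By Lemma \ref{LemmaTools}(2) with $a=b=1$ and $c=w-v$, this last sum equals $\prod_{j=0}^{w-v-1}(1-q^j)$, which vanishes whenever $w-v\geq 1$ and equals $1$ when $w=v$. Hence the double sum collapses to $A_w^{(i)}(\mathcal{C})$, proving \eqref{EqAwBu}.

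The main obstacle is purely bookkeeping: making sure the $n=m$ averaging does not distort the argument and that the $q$-binomial manipulations are applied with the correct parameters. The combinatorial content of \eqref{EqBuAw} is transparent (counting pairs $(\mathcal{D},U)$), and \eqref{EqAwBu} is then a routine consequence of $q$-Möbius inversion via Lemma \ref{LemmaTools}.
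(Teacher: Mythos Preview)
Your proof is correct and follows essentially the same approach as the paper: the identity~\eqref{EqBuAw} is obtained by the same double count of pairs $(\mathcal{D},U)$, and~\eqref{EqAwBu} by inverting that relation. The only cosmetic difference is that the paper first applies M\"obius inversion on the subspace lattice to the pointwise identities $B_U^{(i)}=\sum_{W\le U}A_W^{(i)}$ and then aggregates over all $U,W$ of fixed dimension, whereas you substitute~\eqref{EqBuAw} directly and collapse the resulting $q$-binomial sum via Lemma~\ref{LemmaTools}; these are the same computation, since the M\"obius function of the subspace lattice is precisely $(-1)^{w-u}q^{\binom{w-u}{2}}$ and the orthogonality relation underlying M\"obius inversion is exactly the vanishing of $\sum_j(-1)^jq^{\binom{j}{2}}\qbin{c}{j}$ for $c\ge 1$ that you invoke.
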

\begin{proof}
First note that, for $U\leq\Fqn$,
\begin{eqnarray}
\label{EqBUAW1}
\sum_{W\leq U}A_W^{(i)}(\mathcal{C},c)&=&|\{\mathcal{D}\leq C:\dimq{\mathcal{D}}=i,\colsupp(\mathcal{D})\leq U\}|=B_{U}^{(i)}(\mathcal{C},c),\\
		\sum_{W\leq U}A_W^{(i)}(\mathcal{C},r)&=&|\{\mathcal{D}\leq C:\dimq{\mathcal{D}}=i,\rowsupp(\mathcal{D})\leq U\}|=B_{U}^{(i)}(\mathcal{C},r).
\label{EqBUAW2}
\end{eqnarray}
Therefore using the M\"obius inversion formula~\cite[Proposition 3.7.1]{stanleyec} we obtain
\begin{eqnarray}
	\label{EqAUBW1}
			A_W^{(i)}(\mathcal{C},c) &=& \sum_{U\leq W}(-1)^{\dimq{W}-\dimq{U}}q^{\binom{\dimq{W}-\dimq{U}}{2}}B_{U}^{(i)}(\mathcal{C},c),\\
		A_W^{(i)}(\mathcal{C},r) &=& \sum_{U\leq W}(-1)^{\dimq{W}-\dimq{U}}q^{\binom{\dimq{W}-\dimq{U}}{2}}B_{U}^{(i)}(\mathcal{C},r).
	\label{EqAUBW2}
\end{eqnarray}

	We continue the proof assuming $m=n$. The case $n<m$ is analogous and in fact simpler. By~\eqref{EqBUAW1} and~\eqref{EqBUAW2} we have
	\begin{eqnarray*}
			B_u^{(i)}(\mathcal{C})&=&\sum_{\tiny\begin{matrix}U\leq\Fqn\\\dimq{U}=u\end{matrix}} \left(\frac{1}{2}B_{U}^{(i)}(\mathcal{C},c)+\frac{1}{2} B_{U}^{(i)}(\mathcal{C},r)\right)\\
			&=&\sum_{\tiny\begin{matrix}U\leq\Fqn\\\dimq{U}=u\end{matrix}}\sum_{W\leq U}\left(\frac{1}{2}A_{W}^{(i)}(\mathcal{C},c)+\frac{1}{2} A_{W}^{(i)}(\mathcal{C},r)\right)\\
			&=&\sum_{w=0}^u\sum_{\tiny\begin{matrix}W\leq\Fqn\\\dimq{W}=w\end{matrix}}\left(\frac{1}{2}A_{W}^{(i)}(\mathcal{C},c)+\frac{1}{2} A_{W}^{(i)}(\mathcal{C},r)\right)\sum_{\tiny\begin{matrix}W\leq U\\\dimq{U}=u\end{matrix}}1\\
			&=&\sum_{w=0}^u\sum_{\tiny\begin{matrix}W\leq\Fqn\\\dimq{W}=w\end{matrix}}\left(\frac{1}{2}A_{W}^{(i)}(\mathcal{C},c)+\frac{1}{2} A_{W}^{(i)}(\mathcal{C},r)\right)\qbin{n-w}{u-w}\\
			&=&\sum_{w=0}^u\qbin{n-w}{u-w}\sum_{\tiny\begin{matrix}W\leq\Fqn\\\dimq{W}=w\end{matrix}}\left(\frac{1}{2}A_{W}^{(i)}(\mathcal{C},c)+\frac{1}{2} A_{W}^{(i)}(\mathcal{C},r)\right)\\
			&=&\sum_{w=0}^u \qbin{n-w}{u-w} A_w^{(i)}(\mathcal{C}).
		\end{eqnarray*}
	On the other hand, by~\eqref{EqAUBW1} and~\eqref{EqAUBW2} we get
	\begin{eqnarray*}
			A_w^{(i)}(\mathcal{C})&=&
		\sum_{\tiny\begin{matrix}W\leq\Fqn\\\dimq{W}=w\end{matrix}} \left(\frac{1}{2}A_{W}^{(i)}(\mathcal{C},c)+\frac{1}{2} A_{W}^{(i)}(\mathcal{C},r)\right)\\
		&=&\sum_
		{\tiny\begin{matrix}
			W\leq\Fqn\\
			\dimq{W}=w
		\end{matrix}}\sum_{u=0}^w
		\sum_
		{\tiny\begin{matrix}
			U\leq W\\
			\dimq{U}=u
		\end{matrix}}
		(-1)^{w-u}q^{\binom{w-u}{2}}\left(\frac{1}{2}B_{U}^{(i)}(\mathcal{C},c)+\frac{1}{2} B_{U}^{(i)}(\mathcal{C},r)\right)\\
		&=&\sum_{u=0}^w(-1)^{w-u}q^{\binom{w-u}{2}}
		\sum_
		{\tiny\begin{matrix}
			U\leq\Fqn\\
			\dimq{U}=u
		\end{matrix}}\left(\frac{1}{2}B_{U}^{(i)}(\mathcal{C},c)+\frac{1}{2} B_{U}^{(i)}(\mathcal{C},r)\right)
		\sum_{\tiny\begin{matrix}
			U\leq W\\
			\dimq{W}=w
		\end{matrix}}1\\
		&=&\sum_{u=0}^w(-1)^{w-u}q^{\binom{w-u}{2}}
		\sum_
		{\tiny\begin{matrix}
			U\leq\Fqn\\
			\dimq{U}=u
		\end{matrix}}\left(\frac{1}{2}B_{U}^{(i)}(\mathcal{C},c)+\frac{1}{2} B_{U}^{(i)}(\mathcal{C},r)\right)\qbin{n-u}{w-u}\\
		&=&\sum_{u=0}^w(-1)^{w-u}q^{\binom{w-u}{2}}\qbin{n-u}{w-u}
		\sum_
		{\tiny\begin{matrix}
			U\leq\Fqn\\
			\dimq{U}=u
		\end{matrix}}\left(\frac{1}{2}B_{U}^{(i)}(\mathcal{C},c)+\frac{1}{2} B_{U}^{(i)}(\mathcal{C},r)\right)\\
		&=&\sum_{u=0}^w(-1)^{w-u}q^{\binom{w-u}{2}}\qbin{n-u}{w-u}B_u^{(i)}(\mathcal{C}).
	\end{eqnarray*}
	This concludes the proof.
\end{proof}

The following generalizes the definition of rank weight enumerator. 

\begin{definition}
\label{DefinitionRankEnum}
	The \textbf{$i$-th generalized rank weight enumerator} of $\mathcal{C}$ is the homogeneous polynomial of degree $n$ in $\mathbb{Q}[X,Y]$ defined by
	\begin{equation*}
		W_\mathcal{C}^{(i)}(X,Y):=\sum_{w=0}^nA_w^{(i)}(\mathcal{C})X^{n-w}Y^w=\sum_{w=d_i}^nA_w^{(i)}(\mathcal{C})X^{n-w}Y^w.
	\end{equation*}
\end{definition}

The coefficients of $W_\mathcal{C}^{(i)}(X,Y)$ with respect to the basis $\{X^sY^t:s,t\in\mathbb{Z}_{\geq0}\}$ are the $A_w^{(i)}(\mathcal{C})$. Another well known $\mathbb{Q}$-basis of the ring of homogeneous polynomials of degree $n$ is given by~$\{\mathcal{B}_{n,u}(X,Y;q): n,u\in\mathbb{Z}_{\geq 0}, u\leq n\}$, where $\mathcal{B}_{n,u}(X,Y;q)$ is the \textbf{${(n,u)}$-th $q$-Bernstein polynomial}~\cite{lupas1987q} defined by
\begin{equation*}
	\mathcal{B}_{n,u}(X,Y;q):=\qbin{n}{u}Y^u\prod_{j=0}^{n-u-1}(X-q^jY).
\end{equation*}
The inversion formula associated with these polynomials is
\begin{equation*}
	X^{n-t}Y^t=\qbin{n}{t}^{-1}\sum_{u=t}^n\qbin{u}{t}\mathcal{B}_{n,u}(X,Y;q).
\end{equation*}

In the reminder of the section we compute the coefficients of 
$W_\mathcal{C}^{(i)}(X,Y)$ with respect to the basis given by the 
$q$-Bernstein polynomials.

\begin{definition}
	For every $0\leq u\leq n$, the \textbf{$(u,i)$-th normalized generalized binomial moment} is
	\begin{equation*}
		b_u^{(i)}(\mathcal{C}):=\begin{cases}
			0 &\;\textup{ if }\; u<0,\\
			\displaystyle\frac{B_{u+d_i}^{(i)}}{\qbin{n}{u+d_i}}&\;\textup{ if }\; 0\leq u \leq n-d^\perp-d_i,\\
			\qbin{k-m(n-u-d_i)}{i}  &\;\textup{ if }\; u>n-d^\perp-d_i.
		\end{cases}
	\end{equation*}
\end{definition}

\begin{theorem}
\label{TheoremWBern}
	We have
	\begin{equation*}
		W_{\mathcal{C}}^{(i)}(X,Y)=\sum_{u=d_i}^nb_{u-d_i}^{(i)}(\mathcal{C})\mathcal{B}_{n,u}(X,Y;q).
	\end{equation*}
\end{theorem}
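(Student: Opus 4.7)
My plan is to start from the explicit expansion of $W_\mathcal{C}^{(i)}(X,Y)$ in the monomial basis $\{X^{n-w}Y^w\}$ given by Definition~\ref{DefinitionRankEnum}, then apply the inversion formula for the $q$-Bernstein polynomials recorded just before the statement. This produces
\begin{equation*}
W_\mathcal{C}^{(i)}(X,Y)=\sum_{w=0}^n A_w^{(i)}(\mathcal{C})\qbin{n}{w}^{-1}\sum_{u=w}^n\qbin{u}{w}\mathcal{B}_{n,u}(X,Y;q).
\end{equation*}
After swapping the order of summation, the coefficient of $\mathcal{B}_{n,u}(X,Y;q)$ becomes $\sum_{w=0}^u A_w^{(i)}(\mathcal{C})\qbin{u}{w}\qbin{n}{w}^{-1}$.

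The key algebraic step is to rewrite this coefficient using Lemma~\ref{LemmaTools}(1) with $(a,b,c)=(n,u,w)$, which gives $\qbin{n}{u}\qbin{u}{w}=\qbin{n}{w}\qbin{n-w}{u-w}$, hence $\qbin{u}{w}\qbin{n}{w}^{-1}=\qbin{n-w}{u-w}\qbin{n}{u}^{-1}$. Plugging this in and using the inversion identity~\eqref{EqBuAw} of Theorem~\ref{TheoremBuAw}, the inner sum collapses to $B_u^{(i)}(\mathcal{C})$, so
\begin{equation*}
W_\mathcal{C}^{(i)}(X,Y)=\sum_{u=0}^n \qbin{n}{u}^{-1} B_u^{(i)}(\mathcal{C})\, \mathcal{B}_{n,u}(X,Y;q).
\end{equation*}

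To finish, I match this expression against the claimed one by case analysis on $u$. For $u<d_i$ Lemma~\ref{LemmaBu} gives $B_u^{(i)}(\mathcal{C})=0$, which is consistent with the sum on the right starting at $u=d_i$. For $d_i\leq u\leq n-d^\perp$, the definition of $b_{u-d_i}^{(i)}(\mathcal{C})$ is exactly $B_u^{(i)}(\mathcal{C})/\qbin{n}{u}$, so this range matches term-by-term. For $u>n-d^\perp$, Lemma~\ref{LemmaBu} again yields $B_u^{(i)}(\mathcal{C})=\qbin{n}{u}\qbin{k-m(n-u)}{i}$; dividing by $\qbin{n}{u}$ gives $\qbin{k-m(n-u)}{i}$, which coincides with the third clause of the definition of $b_{u-d_i}^{(i)}(\mathcal{C})$ after substituting $u'=u-d_i$.

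The calculation is essentially routine once the right change of basis is applied; the only delicate point is the bookkeeping between the two conventions, namely that the normalized moments are indexed by $u-d_i$ while the Bernstein basis is indexed by $u$, and that the three branches of the definition of $b_u^{(i)}(\mathcal{C})$ must line up with the two branches of Lemma~\ref{LemmaBu} plus the intermediate range. Verifying this alignment correctly (in particular checking that the bound $u>n-d^\perp-d_i$ on the shifted index corresponds to $u>n-d^\perp$ on the original index) is where care is needed.
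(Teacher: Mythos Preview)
Your proof is correct and follows essentially the same route as the paper's: apply the $q$-Bernstein inversion formula to the monomial expansion, swap the summations, use Lemma~\ref{LemmaTools}(1) to rewrite $\qbin{u}{w}\qbin{n}{w}^{-1}$ as $\qbin{n-w}{u-w}\qbin{n}{u}^{-1}$, and then invoke~\eqref{EqBuAw} of Theorem~\ref{TheoremBuAw} to recognize $B_u^{(i)}(\mathcal{C})$. The only difference is that the paper starts the sum at $w=d_i$ (using Remark~\ref{RemarkAwne0}) and then appeals in one line to the definition of $b_{u-d_i}^{(i)}(\mathcal{C})$, whereas you carry out the three-case matching with Lemma~\ref{LemmaBu} explicitly; this is just a more detailed version of the same final step.
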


\begin{proof}
	Applying the inversion formula for the $q$-Bernstein polynomial to Definition~\ref{DefinitionRankEnum} we obtain
	\begin{equation*}
		\begin{aligned}
			W_{\mathcal{C}}^{(i)}(X,Y)&=\sum_{w=d_i}^nA_w^{(i)}(\mathcal{C})\qbin{n}{w}^{-1}\sum_{u=w}^n\qbin{u}{w}\mathcal{B}_{n,u}(X,Y;q)\\
			&=\sum_{u=d_i}^n\mathcal{B}_{n,u}(X,Y;q)\sum_{w=d_i}^u\qbin{n}{w}^{-1}\qbin{u}{w}A_w^{(i)}(\mathcal{C})\\
			&=\sum_{u=d_i}^n\mathcal{B}_{n,u}(X,Y;q)\qbin{n}{u}^{-1}\sum_{w=d_i}^u\qbin{n-w}{u-w}A_w^{(i)}(\mathcal{C})\qquad\qquad \textup{(by Lemma \ref{LemmaTools})}
			\\
			&=\sum_{u=d_i}^n\mathcal{B}_{n,u}(X,Y;q)b_{u-d_i}^{(i)}(\mathcal{C}),
		\end{aligned}
	\end{equation*}
	where the latter inequality follows by Theorem \ref{TheoremBuAw} and the definition of normalized generalized binomial moments.
\end{proof}

\section{$i$-BMD rank-metric codes}
\label{SectioniMRD}

In this section we introduce a new family of extremal rank-metric codes, namely the $i$-BMD codes, whose generalized rank weight distributions and generalized binomials moments depend only on the code parameters $n,m,k,d$. Introducing this family of codes is also motivated by the fact that it refines the notions of MRD and DQMRD. Ducoat and Oggier \cite[Definition~2]{ducoat2014rank}, in the $\Fqm$-linear case, and de la Cruz \cite[Definition~5.1]{de2018dually}, for some values of the parameter $i$, defined the family of codes that are optimal with respect to the $i$-th generalized rank weight. We extend these definitions referred to as $i$-MRD codes and we show that this latter family properly contains the class of $i$-BMD codes. For the remainder $\alpha,\rho$ will denote non-negative integers such that $k=\alpha m+\rho$ and $0\leq \rho\leq m-1$.

\begin{definition}
	The code $\mathcal{C}$ is $i$-\textbf{BMD} (\textbf{Binomial Moment Determined}) if  
	\begin{equation*}
		n-d^\perp-d_i<0.
	\end{equation*}
\end{definition}

\begin{lemma}
\label{LemmajMRD}
	If $\mathcal{C}$ is $i$-BMD then $\mathcal{C}$ is $j$-BMD for all $i\leq j\leq k$. Moreover, every non-zero matrix rank-metric code is $i$-BMD for all $i$ such that $ \left\lceil\frac{i}{m}\right\rceil\geq n$.
\end{lemma}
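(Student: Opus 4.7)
The plan is to unpack the definition of $i$-BMD as the inequality $d_i > n - d^\perp$, and then leverage the monotonicity and lower-bound properties of the generalized rank weights listed in Lemma~\ref{LemmaGenRankWeights}.

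For the first assertion, I would simply observe that $\mathcal{C}$ being $i$-BMD translates to $d_i > n - d^\perp$. By the monotonicity property $d_i \leq d_{i+1}$ from Lemma~\ref{LemmaGenRankWeights}, for any $j$ with $i \leq j \leq k$ we have $d_j \geq d_i > n - d^\perp$, so $n - d^\perp - d_j < 0$, which is exactly the $j$-BMD condition. This step is completely routine.

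For the second assertion, suppose $\mathcal{C}$ is non-zero and $\lceil i/m \rceil \geq n$. By Lemma~\ref{LemmaGenRankWeights} we have $d_i \geq \lceil i/m \rceil \geq n$. On the other hand, since $\mathcal{C}$ is non-zero we have $k \geq 1$, and for any $i \leq k$, monotonicity gives $d_i \leq d_k \leq n$. Hence $d_i = n$. Now I split on whether $\mathcal{C}^\perp$ is trivial: if $\mathcal{C}^\perp \neq \{0\}$ then $d^\perp \geq 1$, so $n - d^\perp - d_i = -d^\perp < 0$; if instead $\mathcal{C}^\perp = \{0\}$ (i.e.\ $\mathcal{C} = \Fqnm$), then the convention $d^\perp = n+1$ yields $n - d^\perp - d_i = -1 - d_i < 0$. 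In both situations the $i$-BMD inequality holds.

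The only mild subtlety, and really the only thing to be careful about, is the boundary case $\mathcal{C}^\perp = \{0\}$, where one must remember to invoke the convention $d(\{0\}) = n+1$ from the preliminaries rather than silently assume $d^\perp \geq 1$. Everything else is a direct consequence of the bounds already collected in Lemma~\ref{LemmaGenRankWeights}, so no genuine obstacle arises.
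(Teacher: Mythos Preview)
Your proposal is correct and follows essentially the same approach as the paper: monotonicity of the $d_i$ for the first part, and the lower bound $d_i \geq \lceil i/m \rceil$ together with $d^\perp \geq 1$ for the second. The paper is slightly more terse, simply writing $d_i + d^\perp \geq \lceil i/m \rceil + 1$ without pinning down $d_i = n$ or separating the case $\mathcal{C}^\perp = \{0\}$, but your extra care there is harmless.
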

\begin{proof}
	The first part of the statement is an immediate consequence of the fact that the generalized rank weights form an increasing sequence. 
	For the second part, Lemma \ref{LemmaGenRankWeights} implies that~$d_i+d^\perp\geq\left\lceil\frac{i}{m}\right\rceil+1$, from which the statement follows.
\end{proof}

\begin{definition}
	We say that $\mathcal{C}$ is \textbf{minimally $i$-BMD} if $i=\min\{j: \mathcal{C} \textup{ is }j\textup{-BMD}\}$.
\end{definition}

Notice that the codes $\{0\}$ and $\Fqnm$ are minimally $1$-BMD. A useful property of an $i$-BMD code is that, for any $i\leq j\leq k$, its $j$-th generalized rank weight distributions and binomial moments depend only on the code parameters $n$, $m$, $k$ and $d_i$. We give explicit formulae in the following results.

\begin{lemma}
\label{LemmaBuBMD}
	Let $\mathcal{C}$ be $i$-BMD. For all $i\leq j\leq k$ and $0\leq u\leq n$ we have
	\begin{equation*}
		B_u^{(j)}(\mathcal{C})=
		\begin{cases}
			0 & \textup{ if }\; u<d_j,\\
			\qbin{n}{u}
		\qbin{k-m(n-u)}{j} & \textup{ if }\; u\geq d_j.
		\end{cases}
	\end{equation*}
\end{lemma}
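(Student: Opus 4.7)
The plan is to reduce the statement to the already-established Lemma \ref{LemmaBu} by showing that, when $\mathcal{C}$ is $i$-BMD, the two cases there ($u<d_j$ and $u>n-d^\perp$) cover the whole range $0\le u\le n$ for every $j\ge i$. Since Lemma \ref{LemmaBu} gives exactly the two values asserted here, no new combinatorial work will be required.

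First, I would invoke Lemma \ref{LemmajMRD} to pass from $i$-BMD to $j$-BMD for every $i\le j\le k$; by definition this means $n-d^\perp-d_j<0$, i.e.\ $n-d^\perp<d_j$, equivalently $n-d^\perp\le d_j-1$. This is the crucial inequality that closes the gap between the two cases of Lemma \ref{LemmaBu}: there is no integer $u$ with $n-d^\perp<u<d_j$ that is not already covered by one of the two branches, because the inequality $n-d^\perp<d_j$ forces $\{u:u\ge d_j\}\subseteq\{u:u>n-d^\perp\}$.

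Then I would split into two cases. If $u<d_j$, Lemma \ref{LemmaBu} applied to index $j$ gives $B_u^{(j)}(\mathcal{C})=0$, matching the first branch of the claim. If $u\ge d_j$, then by the inequality just derived $u\ge d_j>n-d^\perp$, so Lemma \ref{LemmaBu} applied to index $j$ yields
\begin{equation*}
B_u^{(j)}(\mathcal{C})=\qbin{n}{u}\qbin{k-m(n-u)}{j},
\end{equation*}
matching the second branch.

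There is no serious obstacle; the only point deserving care is checking that the strict inequality $n-d^\perp<d_j$ coming from the $j$-BMD property is what allows $u=d_j$ itself to fall in the range $u>n-d^\perp$ where the explicit formula of Lemma \ref{LemmaBu} is valid. Once this has been noted, the lemma follows immediately.
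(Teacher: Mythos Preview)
Your proposal is correct and follows exactly the same approach as the paper: the paper's proof is the one-line ``It follows from Lemmas \ref{LemmaBu} and \ref{LemmajMRD} and the Definition of a $j$-BMD code,'' and you have simply unpacked this in detail.
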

\begin{proof}
	It follows from Lemmas \ref{LemmaBu} and \ref{LemmajMRD} and the Definition of a $j$-BMD code.
\end{proof}

As an immediate consequence we have the following.

\begin{lemma}
\label{LemmabuBMD}
	Let $\mathcal{C}$ be $i$-BMD. For all $i\leq j\leq k$  we have
	\begin{enumerate}
		\item $\displaystyle b_u^{(j)}(\mathcal{C})=\begin{cases}
			0 &\;\textup{ if }\; u<0,\\
			\qbin{k-m(n-u-d_j)}{j}  &\;\textup{ if }\; u\geq 0;
		\end{cases}$
		\item $\displaystyle A_w^{(j)}(\mathcal{C})=\qbin{n}{w}\sum_{u=d_j}^w\qbin{w}{u}(-1)^{w-u}q^{\binom{w-u}{2}}\qbin{k-m(n-u)}{j}\quad$ for all $0\leq w\leq n$;
		\item $\displaystyle W_{\mathcal{C}}^{(j)}(X,Y)=\sum_{u=d_i}^n\qbin{k-m(n-u)}{j}\mathcal{B}_{n,u}(X,Y;q)$.
	\end{enumerate}
\end{lemma}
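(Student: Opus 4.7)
The plan is to derive each of the three claims directly from Lemma \ref{LemmaBuBMD} (which pins down the generalized binomial moments of an $i$-BMD code in every case) by feeding it into, respectively, the definition of the normalized generalized binomial moments, the inversion formula \eqref{EqAwBu} of Theorem \ref{TheoremBuAw}, and the $q$-Bernstein expansion of Theorem \ref{TheoremWBern}. Since $\mathcal{C}$ is $j$-BMD for every $i\leq j\leq k$ by Lemma \ref{LemmajMRD}, all three invocations are legitimate for each such $j$.

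For part (1), I would observe that the $j$-BMD hypothesis $n-d^\perp-d_j<0$ forces every integer $u\geq 0$ to satisfy $u>n-d^\perp-d_j$, so only the third branch of the defining formula of $b_u^{(j)}(\mathcal{C})$ is ever active. This branch reads $\qbin{k-m(n-u-d_j)}{j}$, which is exactly the claimed expression, and the case $u<0$ is already built into the definition. One can cross-check consistency with the middle branch by noting that Lemma \ref{LemmaBuBMD} yields $B_{u+d_j}^{(j)}(\mathcal{C})/\qbin{n}{u+d_j}=\qbin{k-m(n-u-d_j)}{j}$ whenever $u\geq 0$.

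For part (2), I would substitute Lemma \ref{LemmaBuBMD} into \eqref{EqAwBu}, truncating the sum at $u=d_j$ since $B_u^{(j)}(\mathcal{C})=0$ for $u<d_j$. This produces
\[
A_w^{(j)}(\mathcal{C})=\sum_{u=d_j}^w \qbin{n-u}{w-u}(-1)^{w-u}q^{\binom{w-u}{2}}\qbin{n}{u}\qbin{k-m(n-u)}{j}.
\]
To match the stated form, I pull $\qbin{n}{w}$ outside by applying Lemma \ref{LemmaTools}(1) with $(a,b,c)=(n,w,u)$, which gives $\qbin{n}{w}\qbin{w}{u}=\qbin{n}{u}\qbin{n-u}{n-w}$, combined with the identity $\qbin{n-u}{n-w}=\qbin{n-u}{w-u}$. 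This replaces $\qbin{n-u}{w-u}\qbin{n}{u}$ by $\qbin{n}{w}\qbin{w}{u}$ and yields the claim.

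For part (3), I invoke Theorem \ref{TheoremWBern} with $i$ replaced by $j$ to write $W_{\mathcal{C}}^{(j)}(X,Y)=\sum_{u=d_j}^{n} b_{u-d_j}^{(j)}(\mathcal{C})\,\mathcal{B}_{n,u}(X,Y;q)$; then part (1) evaluates $b_{u-d_j}^{(j)}(\mathcal{C})=\qbin{k-m(n-u)}{j}$ for each $u\geq d_j$, producing the stated closed form. The summation index is allowed to start at any value $\leq d_j$ (such as $d_i$) since the extra $b_{u-d_j}^{(j)}(\mathcal{C})$ factors vanish for $u<d_j$. I do not anticipate a real obstacle anywhere in this lemma: the work has already been done in Lemma \ref{LemmaBuBMD}, and the only residual tasks are the $q$-binomial rearrangement in part (2) and the bookkeeping of summation ranges.
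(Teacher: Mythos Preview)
Your proposal is correct and matches the paper's approach: the paper simply states this lemma ``as an immediate consequence'' of Lemma~\ref{LemmaBuBMD}, and you have spelled out precisely how each part follows from it, together with the definition of $b_u^{(j)}$, Theorem~\ref{TheoremBuAw}, and Theorem~\ref{TheoremWBern}.

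One small remark on your treatment of part~(3): after you replace $b_{u-d_j}^{(j)}(\mathcal{C})$ by $\qbin{k-m(n-u)}{j}$, the justification for extending the lower summation limit from $d_j$ down to $d_i$ should be that the \emph{new} coefficients $\qbin{k-m(n-u)}{j}$ vanish for $d_i\le u<d_j$, not the old $b_{u-d_j}^{(j)}$. This is still true, and follows from Lemma~\ref{LemmaBu}: since $\mathcal{C}$ is $i$-BMD, any $u\ge d_i$ satisfies $u>n-d^\perp$, so $B_u^{(j)}(\mathcal{C})=\qbin{n}{u}\qbin{k-m(n-u)}{j}$; combined with $B_u^{(j)}(\mathcal{C})=0$ for $u<d_j$, this forces $\qbin{k-m(n-u)}{j}=0$ on that range.
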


We now introduce another family of extremal codes, extending the definitions in~\cite{ducoat2014rank,de2018dually}.

\begin{definition}
	The code $\mathcal{C}$ is \textbf{$i$-MRD} if 
	\begin{equation*}
		d_i=n-\left\lfloor\frac{k-i}{m}\right\rfloor.
	\end{equation*}
\end{definition}

\begin{remark}
	According to the definition above, the codes in the family introduced in~\cite{ducoat2014rank} are $(rm)$-MRD codes, for $r\in\left\{1,2,\ldots,\frac{k}{m}\right\}$ when $m|k$, while those in~\cite{de2018dually} are~$i$-MRD codes with $i=1+(r-1)m$ and $r\in\left\{1,2,\ldots,\left\lceil\frac{k}{m}\right\rceil\right\}$.
\end{remark}

The following result extends \cite[Lemma~6.2]{de2018dually}.

\begin{lemma}
	If $\mathcal{C}$ is $i$-MRD then $\mathcal{C}$ is $(i+m)$-MRD.
\end{lemma}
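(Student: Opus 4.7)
The plan is to use the two bounds on consecutive generalized rank weights supplied by Lemma \ref{LemmaGenRankWeights}, namely the upper bound $d_j \leq n - \lfloor (k-j)/m \rfloor$ and the strict-step inequality $d_j < d_{j+m}$, and squeeze $d_{i+m}$ between them.

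First I would assume $1 \leq i \leq k-m$ so that the statement is non-vacuous (otherwise the claim follows from Lemma \ref{LemmajMRD}-style edge considerations, since there is no $d_{i+m}$ to speak of). Using the $i$-MRD hypothesis, I rewrite
\begin{equation*}
n - \left\lfloor \frac{k-i-m}{m} \right\rfloor \;=\; n - \left\lfloor \frac{k-i}{m} \right\rfloor + 1 \;=\; d_i + 1.
\end{equation*}
From the fifth item of Lemma \ref{LemmaGenRankWeights} applied at the index $i+m$, I obtain the upper bound $d_{i+m} \leq n - \lfloor (k-i-m)/m\rfloor = d_i+1$. From the fourth item of the same lemma, I get the lower bound $d_{i+m} > d_i$, i.e.\ $d_{i+m} \geq d_i + 1$.

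Combining these two inequalities forces $d_{i+m} = d_i + 1 = n - \lfloor (k-i-m)/m \rfloor$, which is exactly the $(i+m)$-MRD condition. There is no real obstacle here; the only subtlety is book-keeping with the floor function under the shift $i \mapsto i+m$, and confirming that the range hypotheses for the two items of Lemma \ref{LemmaGenRankWeights} are satisfied under the assumption $i+m \leq k$.
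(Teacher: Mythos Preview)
Your argument is correct and is essentially identical to the paper's: both squeeze $d_{i+m}$ between $d_i+1$ (from the strict-step inequality $d_i<d_{i+m}$) and $n-\lfloor(k-i-m)/m\rfloor=d_i+1$ (from the upper bound on generalized weights), using the same two items of Lemma~\ref{LemmaGenRankWeights}. The paper simply writes the chain of inequalities in one line without the additional commentary.
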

\begin{proof}
	Let $\mathcal{C}$ be an $i$-MRD code. We have
	\begin{equation*}
		d_i=n-\left\lfloor\frac{k-i}{m}\right\rfloor
	\end{equation*}
	and Lemma \ref{LemmaGenRankWeights} implies
	\begin{equation*}
		n-\left\lfloor\frac{k-i}{m}\right\rfloor+1=d_i+1\leq d_{i+m}\leq n-\left\lfloor\frac{k-i-m}{m}\right\rfloor=n-\left\lfloor\frac{k-i}{m}\right\rfloor+1.\qedhere
	\end{equation*}
\end{proof}

In the following example we show that if $\mathcal{C}$ is $i$-MRD then it is not necessarily $(i+1)$-MRD. 
In the remainder of the paper, we denote the codes in the examples by $\mathcal{C}_v$ and use them repeatedly. 

\begin{example}
\label{Example(i)not(i+1)MRD}
	Let $\mathcal{C}_1\leq \mathbb{F}_2^{3\times 4}$ be the code of dimension $6$ generated by
	\begin{equation*}
	\scriptsize\left\{
		\begin{pmatrix}
        1 & 0 & 0 & 0\\
        0 & 0 & 0 & 1\\
        1 & 0 & 1 & 1
        \end{pmatrix},
        \begin{pmatrix}
        0 & 1 & 0 & 0\\
        1 & 0 & 0 & 1\\
        0 & 1 & 1 & 0
        \end{pmatrix},
        \begin{pmatrix}
        0 & 0 & 1 & 0\\
        1 & 0 & 0 & 1\\
        1 & 0 & 1 & 0
        \end{pmatrix},\\
        \begin{pmatrix}
        0 & 0 & 0 & 1\\
        1 & 0 & 0 & 1\\
        1 & 1 & 0 & 0
        \end{pmatrix},
        \begin{pmatrix}
        0 & 0 & 0 & 0\\
        0 & 1 & 0 & 1\\
        0 & 0 & 0 & 1
        \end{pmatrix},
        \begin{pmatrix}
        0 & 0 & 0 & 0\\
        0 & 0 & 1 & 1\\
        1 & 1 & 0 & 1
        \end{pmatrix}\right\}.
	\end{equation*}
	One can check that the generalized rank weights of $\mathcal{C}$ are
	\begin{equation*}
		d_1(\mathcal{C}_1)=1, \qquad d_2(\mathcal{C}_1)=d_3(\mathcal{C}_1)=2, \qquad d_4(\mathcal{C}_1)=d_5(\mathcal{C}_1)=d_6(\mathcal{C}_1)=3
	\end{equation*}
	and that
	\begin{equation*}
		3-\left\lfloor\frac{6-2}{4}\right\rfloor=2=d_2(\mathcal{C}_1) \qquad\qquad  3-\left\lfloor\frac{6-3}{4}\right\rfloor=3\neq 2=d_3(\mathcal{C}_1).
	\end{equation*}
	Therefore $\mathcal{C}_1$ is $2$-MRD but is not $3$-MRD.
\end{example}

\begin{lemma}[{\cite[Theorem~22]{de2018weight}}]
\label{LemmadiDQMRD}
	If $\mathcal{C}$ is DQMRD, then its generalized rank weights are determined by the parameters $n$, $m$ and $k$ as follows:
	\begin{equation*}
		d_i=
		\begin{cases}
			n-\alpha & 1\leq i\leq\rho,\\
			n+1+s-\alpha & \rho+1+sm\leq i\leq \rho+(s+1)m \quad 0\leq s\leq \alpha-2,\\
			n & \rho+1+(\alpha-1)m\leq i\leq k .
		\end{cases}
	\end{equation*}
\end{lemma}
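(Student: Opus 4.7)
The plan is to determine each $d_i(\mathcal{C})$ by pairing an explicit optimal anticode with the bounds of Lemma~\ref{LemmaGenRankWeights}. Writing $k = \alpha m + \rho$ with $1 \le \rho \le m-1$ (as in the statement), the QMRD formula applied to $\mathcal{C}$ immediately gives $d_1 = d = n - \alpha$, while applied to $\mathcal{C}^\perp$---whose dimension is $k^\perp = (n-\alpha-1)m + (m-\rho)$---it yields $d^\perp = \alpha + 1$. These two numbers, together with the rigidity that the DQMRD assumption forces, drive the rest of the argument.

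The technical heart of the proof is the following claim: for every $s \in \{0, 1, \ldots, \alpha - 1\}$ and every $A \in \mathcal{A}_{n-\alpha+s}$, $\dimq{\mathcal{C} \cap A} = \rho + sm$. To establish this, I would invoke the anticode duality of~\cite[Theorem~54]{ravagnani2016rank} to obtain $A^\perp \in \mathcal{A}_{\alpha - s}$; Theorem~\ref{TheoremAnticodeEquiv} then implies that every matrix in $A^\perp$ has rank at most $\alpha - s$. Since $d^\perp = \alpha + 1 > \alpha - s$, no nonzero element of $\mathcal{C}^\perp$ can live in $A^\perp$, so $\mathcal{C}^\perp \cap A^\perp = \{0\}$. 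Lemma~\ref{LemmaCcapA} then produces $|\mathcal{C} \cap A| = |\mathcal{C}^\perp \cap A^\perp| \cdot q^{k - m(\alpha - s)} = q^{\rho + sm}$, as claimed.

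From the definition of $d_i$, this claim yields, for each $s$, both an upper bound $d_i \le n - \alpha + s$ for every $i \le \rho + sm$ (witnessed by the anticode $A$) and a lower bound $d_{\rho + sm + 1} \ge n - \alpha + s + 1$ (since no $A \in \mathcal{A}_{n-\alpha+s}$ contains a subspace of $\mathcal{C}$ of dimension exceeding $\rho + sm$). Combining these across $s$ with the monotonicity $d_i \le d_{i+1}$, the Singleton-type bound $d_i \le n - \lfloor (k-i)/m \rfloor$ (which is tight at the right endpoint of each plateau), and the terminal $d_k \le n$, all from Lemma~\ref{LemmaGenRankWeights}, forces $d_i$ to be constant on each of the three intervals in the statement, with the stated values: $s = 0$ handles the first plateau, $s = 1,\ldots,\alpha-2$ the middle plateaus, and $s = \alpha - 1$ the top plateau $d_i = n$.

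The main obstacle is securing the lower bound at the \emph{left} endpoint of each plateau: the strict-growth inequality $d_i < d_{i+m}$ alone is insufficient, because in the first plateau one cannot bootstrap from $d_{i-m}$ (the index $i-m$ need not even be positive). The anticode-intersection computation above sidesteps this by delivering a direct quantitative obstruction, and this is the one place where the DQMRD hypothesis---equivalently, the rigid value $d^\perp = \alpha + 1$---rather than just QMRD of $\mathcal{C}$, is essential.
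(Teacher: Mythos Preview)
The paper does not prove this lemma; it is stated with a citation to~\cite[Theorem~22]{de2018weight} and no argument is given. So there is nothing in the paper to compare your proof against.

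That said, your argument is correct and self-contained. The central claim---that $\dim_{\mathbb{F}_q}(\mathcal{C}\cap A)=\rho+sm$ for every $A\in\mathcal{A}_{n-\alpha+s}$ and every $0\le s\le\alpha-1$---is precisely the second case of Lemma~\ref{LemmaDimCcapA} specialized to the DQMRD situation (since $d^\perp=\alpha+1$ turns the hypothesis $u>n-d^\perp$ into $u\ge n-\alpha$), and your derivation of it via Lemma~\ref{LemmaCcapA} mirrors the paper's proof of that lemma. One small clarification: to obtain the lower bound $d_{\rho+sm+1}\ge n-\alpha+s+1$ you must rule out \emph{all} anticodes $A\in\mathcal{A}_u$ with $u\le n-\alpha+s$, not only those at level $u=n-\alpha+s$ as your phrasing literally suggests; but this is automatic, because your claim applied at each $s'\le s$ handles the levels $n-\alpha\le u\le n-\alpha+s$, and $\mathcal{C}\cap A=\{0\}$ trivially for $u<d=n-\alpha$. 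With that in place the three plateaus are pinned down exactly as you describe. In fact the Singleton-type bound and the strict-growth inequality $d_i<d_{i+m}$ that you invoke are not needed---the anticode computation alone already supplies matching upper and lower bounds on every $d_i$.
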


\begin{lemma}
\label{CorollaryCompactDQMRD}
	The code $\mathcal{C}$ is DQMRD if and only if $m\nmid k$ and $\mathcal{C}$ is $i$-MRD for all $1\leq i\leq k$.
\end{lemma}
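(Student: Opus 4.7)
I would prove the two implications separately. The forward direction is a direct case-by-case verification from the known formula for the generalized rank weights of a DQMRD code, while the backward direction requires a duality computation via anticodes.

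For the forward direction, assume $\mathcal{C}$ is DQMRD. Then $m \nmid k$ holds by definition of QMRD. To prove that $\mathcal{C}$ is $i$-MRD for every $1 \le i \le k$, I would take the explicit expression for $d_i$ from Lemma \ref{LemmadiDQMRD} and verify, using the decomposition $k = \alpha m + \rho$ with $1 \le \rho \le m-1$, that it coincides with $n - \lfloor (k-i)/m \rfloor$ in each of its three ranges. For $1 \le i \le \rho$ we have $k-i = \alpha m + (\rho - i)$ with $0 \le \rho-i < m$, giving floor $\alpha$; for $\rho + 1 + sm \le i \le \rho + (s+1)m$ the value $k - i$ lies in $[(\alpha-1-s)m, (\alpha-s)m-1]$, giving floor $\alpha - 1 - s$; and for $i \ge \rho + 1 + (\alpha-1)m$ the value $k - i$ lies in $[0, m-1]$, giving floor $0$. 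In each case the formula matches $d_i$.

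For the backward direction, assume $m \nmid k$ and $\mathcal{C}$ is $i$-MRD for all $1 \le i \le k$. The $1$-MRD hypothesis gives $d = n - \lfloor (k-1)/m \rfloor$, so $\mathcal{C}$ is QMRD. Since $k^\perp = mn - k$ and $m \mid mn$, we immediately get $m \nmid k^\perp$. The real task is to compute $d^\perp$. The strategy is to use Lemma \ref{LemmaCcapA}: for $B \in \mathcal{A}_v$, one has $\dim_{\mathbb{F}_q}(\mathcal{C}^\perp \cap B) = \dim_{\mathbb{F}_q}(\mathcal{C} \cap B^\perp) - k + mv$ with $B^\perp \in \mathcal{A}_{n-v}$. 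Hence $d^\perp$ is the smallest $v \ge 1$ such that some $A \in \mathcal{A}_{n-v}$ satisfies $\dim_{\mathbb{F}_q}(\mathcal{C} \cap A) \ge k - mv + 1$, which (by definition of $d_i$) is equivalent to $d_{k-mv+1}(\mathcal{C}) \le n - v$ when $k - mv + 1 \ge 1$, and is vacuous when $k - mv + 1 \le 0$.

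For $v \in \{1, \ldots, \alpha\}$ we have $k - mv + 1 \ge \rho + 1 \ge 2$, and the $i$-MRD hypothesis at $i = k - mv + 1$ yields
\[
d_{k-mv+1}(\mathcal{C}) = n - \left\lfloor \tfrac{mv-1}{m} \right\rfloor = n - v + 1 > n - v,
\]
so the condition fails. For $v = \alpha + 1$, $k - mv + 1 = \rho + 1 - m \le 0$ and the condition holds vacuously. Therefore $d^\perp = \alpha + 1$. A short floor calculation with $k^\perp - 1 = m(n-\alpha) - (\rho + 1)$ and $2 \le \rho + 1 \le m$ shows $\lfloor (k^\perp - 1)/m \rfloor = n - \alpha - 1$, so $n - \lfloor (k^\perp - 1)/m \rfloor = \alpha + 1 = d^\perp$, and $\mathcal{C}^\perp$ is QMRD.

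The main obstacle is the computation of $d^\perp$: once the dual minimum distance is rewritten via Lemma \ref{LemmaCcapA} as a statement about the generalized rank weights of $\mathcal{C}$, the $i$-MRD hypothesis controls exactly the quantities $d_{k-mv+1}(\mathcal{C})$ that appear, and the arithmetic of $k = \alpha m + \rho$ makes the two sides of the desired QMRD equality agree.
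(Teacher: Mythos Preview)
Your forward direction is correct and is essentially the paper's argument: both reduce to the verification that, with $k=\alpha m+\rho$, the explicit piecewise formula for $d_i$ in Lemma~\ref{LemmadiDQMRD} agrees with $n-\lfloor (k-i)/m\rfloor$ in each of its three ranges.

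Your backward direction is correct but takes a genuinely different route. The paper simply records the identity $n-\lfloor (k-i)/m\rfloor = n-\alpha+\lceil (i-\rho)/m\rceil$, checks the three ranges, and then writes ``the result now follows by direct application of Lemma~\ref{LemmadiDQMRD}''. As stated, Lemma~\ref{LemmadiDQMRD} is only the implication DQMRD $\Rightarrow$ (formula for $d_i$), so the paper is tacitly relying either on the converse being available in the cited reference or on the Wei-type duality $V_p(\mathcal{C}^\perp)=\{1,\ldots,n\}\setminus\overline V_{p+k}(\mathcal{C})$ (quoted just before Lemma~\ref{Lemmadi=t}) to recover $d^\perp$ from the known $d_i$'s. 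You instead compute $d^\perp$ directly from Lemma~\ref{LemmaCcapA}: the dimension identity turns ``some $B\in\mathcal{A}_v$ meets $\mathcal{C}^\perp$ nontrivially'' into ``$d_{k-mv+1}(\mathcal{C})\le n-v$'', and the $i$-MRD hypothesis at the indices $i=k-mv+1$ (for $v=1,\ldots,\alpha$) gives $d_{k-mv+1}=n-v+1$, forcing $d^\perp=\alpha+1$. This is fully self-contained within the paper and makes explicit exactly which $i$-MRD conditions are used to pin down $d^\perp$; the trade-off is a slightly longer computation than the duality shortcut the paper leans on.
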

\begin{proof}
	Recall that $k=\alpha m+\rho$, with $0\leq \rho\leq m-1$. We have
	\begin{equation*}
		n-\left\lfloor\frac{k-i}{m}\right\rfloor=n-\left\lfloor\frac{\alpha m+\rho-i}{m}\right\rfloor=n-\alpha-\left\lfloor\frac{\rho-i}{m}\right\rfloor=n-\alpha+\left\lceil\frac{i-\rho}{m}\right\rceil.
	\end{equation*}
	Observe that
	\begin{itemize}
		\item $0\leq \frac{\rho-i}{m}<1$ if $1\leq i\leq \rho$, in which case $\left\lfloor\frac{\rho-i}{m}\right\rfloor=0$;
		\item $s+\frac{1}{m}\leq\frac{i-\rho}{m}\leq s+1$ if $\rho+1+sm\leq i\leq \rho+(s+1)m$, in which case $\left\lceil\frac{i-\rho}{m}\right\rceil=s+1$;
		\item $\alpha-1+\frac{1}{m}\leq \frac{i-\rho}{m}\leq \alpha$ if $\rho+1+(\alpha-1)m\leq i\leq k$, in which case $\left\lceil\frac{i-\rho}{m}\right\rceil=\alpha$.
	\end{itemize}
	The result now follows by direct application of Lemma \ref{LemmadiDQMRD}.
\end{proof}

We devote the remaining part of this section to showing that if a code is $i$-BMD then it is $i$-MRD and that, in general, the converse does not hold. Assume that $\mathcal{C}$ has dimension $k\in\{1,\ldots, nm-1\}$. For an integer $1\leq p\leq m$, we define the sets
\begin{equation*}
	\begin{aligned}
		V_p(\mathcal{C})&:=\left\{d_{p+jm}:1\leq p+jm\leq k\right\},\\
		\overline{V}_p(\mathcal{C})&:=\{n+1-x: x\in V_p(\mathcal{C})\}.
	\end{aligned}
\end{equation*}
 In \cite[Corollary~38]{ravagnani2016generalized}, it is shown that $V_p(\mathcal{C}^\perp)=\{1,2,\ldots,n\}\setminus \overline{V}_{p+k}(\mathcal{C})$. In particular, the generalized rank weights of $\mathcal{C}$ completely determine the generalized rank weights of $\mathcal{C}^\perp$.

For the remainder, we assume that $r,t$ are positive integers such that
\begin{equation*}
	i=k+r-tm \qquad \textup{ with }\quad 1\leq r\leq m.
\end{equation*}

It is easy to check that these integers exist for any $i$. Moreover, $1\leq i\leq k$ implies $1\leq t\leq \left\lfloor\frac{k+r-1}{m}\right\rfloor$.

\begin{lemma}
\label{Lemmadi=t}
	Let $2\leq i\leq k-1$ and write $i=k+r-tm$ with $1\leq t\leq \left\lfloor\frac{k+r-1}{m}\right\rfloor$ and $1\leq r\leq m$. If $\mathcal{C}$ is $i$-BMD and $r\leq k^\perp$ then $d_{k+r-jm}=n-j+1$ for all $1\leq j\leq t$.
\end{lemma}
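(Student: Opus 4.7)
The plan is to deduce the equality $d_{k+r-jm}=n-j+1$ by combining the Singleton-type upper bound from Lemma \ref{LemmaGenRankWeights}, the lower bound obtained from the $i$-BMD hypothesis via Lemma \ref{LemmajMRD}, and the Wei-type duality $V_p(\mathcal{C}^\perp)=\{1,\ldots,n\}\setminus\overline{V}_{p+k}(\mathcal{C})$ from \cite[Corollary~38]{ravagnani2016generalized} recalled just above.

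First, for the upper bound I would apply Lemma \ref{LemmaGenRankWeights}: since $1\leq r\leq m$, a direct computation gives $\lfloor(jm-r)/m\rfloor=j-1$, so $d_{k+r-jm}\leq n-j+1$ for every $1\leq j\leq t$. Next, by Lemma \ref{LemmajMRD} the code $\mathcal{C}$ is $(k+r-jm)$-BMD for every such $j$ (as $k+r-jm\geq i$), which unpacks to $d_{k+r-jm}+d^\perp\geq n+1$. Writing $y_j:=n+1-d_{k+r-jm}$, I therefore obtain the bounds $j\leq y_j\leq d^\perp$, while the strict monotonicity $d_{i'}<d_{i'+m}$ from Lemma \ref{LemmaGenRankWeights} (valid throughout the relevant range because $tm\leq k+r-1$) forces $y_1<y_2<\cdots<y_t$.

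I would then invoke the duality relation at $p=r$. The hypothesis $r\leq k^\perp$ guarantees that $V_r(\mathcal{C}^\perp)$ is non-empty with smallest element $d^\perp_r\geq d^\perp$, so $V_r(\mathcal{C}^\perp)\subseteq\{d^\perp,\ldots,n\}$ and hence $\{1,\ldots,d^\perp-1\}\subseteq\overline{V}_{r+k}(\mathcal{C})$. Using the same strict monotonicity to linearly order $\overline{V}_{r+k}(\mathcal{C})$, the values $y_1<\cdots<y_t$ are exactly the $t$ smallest elements of $\overline{V}_{r+k}(\mathcal{C})$.

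A short case split then finishes: if $d^\perp\geq t+1$, the $t$ smallest elements of $\overline{V}_{r+k}(\mathcal{C})$ are forced to be $1,2,\ldots,t$, giving $y_j=j$; if $d^\perp=t$ (the alternative $d^\perp<t$ being incompatible with the existence of $t$ distinct values $y_j\leq d^\perp$), the first $t-1$ of the $y_j$ must be $1,\ldots,t-1$, and then $y_t>t-1$ together with $y_t\leq d^\perp=t$ forces $y_t=t$. In all cases $y_j=j$, equivalently $d_{k+r-jm}=n-j+1$. The most delicate step is aligning the $t'$-indexing of $\overline{V}_{r+k}(\mathcal{C})$ with the subsequence $(y_j)_{j=1}^{t}$ so that ``first $t$ smallest'' is justified; once this identification is made the counting is routine.
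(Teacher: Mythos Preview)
Your proof is correct and follows essentially the same route as the paper: both arguments invoke the Wei-type duality $V_r(\mathcal{C}^\perp)=\{1,\ldots,n\}\setminus\overline{V}_{k+r}(\mathcal{C})$, the strict monotonicity $d_{i'}<d_{i'+m}$ to order $\overline{V}_{k+r}(\mathcal{C})$, and the $i$-BMD hypothesis to force $\{1,\ldots,t\}\subseteq\overline{V}_{k+r}(\mathcal{C})$. The only organisational difference is that the paper bounds $d_r^\perp$ directly (obtaining $d_r^\perp>n+1-d_i\ge t$ in one step, since $n+1-d_i\in\overline{V}_{k+r}(\mathcal{C})$ already rules out equality), whereas you bound each $y_j$ by $d^\perp$ and then do a short case split on $d^\perp$ versus $t$; the latter is slightly longer but equally valid.
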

\begin{proof}
	Let $\mathcal{C}$ be $i$-BMD, then $n+1-d_i\leq d^\perp\leq d_r^\perp$. We have
	\begin{equation*}
		\begin{aligned}
			V_r(\mathcal{C}^\perp):&=\left\{d_{r+jm}^\perp:0\leq j\leq\left\lfloor\frac{k^\perp-r}{m}\right\rfloor\right\},\\
			\overline{V}_{k+r}(\mathcal{C}):&=\left\{n+1-d_{k+r-jm}:1\leq j\leq \left\lfloor\frac{k+r-1}{m}\right\rfloor\right\}
		\end{aligned}
	\end{equation*}
		as $0\leq r-1\leq m-1$. Now,
		\begin{equation*}
		\begin{aligned}
			d_r^\perp&=\min(V_r(\mathcal{C}^\perp))=\min\left(\{1,2,\ldots,t\}\setminus\overline{V}_{k+r}(\mathcal{C})\right)>n+1-d_i,
		\end{aligned}
		\end{equation*}
		therefore in particular, $\{1,2,\ldots,t\}\subseteq\overline{V}_{k+r}(\mathcal{C})$.
\end{proof}

\begin{lemma}
\label{Lemmar>kperp}
	Let $i\geq 2$, $\mathcal{C}$ be minimally $i$-BMD and suppose $1\leq k^\perp<r$, $2\leq r\leq m$. Then exactly one of the following holds.
	\begin{enumerate}
	\item $i>k+1-m$, $d^\perp=1$ and $d_{k+1-jm}=n-j$ for all $1\leq j\leq \left\lfloor\frac{k}{m}\right\rfloor$.
	\item There exists an integer $1\leq s\leq \left\lfloor\frac{k}{m}\right\rfloor$ such that $k+1-(s+1)m< i\leq k+1-sm$, $d^\perp=s+1$ and
		\begin{equation*}
			\begin{cases}
				d_{k+1-jm}=n-j+1 & \textup{ if } 1\leq j\leq s,\\
				d_{k+1-jm}=n-j & \textup{ if } s+1\leq j\leq \left\lfloor\frac{k}{m}\right\rfloor.
			\end{cases}
		\end{equation*}
	\end{enumerate}
\end{lemma}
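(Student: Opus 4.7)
The plan is to apply \cite[Corollary~38]{ravagnani2016generalized} with $p=1$ to the dual code. The hypothesis $k^\perp < r \le m$ forces $k^\perp < m$ and, via $k^\perp = nm - k$, the bounds $m(n-1) < k < nm$, so $\lfloor k/m \rfloor = n-1$ and $V_1(\mathcal{C}^\perp) = \{d^\perp\}$ (since $j=0$ is the only non-negative integer with $1+jm \le k^\perp$).

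The Corollary then yields $\overline{V}_{k+1}(\mathcal{C}) = \{1,\ldots,n\}\setminus\{d^\perp\}$. On the other hand $V_{k+1}(\mathcal{C}) = \{d_{k+1-jm} : 1 \le j \le n-1\}$, whose $n-1$ elements are pairwise distinct by the strict inequality $d_i < d_{i+m}$ from Lemma~\ref{LemmaGenRankWeights}. Hence $n+1 - d_{k+1-jm}$ for $j=1,\ldots,n-1$ is the strictly increasing enumeration of $\{1,\ldots,n\}\setminus\{d^\perp\}$. Matching this enumeration against the natural increasing order of that set immediately yields $d_{k+1-jm} = n-j$ for all $1 \le j \le n-1$ when $d^\perp = 1$, and, when $d^\perp = s+1$ with $1 \le s \le n-1$, $d_{k+1-jm} = n-j+1$ for $1 \le j \le s$ together with $d_{k+1-jm} = n-j$ for $s+1 \le j \le n-1$.

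To pin down the range of $i$ I would use the definition of minimally $i$-BMD, namely $d_i > n - d^\perp$ and $d_{i-1} \le n - d^\perp$. In case (1) the identity $d_{k+1-m} = n-1 = n-d^\perp$ gives $i > k+1-m$. In case (2) the identity $d_{k+1-sm} = n-s+1 > n-d^\perp$ yields $i \le k+1-sm$, and for $s \le n-2$ the identity $d_{k+1-(s+1)m} = n-s-1 = n-d^\perp$ yields $i > k+1-(s+1)m$; when $s = n-1$ the index $k+1-(s+1)m$ is non-positive, so the lower bound holds vacuously. The two cases are disjoint and exhaustive because $d^\perp$ ranges over $\{1,\ldots,n\}$.

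The main obstacle is the careful bookkeeping required to identify the strictly increasing enumeration of $\overline{V}_{k+1}(\mathcal{C})$ with the complement of a single element in $\{1,\ldots,n\}$, and to handle the boundary case $s=n-1$ cleanly; the remaining steps are direct applications of \cite[Corollary~38]{ravagnani2016generalized}, Lemma~\ref{LemmaGenRankWeights}, and the definition of $i$-BMD.
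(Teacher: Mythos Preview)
Your argument is correct and rests on the same core tool as the paper's proof, namely the Wei-type duality $V_1(\mathcal{C}^\perp)\sqcup\overline{V}_{k+1}(\mathcal{C})=\{1,\ldots,n\}$ from \cite[Corollary~38]{ravagnani2016generalized} together with $V_1(\mathcal{C}^\perp)=\{d^\perp\}$. The organisation, however, is genuinely different.

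The paper starts by splitting on the location of $i$ (either $i>k+1-m$, or $k+1-(s+1)m<i\le k+1-sm$ for some $s$) and then, using minimality of $i$, deduces $d^\perp$ and the values $d_{k+1-jm}$; in the second case it appeals to Lemma~\ref{Lemmadi=t} to obtain $d_{k+1-jm}=n-j+1$ for $1\le j\le s$. You reverse the logic: you first make explicit that $k^\perp<r\le m$ forces $\lfloor k/m\rfloor=n-1$, so $\overline{V}_{k+1}(\mathcal{C})$ has exactly $n-1$ elements, and then use the strict monotonicity $d_a<d_{a+m}$ from Lemma~\ref{LemmaGenRankWeights} to read off all $d_{k+1-jm}$ directly as the increasing enumeration of $\{1,\ldots,n\}\setminus\{d^\perp\}$, parametrised by $d^\perp$. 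Only afterwards do you locate $i$ via the minimality condition. This buys you a cleaner bookkeeping (the case split is on a single integer $d^\perp$) and removes the need to invoke Lemma~\ref{Lemmadi=t} separately, since the enumeration argument already delivers the values $d_{k+1-jm}=n-j+1$ for $1\le j\le s$. The paper's route, in turn, makes the role of the minimality hypothesis more visible at each step and ties in naturally with the preceding lemma.
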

\begin{proof}
	Define the sets
	\begin{equation*}
		\begin{aligned}
			V_1(\mathcal{C}^\perp):&=\left\{d_{1+jm}^\perp:0\leq j\leq\left\lfloor\frac{k^\perp-1}{m}\right\rfloor\right\}=\{d^\perp\},\\
			\overline{V}_{k+1}(\mathcal{C}):&=\left\{n+1-d_{k+r-jm}:1\leq j\leq \left\lfloor\frac{k}{m}\right\rfloor\right\}.\\
		\end{aligned}
	\end{equation*}
	Suppose $i>k+1-m$. Since $\mathcal{C}$ is minimally $i$-BMD, we have $n+1-d_{k+1-jm}\geq d^\perp$ for all $1\leq j\leq \left\lfloor\frac{k}{m}\right\rfloor$. The fact that $V_1(\mathcal{C}^\perp)\sqcup\overline{V}_{k+1}(\mathcal{C})=\{1,\ldots,n\}$ implies $d^\perp=1$ and
	\begin{equation*}
		\begin{aligned}
			n+1-d_{k+1-m}&=2\\
			n+1-d_{k+1-2m}&=3\\ 
			\vdots\\
			n+1-d_{k+1-\left\lfloor\frac{k}{m}\right\rfloor m}&=\left\lfloor\frac{k}{m}\right\rfloor+1.
		\end{aligned}
	\end{equation*}
	This establishes (1). Suppose now that there exists an integer $1\leq s\leq \left\lfloor\frac{k}{m}\right\rfloor$ such that $k+1-(s+1)m< i\leq k+1-sm$ then, by Lemma \ref{Lemmadi=t}, we have $d_{k+1-jm}=n-j+1$ for $1\leq j\leq s$. On the other hand we have $n+1-d_{k+1-sm}\leq d^\perp \leq n+1-d_{k+1-(s+1)m}$ since $\mathcal{C}$ is minimally $i$-BMD. Therefore, the fact that $V_1(\mathcal{C}^\perp)\sqcup\overline{V}_{k+1}(\mathcal{C})=\{1,\ldots,n\}$ implies~$d^\perp=s+1$ and
		\begin{equation*}
		\begin{aligned}
			n+1-d_{k+1-(s+1)m}&=s+2\\
			n+1-d_{k+1-(s+2)m}&=s+3\\ 
			\vdots\\
			n+1-d_{k+1-\left\lfloor\frac{k}{m}\right\rfloor m}&=\left\lfloor\frac{k}{m}\right\rfloor+1.
		\end{aligned}
		\end{equation*}
	This concludes the proof.
\end{proof}

\begin{lemma}
\label{LemmakBMD}
	If $\mathcal{C}$ is $k$-BMD then $d_k=n$. 
\end{lemma}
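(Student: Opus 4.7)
The plan is to unfold the hypothesis $\mathcal{C}$ is $k$-BMD as $d_k+d^\perp>n$ and argue by contradiction. Combining with the general bound $d_k\leq n$ from Lemma \ref{LemmaGenRankWeights}, it suffices to rule out $d_k\leq n-1$. So I will assume $d_k<n$ and aim to exhibit a nonzero element of $\mathcal{C}^\perp$ of rank at most $n-d_k$, which would force $d^\perp\leq n-d_k$ and contradict the $k$-BMD condition.

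The mechanism that produces such an element is the anticode characterization of the generalized weights together with duality of optimal anticodes. By the definition
$$d_k=\frac{1}{m}\min\{\dimq{A}:A\in\mathcal{A},\;\dimq{A\cap\mathcal{C}}\geq k\},$$
there is some $A\in\mathcal{A}$ with $\dimq{A}=md_k$ and $\dimq{A\cap\mathcal{C}}\geq k$. Since $\dimq{\mathcal{C}}=k$ and $A\cap\mathcal{C}\leq\mathcal{C}$, the intersection condition forces $\mathcal{C}\subseteq A$, and so $A^\perp\subseteq\mathcal{C}^\perp$. By \cite[Theorem~54]{ravagnani2016rank} (quoted in Section \ref{SectionPrelim}), $A^\perp$ is again an optimal anticode, of dimension $m(n-d_k)\geq m\geq 1$ under the contradiction assumption $d_k<n$. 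In particular $A^\perp$ is nonzero, and its maximum rank is $\maxrank(A^\perp)=n-d_k$ by the equality case of \eqref{RankAnticodeBound}.

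Any nonzero element $M$ of $A^\perp$ therefore satisfies $\rank(M)\leq n-d_k$, and $M\in\mathcal{C}^\perp$, so $d^\perp\leq n-d_k$, i.e.\ $d_k+d^\perp\leq n$. This contradicts the hypothesis that $\mathcal{C}$ is $k$-BMD, completing the argument. I do not foresee a real obstacle here: the only point that needs verification is that $A^\perp$ is nonzero, which is exactly the contradiction hypothesis $d_k<n$.
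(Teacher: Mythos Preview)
Your argument is correct and follows the same overall contradiction scheme as the paper: assume $d_k<n$, deduce that $\mathcal{C}$ is contained in an optimal anticode $A\in\mathcal{A}_{d_k}$, and use this to bound $d^\perp$ from above in a way that violates the $k$-BMD inequality.

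The step where you diverge from the paper is how you extract the bound on $d^\perp$. The paper passes to an equivalent code $B\cdot\mathcal{C}$ whose matrices have a zero last row, then explicitly exhibits a rank-$1$ matrix in $(B\cdot\mathcal{C})^\perp$, concluding $d^\perp=1$ and hence $n-d_k-d^\perp=n-d_k-1\geq 0$. You instead use anticode duality directly: from $\mathcal{C}\subseteq A$ you get $A^\perp\subseteq\mathcal{C}^\perp$, and since $A^\perp\in\mathcal{A}_{n-d_k}$ is nonzero with $\maxrank(A^\perp)=n-d_k$, you obtain $d^\perp\leq n-d_k$. Your bound is weaker than the paper's ($n-d_k$ versus $1$), but it is exactly what the contradiction requires, and your route avoids the change of basis and the explicit matrix construction. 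Both arguments are short; yours is arguably the cleaner of the two, leaning on the structural fact that duals of optimal anticodes are optimal anticodes rather than on a coordinate computation.
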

\begin{proof}
	Suppose toward a contradiction that there exists a $k$-BMD code $\mathcal{C}$ with $d_k=\delta<n$. By definition of generalized rank weights we get
	\begin{align*}
			\delta &=\frac{1}{m}\min\{\dimq{A}:A\in\mathcal{A},\;\dimq{\mathcal{C}\cap A}\geq k\}\\
			&=\frac{1}{m}\min\{\dimq{A}:A\in\mathcal{A},\; \mathcal{C}\subseteq A\}
		\end{align*}
	and therefore $\mathcal{C}$ must be contained in an optimal anticode $A\in\mathcal{A}(n\times m,\delta)$. Let $B\in\gl_n(\Fq)$ be a matrix such that $$B\cdot\mathcal{C} \subseteq \Fqnm\left(\left<e_1,\ldots,e_\delta\right>_q\right).$$ Notice that in particular the last row of every matrix in $B\cdot\mathcal{C}$ must be of all zeros. Hence $(B\cdot\mathcal{C})^\perp$ must contains the matrix
	\begin{equation*}
		\begin{pmatrix}
			0 & 0 &\cdots &0\\
			\vdots & \vdots &&\vdots\\
			0 & 0 &\cdots &0\\
			1 & 0 &\cdots &0
		\end{pmatrix},
	\end{equation*}
	which implies $d^\perp=1$. Therefore, $n-d_k-d^\perp=n-\delta-1\geq 0$ and we get a contradiction. 
\end{proof}

We can now prove the main result of this section.

\begin{theorem}
\label{TheoremiBMD=>iMRD}
	If $\mathcal{C}$ is $i$-BMD then $\mathcal{C}$ is $i$-MRD.
\end{theorem}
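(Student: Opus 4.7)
The upper bound $d_i \le n - \lfloor (k-i)/m \rfloor$ is provided by Lemma \ref{LemmaGenRankWeights}, so the task reduces to the reverse inequality. I would begin by writing $i = k + r - tm$ with $1 \le r \le m$ and integer $t \ge 1$; a short computation with floor functions (treating the cases $r = m$ and $1 \le r \le m-1$ separately) shows $\lfloor (k-i)/m \rfloor = t - 1$, so the goal becomes $d_i \ge n - t + 1$. Since the $i$-BMD hypothesis rewrites as $d_i \ge n + 1 - d^\perp$, it suffices to prove $d^\perp \le t$.

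The proof would then branch on $i$. The boundary case $i = k$ is immediate from Lemma \ref{LemmakBMD}, which gives $d_k = n$ and matches $n - t + 1$ since $t = 1$ when $i = k$. The case $i = 1$ can be dispatched by a direct Singleton-style argument: the bounds $k \le m(n-d+1)$ and $k^\perp \le m(n-d^\perp+1)$ combined with $d + d^\perp > n$ force $d + d^\perp \in \{n+1, n+2\}$, which in turn forces $\mathcal{C}$ to be either MRD (when $d + d^\perp = n + 2$) or DQMRD (when $d + d^\perp = n + 1$, which by a parity argument rules out $m \mid k$). The MRD case is trivially $1$-MRD, and the DQMRD case is $1$-MRD by Lemma \ref{CorollaryCompactDQMRD}.

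For $2 \le i \le k - 1$ I would split on whether $r \le k^\perp$ or $r > k^\perp$. In the first subcase, Lemma \ref{Lemmadi=t} applied with $j = t$ directly yields $d_i = d_{k+r-tm} = n - t + 1$. In the second subcase, I would reduce to the minimally $i$-BMD case (either $\mathcal{C}$ is already so, or the smallest $i^\star \le i$ for which $\mathcal{C}$ is BMD feeds a companion analysis, using Lemma \ref{LemmajMRD} and the monotonicity of $d_j$) and invoke Lemma \ref{Lemmar>kperp}. Its alternative (1) yields $d^\perp = 1$ and forces $d_i = n$, which is consistent with the upper bound $d_i \le n-t+1$ only when $t=1$, matching the required equality. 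Its alternative (2) produces $s \in [1, \lfloor k/m\rfloor]$ with $k+1-(s+1)m < i \le k+1-sm$; substituting $i = k + r - tm$ and using $r \in [2, m]$ pins down $s = t - 1$, giving $d^\perp = s + 1 = t$ as needed.

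The main technical obstacle is the parameter bookkeeping in the last subcase: one must verify that $s = t - 1$ is forced by the location of $i$ in $(k+1-(s+1)m,\, k+1-sm]$ once $r \in [2,m]$ is imposed, and that alternative (1) of Lemma \ref{Lemmar>kperp} is internally consistent only with $t = 1$. This is purely arithmetic, but the interplay between $r$, $t$, and $s$, together with the auxiliary reduction to the minimal BMD index, makes it the step most prone to slip-ups.
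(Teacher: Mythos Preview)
Your proposal follows essentially the same architecture as the paper's proof: the same decomposition $i=k+r-tm$, the same boundary cases $i=1$ and $i=k$, the same split on $r\le k^\perp$ versus $r>k^\perp$, and the same appeal to Lemmas~\ref{Lemmadi=t} and~\ref{Lemmar>kperp}. Your arithmetic in the minimally $i$-BMD subcases is correct; in particular, the verification that $r\in[2,m]$ forces $s=t-1$ in alternative~(2) is exactly what the paper uses.

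There is one point where your framing is slightly weaker than what is actually needed. You announce the global strategy ``it suffices to prove $d^\perp\le t$'', and this is indeed enough when $\mathcal{C}$ is minimally $i$-BMD. But when $\mathcal{C}$ is minimally $i^\star$-BMD for some $i^\star<i$, the inequality $d^\perp\le t$ can \emph{fail}: applying Lemma~\ref{Lemmar>kperp} at $i^\star$ gives $d^\perp=s_\star+1$ with $s_\star$ determined by the block containing $i^\star$, and one only gets $s_\star\ge t-1$, allowing $d^\perp>t$. So the companion analysis you defer cannot be completed via the $d^\perp\le t$ route alone. The paper handles this by using the \emph{full} output of Lemma~\ref{Lemmar>kperp} (the explicit values $d_{k+1-jm}=n-j+1$ for $1\le j\le s_\star$), which pins down $d_{k+1-tm}=n-t+1$ whenever $t\le s_\star$; monotonicity of the generalized weights then squeezes $d_i$ between $d_{k+1-tm}$ and the Singleton-type upper bound. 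You correctly flag the reduction to the minimal BMD index as the main obstacle, but you should be aware that the required bookkeeping goes through the $d_{k+1-jm}$ values rather than through $d^\perp$ alone.
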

\begin{proof}
	We already observed that the codes $\{0\}$ and $\Fqnm$ are $1$-MRD, it easy to check that that they are also $1$-BMD. Indeed,
	\begin{equation*}
			n-d(\{0\})-d(\Fqnm)=n-(n+1)-1=-2<0,
	\end{equation*} 
	so now assume that $\mathcal{C}$ is non-trivial. If $\mathcal{C}$ is $1$-BMD, then $n< d+d^\perp$. It follows that
	\begin{equation*}
		d+d^\perp=\begin{cases}
			n+1 & \textup{ if } m\nmid k,\\
			n+2 & \textup{ if } m|k,
		\end{cases}
	\end{equation*}
	which implies that $\mathcal{C}$ is either MRD or DQMRD and therefore $1$-MRD.
	Lemma \ref{LemmakBMD} implies~$d_k=n$ for a $k$-BMD code. Thus, $n=n-\left\lfloor\frac{k-k}{m}\right\rfloor$ and $\mathcal{C}$ is $k$-MRD.
	 
	Let $\mathcal{C}$ be $i$-BMD for an $2\leq i\leq k-1$. If $r\leq k^\perp$ then Lemma \ref{Lemmadi=t} implies $d_i=n-t+1$. Therefore,
			\begin{equation*}
				n+1-t=d_i\leq n-\left\lfloor\frac{k-i}{m}\right\rfloor=n-\left\lfloor\frac{k-k-r+tm}{m}\right\rfloor=n-t+1
			\end{equation*}
			which implies that $\mathcal{C}$ is $i$-MRD.
			
	We now consider the case $r>k^\perp$, in which case $2\leq r\leq m$. Let $\mathcal{C}$ be minimally $j$-BMD for some $j\in\{2,\ldots,i\}$. We consider two cases, $j>k+1-m$ and $j\leq k+1-m$.
	
	In the first case, namely $j>k+1-m$, Lemma \ref{Lemmar>kperp} implies $d^\perp=1$ and $d_{k+1-m}=n-1$. Therefore,
		\begin{equation*}
			n-1=d_{k+1-m}\leq d_{j-1}<d_j\leq d_i\leq d_k=n
		\end{equation*}
		which implies that $\mathcal{C}$ is $j$-MRD and $i$-MRD. Suppose now $j\leq k+1-m$, then there exists an integer $1\leq s\leq \left\lfloor\frac{k}{m}\right\rfloor$ such that $k+1-(s+1)m< j\leq k+1-sm$. Lemma \ref{Lemmar>kperp} implies that~$\mathcal{C}$ is $j$-MRD. Indeed, if $j=k+1-sm$ then
		\begin{equation*}
			n-s+1=d_{k+1-sm}\leq n-\left\lfloor\frac{k-k-1+sm}{m}\right\rfloor=n-s+1,
		\end{equation*}
		while if $k+1-(s+1)m<j< k+1-sm$ then
		\begin{equation*}
			n-s-1=d_{k+1-(s+1)m}\leq d_{j-1}<d_j\leq d_{k-sm}\leq  n-\left\lfloor\frac{k-k+sm}{m}\right\rfloor=n-s.
		\end{equation*}
		We have shown that if $r>k^\perp$ and $\mathcal{C}$ is minimally $j$-BMD then $\mathcal{C}$ is $j$-MRD. It remains to show that $\mathcal{C}$ is also $i$-MRD in this case. Recall that $k+2-tm\leq i\leq k-(t-1)m$ and $j\leq i$. Now, in the case $k+2-tm\leq j\leq i\leq k-(t-1)m$ we have
		\begin{equation*}
			n-t+1=d_j\leq d_i\leq d_{k-(t-1)m}\leq n-\left\lfloor\frac{k-k+(t-1)m}{m}\right\rfloor=n-t+1.
		\end{equation*}
		In the case $j<k+2-tm \leq i\leq k-(t-1)m$ we have 
		\begin{equation*}
			n-t+1=d_{k+1-tm}\leq d_i\leq d_{k-(t-1)m}\leq  n-\left\lfloor\frac{k-k+(t-1)m}{m}\right\rfloor=n-t+1.
		\end{equation*}
		Finally, if $j< k+2-m\leq i$, Lemma \ref{LemmakBMD} implies $n=d_{k+1-m}\leq d_i\leq d_k=n$.
\end{proof}

The following example shows that the converse of Theorem \ref{TheoremiBMD=>iMRD} does not hold.

\begin{example}
	Let $\mathcal{C}_1$ be the code of Example \ref{Example(i)not(i+1)MRD}. It is not hard to check that $\mathcal{C}_1$ is $2$-MRD but not $1$-MRD. Indeed, 
	\begin{equation*}
			 3-\left\lfloor\frac{8-2}{4}\right\rfloor=2=d_2(\mathcal{C}_1) \quad\textup{ but }\quad 3-\left\lfloor\frac{8-1}{4}\right\rfloor=2\ne 1 =d_1(\mathcal{C}_1).
	\end{equation*}
	We want to show that $\mathcal{C}_1$ is not $2$-BMD, i.e., that $n-d_2(\mathcal{C}_1)-d(\mathcal{C}_1^\perp)\geq 0$.
	
	Define the sets
	\begin{align*}
			V_1(\mathcal{C}_1^\perp):&=\{d_{1+4j}(\mathcal{C}_1^\perp): 1\leq 1+4j\leq \dimq{\mathcal{C}_1^\perp}\}\\
			&=\{d_{1+4j}(\mathcal{C}_1^\perp): 1\leq 1+4j\leq 6\}\\
			&=\{d_1(\mathcal{C}_1^\perp),d_5(\mathcal{C}_1^\perp)\},\\
			\\
			\overline{V}_{\dimq{\mathcal{C}_1}+1}(\mathcal{C}_1)=\overline{V}_7(\mathcal{C}_1):&=\{3+1-d_{7+4j}(\mathcal{C}_1): 1\leq 1+4j\leq \dimq{\mathcal{C}_1}\}\\
			&=\{4-d_{7+4j}(\mathcal{C}_1): 1\leq 1+4j\leq 6\}\\
			&=\{4-d_3(\mathcal{C}_1)\}=\{2\}.
		\end{align*}
	We have
	\begin{equation*}
		V_1(\mathcal{C}_1^\perp)=\{1,2,3\}\setminus\overline{V}_7(\mathcal{C}_1)=\{1,2,3\}\setminus\{2\}=\{1,3\}
	\end{equation*}
	that implies $d(\mathcal{C}_1^\perp)=1$. Therefore, $\mathcal{C}_1$ is not $2$-BMD because
	\begin{equation*}
		n-d_2(\mathcal{C}_1)-d_1(\mathcal{C}_1^\perp)=3-2-1=0\geq 0.
	\end{equation*}
\end{example}

\begin{remark}
	The classes of MRD and QMRD codes partition the set of $1$-MRD codes. Theorem \ref{TheoremiBMD=>iMRD} implies that the set of $1$-BMD codes is contained in the set of $1$-MRD codes. Observe that the only codes that are $1$-MRD but not $1$-BMD are those that are QMRD but not DQMRD. Figure \ref{Fig1MRD} shows the set representation of the class of $1$-MRD codes.
	
\begin{figure}[h]

\centering
\tikzset{every picture/.style={line width=0.75pt}} 

\begin{tikzpicture}[x=0.75pt,y=0.75pt,yscale=-1,scale=0.8]

\draw   (319.82,260.13) .. controls (380.37,260.09) and (429.47,298.11) .. (429.51,345.05) .. controls (429.54,392) and (380.49,430.09) .. (319.94,430.13) .. controls (259.4,430.17) and (210.29,392.15) .. (210.26,345.2) .. controls (210.22,298.26) and (259.28,260.17) .. (319.82,260.13) -- cycle ;
\draw   (221.2,258.2) .. controls (281.74,258.16) and (330.85,296.18) .. (330.88,343.13) .. controls (330.91,390.07) and (281.86,428.16) .. (221.32,428.2) .. controls (160.77,428.25) and (111.66,390.23) .. (111.63,343.28) .. controls (111.6,296.34) and (160.65,258.25) .. (221.2,258.2) -- cycle ;

\draw  [color={rgb, 255:red, 255; green, 255; blue, 255 }  ,draw opacity=1 ][fill={rgb, 255:red, 255; green, 255; blue, 255 }  ,fill opacity=1 ]  (165.3,250.99) -- (227.3,250.99) -- (227.3,272.99) -- (165.3,272.99) -- cycle  ;
\draw (196.3,261.99) node  [align=left] {QMRD};

\draw  [color={rgb, 255:red, 255; green, 255; blue, 255 }  ,draw opacity=1 ][fill={rgb, 255:red, 255; green, 255; blue, 255 }  ,fill opacity=1 ]  (315.3,250.99) -- (373.3,250.99) -- (373.3,272.99) -- (315.3,272.99) -- cycle  ;
\draw (345.3,261.99) node  [align=left] {$1$-BMD};

\draw (378.39,345.49) node  [align=left] {MRD};

\draw (271.39,345.49) node  [align=left] {DQMRD};
\end{tikzpicture}
\caption{$1$-MRD codes.}\label{Fig1MRD}
\end{figure}
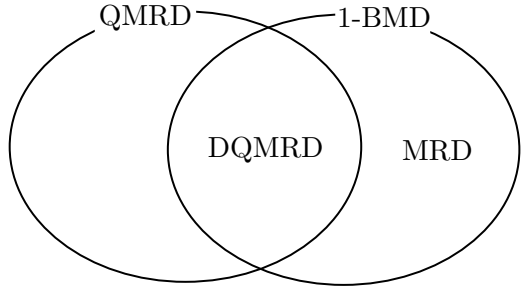

\end{remark}

\begin{theorem}
	Let $i\geq 2$. If $\mathcal{C}$ is minimally $i$-BMD then $\mathcal{C}$ is not $(i-1)$-MRD.
\end{theorem}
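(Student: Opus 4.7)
The plan is to proceed by contradiction. Suppose $\mathcal{C}$ is minimally $i$-BMD and also $(i-1)$-MRD. Then we have three ingredients: (a) $d_{i-1}=n-\lfloor (k-i+1)/m\rfloor$ from $(i-1)$-MRD; (b) $d_{i-1}+d^\perp\leq n$ since $\mathcal{C}$ is \emph{not} $(i-1)$-BMD; and (c) $d_i+d^\perp\geq n+1$ since $\mathcal{C}$ is $i$-BMD. Let $e:=\lfloor (k-i+1)/m\rfloor$. I intend to split on the value of $e$ and on whether $m$ divides $k-i+1$. The cases $\mathcal{C}=\{0\}$ and $\mathcal{C}=\Fqnm$ can be excluded at the outset, since the former is vacuous (it has $k=0$) and the latter is already $1$-BMD, hence not minimally $i$-BMD for any $i\geq 2$; in particular we may assume $d^\perp\geq 1$.

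First I would dispose of the easy case $e=0$: here $d_{i-1}=n$, and then (b) gives $d^\perp\leq 0$, a contradiction. So one may assume $e\geq 1$. From (b), $d^\perp\leq e$, and plugging this into (c) yields the lower bound $d_i\geq n+1-e$. On the other hand, Lemma~\ref{LemmaGenRankWeights} gives $d_i\leq n-\lfloor (k-i)/m\rfloor$. If $m\nmid(k-i+1)$ then $\lfloor (k-i)/m\rfloor=\lfloor (k-i+1)/m\rfloor=e$, so $d_i\leq n-e$, contradicting $d_i\geq n+1-e$.

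The remaining (and main) case is $m\mid (k-i+1)$. Write $k-i+1=em$, so $i=k+1-em$ with $1\leq e\leq\lfloor(k-1)/m\rfloor$, and now $\lfloor (k-i)/m\rfloor=e-1$, which pins down $d_i=n+1-e$ and forces $d^\perp=e$. To obtain a contradiction, I would invoke the duality for the sets $V_p$ used in the proof of Lemma~\ref{Lemmar>kperp}: by \cite[Corollary 38]{ravagnani2016generalized}, $V_1(\mathcal{C}^\perp)$ and $\overline{V}_{k+1}(\mathcal{C})$ partition $\{1,\ldots,n\}$. On one hand, $d^\perp=e$ is the minimum of $V_1(\mathcal{C}^\perp)$, so $e\in V_1(\mathcal{C}^\perp)$. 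On the other hand, $d_{k+1-em}=d_i=n+1-e$, so $n+1-d_i=e$ lies in $\overline{V}_{k+1}(\mathcal{C})$. These two sets being disjoint yields the contradiction.

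The non-trivial step is the divisible subcase, which is where the equality in the Singleton-type bound forces a collision between the smallest weight of $\mathcal{C}^\perp$ and a weight of $\mathcal{C}$ transferred under duality; without the $V_p$-duality machinery only the inequalities of Lemma~\ref{LemmaGenRankWeights} are not strong enough. The non-divisible and $e=0$ subcases, by contrast, are purely arithmetic consequences of the Singleton-type bounds.
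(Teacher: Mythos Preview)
Your proof is correct and, in fact, considerably more streamlined than the paper's. Both arguments ultimately rely on the duality $V_1(\mathcal{C}^\perp)\sqcup\overline{V}_{k+1}(\mathcal{C})=\{1,\ldots,n\}$ from \cite[Corollary~38]{ravagnani2016generalized} to dispatch the residual case, but the organization differs. The paper treats $i=2$ separately and, for $i\geq 3$, writes $k=\alpha m+\rho$ and $i=\beta m+\sigma$ and runs through a case table in the residues $(\rho,\sigma)$ to locate the few surviving configurations, which are then eliminated via Lemma~\ref{Lemmadi=t} and the $V_p$-duality. You instead pivot on the single integer $e=\lfloor(k-i+1)/m\rfloor$ and on whether $m\mid(k-i+1)$: the cases $e=0$ and $m\nmid(k-i+1)$ collapse by pure arithmetic with Lemma~\ref{LemmaGenRankWeights}, and only the divisible case $i=k+1-em$ requires the $V_p$-machinery, where your observation that $e=d^\perp\in V_1(\mathcal{C}^\perp)$ while simultaneously $e=n+1-d_i\in\overline{V}_{k+1}(\mathcal{C})$ gives the contradiction in one stroke. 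This buys you a uniform treatment of all $i\geq 2$ without a table and without isolating $i=2$; the paper's route, by contrast, makes the dependence on the residues explicit but at the cost of several parallel subcases.
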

\begin{proof}
	Suppose $i=2$ and that $\mathcal{C}$ is $1$-MRD and minimally $2$-BMD code $\mathcal{C}$. We have
	\begin{equation*}
		d^\perp>n-d_2=n-n+\left\lfloor\frac{k-2}{m}\right\rfloor=\left\lfloor\frac{k-2}{m}\right\rfloor,
	\end{equation*}
	and
	\begin{equation*}
		n-d-d^\perp=n-n+\left\lfloor\frac{k-1}{m}\right\rfloor-d^\perp=\left\lfloor\frac{k-1}{m}\right\rfloor-d^\perp<\left\lfloor\frac{k-1}{m}\right\rfloor-\left\lfloor\frac{k-2}{m}\right\rfloor\leq 0
	\end{equation*} 
	which implies that $\mathcal{C}$ is $1$-BMD, yielding a contradiction. Now assume $i\geq 3$. Since $\mathcal{C}$ is $(i-1)$-MRD and minimally $i$-BMD we have 
	\begin{equation}
	\label{Eq(i-1)MRD(i)BMD_2}
		\left\lfloor\frac{k-i}{m}\right\rfloor=n-d_i<d^\perp\leq n-d_{i-1}=\left\lfloor\frac{k-i+1}{m}\right\rfloor. 
	\end{equation}
	Recall that $k=\alpha m+\rho$. Write $i=\beta m+\sigma$ with $0\leq\sigma\leq m-1$ and $\beta\leq \alpha$. Therefore,~\eqref{Eq(i-1)MRD(i)BMD_2} can be rewritten as
	\begin{equation}
	\label{Eq(i-1)MRD(i)BMD}
		\alpha-\beta+\left\lfloor\frac{\rho-\sigma}{m}\right\rfloor<d^\perp\leq \alpha-\beta+\left\lfloor\frac{\rho-\sigma+1}{m}\right\rfloor.
	\end{equation}
	Table~\ref{TableiBMDnot(i-1)MRD} below shows the values of $\rho$, $\beta$ and $\sigma$ for which the relation in~\eqref{Eq(i-1)MRD(i)BMD} holds. Notice that $i\geq 3$ and $\beta=0$ imply $\sigma\geq 3$.

\setlength{\extrarowheight}{5pt}
\renewcommand\tabularxcolumn[1]{m{#1}}
\begin{tabularx}{\linewidth}{|>{\hsize=0.15\hsize}Y|>{\hsize=0.3\hsize}c|>{\hsize=0.35\hsize}c|>{\hsize=0.2\hsize}c|}
	\hline
			\multirow{3}{*}{$\rho=0$, $\beta\geq 1$} & $\sigma=0$ & $\alpha-\beta<d^\perp\leq \alpha-\beta$ & $\Rightarrow\!\Leftarrow$\\\cline{2-4}
			& $\sigma=1$ & $\alpha-\beta-1<d^\perp\leq \alpha-\beta$ & $d^\perp=\alpha-\beta$\\\cline{2-4}
			& $2\leq \sigma\leq m-1$ & $\alpha-\beta-1<d^\perp\leq \alpha-\beta-1$& $\Rightarrow\!\Leftarrow$\\\hline
			\multirow{5}{*}{$\rho\geq 1$, $\beta\geq 1$} & $\sigma=0$, $\rho\leq m-2$ & $\alpha-\beta<d^\perp\leq \alpha-\beta$ & $\Rightarrow\!\Leftarrow$\\\cline{2-4}
			& $\sigma=0$, $\rho=m-1$ & $\alpha-\beta<d^\perp\leq \alpha-\beta+1$ & $d^\perp=\alpha-\beta+1$\\\cline{2-4}
			& $\sigma\leq \rho$ & $\alpha-\beta<d^\perp\leq \alpha-\beta$ & $\Rightarrow\!\Leftarrow$\\\cline{2-4}
			& $\sigma=\rho+1$ &  $\alpha-\beta-1<d^\perp\leq \alpha-\beta$ & $d^\perp=\alpha-\beta$\\\cline{2-4}
			& $\rho+2\leq \sigma\leq m-1$ & $\alpha-\beta-1<d^\perp\leq \alpha-\beta-1$ & $\Rightarrow\!\Leftarrow$\\\hline
			\multirow{3}{*}{$\rho\geq 0$, $\beta=0$} & $3\leq\sigma\leq \rho$ & $\alpha<d^\perp\leq \alpha$ & $\Rightarrow\!\Leftarrow$\\\cline{2-4}
			& $\min\{3,\rho+2\}<\sigma\leq m-1$ & $\alpha-1<d^\perp\leq\alpha-1$ & $\Rightarrow\!\Leftarrow$\\\cline{2-4}
			& $\sigma=\rho+1$, $\rho\geq 2$ & $\alpha-1<d^\perp\leq\alpha$ & $d^\perp=\alpha$\\\hline
	\caption{\label{TableiBMDnot(i-1)MRD}}
\end{tabularx}
	
It remains to check that for each case in the Table~\ref{TableiBMDnot(i-1)MRD} for which $d^\perp$ is determined we get a contradiction. We show only this for the first of such cases in the table and omit the proofs for remaining cases. Assume $\rho=0$, $\beta\geq 1$ and $\sigma=1$ then we have $d^\perp =\alpha-\beta$ by Table \ref{TableiBMDnot(i-1)MRD}. On the other hand, $k=m\alpha$, $i=\beta m+1=\alpha m+1-(\alpha-\beta)m$ and we have 
\begin{equation*}
	\begin{aligned}
		V_1(\mathcal{C}^\perp)&=\{d_{1+jm}^\perp:0\leq j\leq n-\alpha-1\},\\
		\overline{V}_{\alpha m+1}&=\{n+1-d_{\alpha m+1-jm}(\mathcal{C}):1\leq j\leq \alpha\}.
	\end{aligned}
\end{equation*}
Since $\mathcal{C}$ is minimally $i$-BMD, Theorem~\ref{Lemmadi=t} implies $d_i=n-\alpha+\beta+1$ and therefore $d^\perp=\alpha-\beta+1$ which yields a contradiction.
\end{proof}

We give a graphical illustration of the families of $i$-MRD and $i$-BMD codes in Figure~\ref{FigInclusion}.

\begin{figure}[h]

\tikzset{every picture/.style={line width=0.75pt}} 
\centering
\begin{tikzpicture}[x=0.75pt,y=0.75pt,yscale=-1,scale=0.8]

\draw   (319.73,191.87) .. controls (396.85,191.82) and (459.41,240.25) .. (459.45,300.05) .. controls (459.5,359.86) and (397.01,408.38) .. (319.88,408.43) .. controls (242.75,408.49) and (180.19,360.05) .. (180.15,300.25) .. controls (180.11,240.45) and (242.6,191.92) .. (319.73,191.87) -- cycle ;
\draw   (400.77,284.41) .. controls (477.9,284.36) and (540.46,332.79) .. (540.5,392.6) .. controls (540.54,452.4) and (478.05,500.92) .. (400.92,500.97) .. controls (323.8,501.03) and (261.24,452.59) .. (261.2,392.79) .. controls (261.15,332.99) and (323.64,284.47) .. (400.77,284.41) -- cycle ;
\draw   (238.83,283.06) .. controls (315.96,283) and (378.52,331.44) .. (378.56,391.24) .. controls (378.6,451.05) and (316.11,499.57) .. (238.98,499.62) .. controls (161.86,499.68) and (99.3,451.24) .. (99.26,391.44) .. controls (99.21,331.64) and (161.7,283.11) .. (238.83,283.06) -- cycle ;

\draw  [color={rgb, 255:red, 255; green, 255; blue, 255 }  ,draw opacity=1 ][fill={rgb, 255:red, 255; green, 255; blue, 255 }  ,fill opacity=1 ]  (288.23,180.87) -- (350.23,180.87) -- (350.23,202.87) -- (288.23,202.87) -- cycle  ;
\draw (319.73,191.87) node  [align=left] {$i$-MRD};
\draw  [color={rgb, 255:red, 255; green, 255; blue, 255 }  ,draw opacity=1 ][fill={rgb, 255:red, 255; green, 255; blue, 255 }  ,fill opacity=1 ]  (64.26,380.44) -- (134.26,380.44) -- (134.26,402.44) -- (64.26,402.44) -- cycle  ;
\draw (99.26,391.44) node  [align=left] {$(i+1)$-BMD};
\draw  [color={rgb, 255:red, 255; green, 255; blue, 255 }  ,draw opacity=1 ][fill={rgb, 255:red, 255; green, 255; blue, 255 }  ,fill opacity=1 ]  (503.5,381.6) -- (577.5,381.6) -- (577.5,403.6) -- (503.5,403.6) -- cycle  ;
\draw (540.5,392.6) node  [align=left] {$(i-1)$-MRD};
\draw (321,250) node  [align=left] {(1)};
\draw (241,307) node  [align=left] {(2)};
\draw (320,337) node  [align=left] {(3)};
\draw (405,307) node  [align=left] {(4)};
\draw (195,420) node  [align=left] {(5)};
\draw (446,420) node  [align=left] {(7)};
\draw (321,440) node  [align=left] {(6)};
\draw (320.39,388.05) node  [align=left] {$i$-BMD};
\draw (321.39,374.05) node  [align=left] {\small not minimally};
\draw (237.39,335.05) node  [align=left] {\small minimally};
\draw (236.39,349.05) node  [align=left] {$i$-BMD};

\end{tikzpicture}
\caption{ Relations between $i$-MRD and $i$-BMD codes.}\label{FigInclusion}
\end{figure}
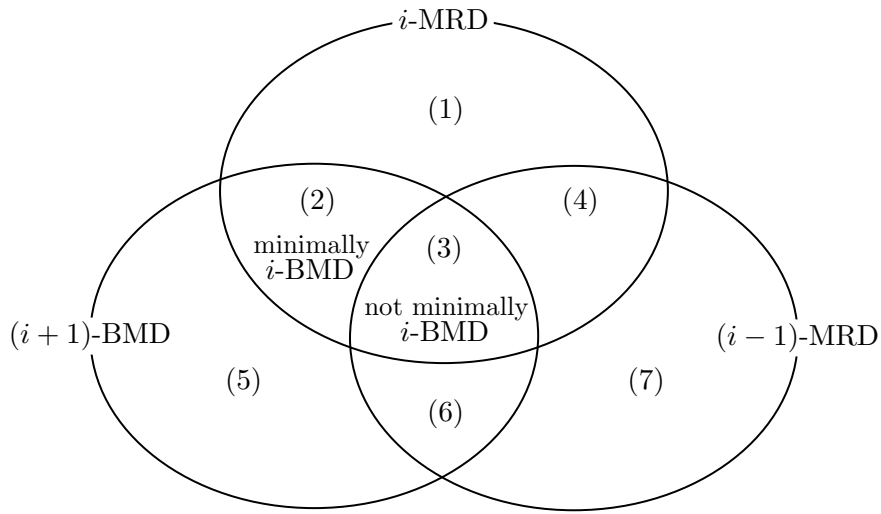

Observe that, in general, all the regions in Figure~\ref{FigInclusion} are non-empty. We illustrate this for $i=2$ giving an example of codes for each region.

\begin{enumerate}
	\item The code $\mathcal{C}_1$ in Example \ref{Example(i)not(i+1)MRD} is $2$-MRD, not $1$-MRD and not $3$-BMD.

\medskip

	\item The code $\mathcal{C}_2$ of dimension $6$ generated by
			\begin{equation*}
				\footnotesize
				\left\{\begin{pmatrix}
                1 & 0 & 0 & 0\\
                1 & 0 & 1 & 0\\
                0 & 0 & 0 & 0
                \end{pmatrix},
                \begin{pmatrix}
                0 & 1 & 0 & 0\\
                1 & 0 & 0 & 1\\
                1 & 1 & 0 & 1
                \end{pmatrix},
                \begin{pmatrix}
                0 & 0 & 1 & 0\\
                1 & 0 & 0 & 1\\
                0 & 1 & 0 & 1
                \end{pmatrix},
                \begin{pmatrix}
                0 & 0 & 0 & 1\\
                0 & 0 & 1 & 0\\
                0 & 1 & 0 & 0
                \end{pmatrix},
                \begin{pmatrix}
                0 & 0 & 0 & 0\\
                0 & 1 & 0 & 1\\
                0 & 1 & 0 & 0
                \end{pmatrix},
                \begin{pmatrix}
                0 & 0 & 0 & 0\\
                0 & 0 & 0 & 0\\
                0 & 0 & 1 & 0
                \end{pmatrix}\right\}
			\end{equation*} 
			 is minimally $2$-BMD, $2$-MRD, $3$-BMD but not $1$-MRD.

\medskip

	\item The code $\mathcal{C}_3$ of dimension $6$ generated by
			\begin{equation*}
			\footnotesize
				\left\{\begin{pmatrix}
                1 & 0 & 0 & 0\\
                0 & 1 & 1 & 0\\
                0 & 1 & 0 & 0
                \end{pmatrix},
                \begin{pmatrix}
                0 & 1 & 0 & 0\\
                0 & 1 & 0 & 0\\
                1 & 0 & 1 & 1
                \end{pmatrix},
                \begin{pmatrix}
                0 & 0 & 1 & 0\\
                0 & 1 & 0 & 0\\
                0 & 1 & 0 & 1
                \end{pmatrix},
                \begin{pmatrix}
                0 & 0 & 0 & 1\\
                0 & 0 & 1 & 0\\
                1 & 1 & 0 & 1
                \end{pmatrix},
                \begin{pmatrix}
                0 & 0 & 0 & 0\\
                1 & 1 & 1 & 0\\
                0 & 1 & 1 & 0
                \end{pmatrix},
                \begin{pmatrix}
                0 & 0 & 0 & 0\\
                0 & 0 & 0 & 1\\
                1 & 1 & 0 & 0
                \end{pmatrix}\right\}
			\end{equation*}
			is $1$-BMD and therefore $2$-BMD, $2$-MRD and $3$-BMD.

\medskip

	\item The code $\mathcal{C}_4$ of dimension $6$ generated by	
			\begin{equation*}
				\footnotesize
				\left\{
				\begin{pmatrix}
                1 & 0 & 0 & 0\\
                0 & 0 & 0 & 1\\
                0 & 0 & 1 & 0
                \end{pmatrix},
                \begin{pmatrix}
                0 & 1 & 0 & 0\\
                0 & 0 & 1 & 0\\
                0 & 0 & 1 & 1
                \end{pmatrix},
                \begin{pmatrix}
                0 & 0 & 1 & 0\\
                0 & 0 & 0 & 1\\
                0 & 1 & 1 & 0
                \end{pmatrix},
                \begin{pmatrix}
                0 & 0 & 0 & 1\\
                0 & 0 & 1 & 1\\
                1 & 0 & 0 & 0
                \end{pmatrix},
                \begin{pmatrix}
                0 & 0 & 0 & 0\\
                1 & 0 & 0 & 1\\
                0 & 1 & 0 & 0
                \end{pmatrix},
                \begin{pmatrix}
                0 & 0 & 0 & 0\\
                0 & 1 & 0 & 0\\
                1 & 0 & 1 & 0
                \end{pmatrix}\right\}
			\end{equation*}
			is $1$-MRD, $2$-MRD but not $3$-BMD.

\medskip

	\item The code $\mathcal{C}_5$ of dimension $6$ generated by
			\begin{equation*}
			\footnotesize
				\left\{
				\begin{pmatrix}
                1 & 0 & 0 & 0\\
                0 & 0 & 0 & 1\\
                0 & 1 & 1 & 1
                \end{pmatrix},
                \begin{pmatrix}
                0 & 1 & 0 & 0\\
                0 & 0 & 0 & 0\\
                1 & 1 & 0 & 0
                \end{pmatrix},
                \begin{pmatrix}
                0 & 0 & 1 & 0\\
                0 & 0 & 0 & 1\\
                0 & 0 & 1 & 1
                \end{pmatrix},
                \begin{pmatrix}
                0 & 0 & 0 & 1\\
                0 & 0 & 1 & 0\\
                0 & 0 & 0 & 0
                \end{pmatrix},
                \begin{pmatrix}
                0 & 0 & 0 & 0\\
                1 & 0 & 1 & 1\\
                0 & 1 & 0 & 0
                \end{pmatrix},
                \begin{pmatrix}
                0 & 0 & 0 & 0\\
                0 & 1 & 0 & 1\\
                0 & 0 & 1 & 0
                \end{pmatrix}\right\}
			\end{equation*}
			is $3$-BMD, not $1$-MRD and not $2$-MRD.

\medskip

	\item The code $\mathcal{C}_6$ of dimension $5$ generated by
			\begin{equation*}
			\footnotesize
				\left\{
				\begin{pmatrix}
                1 & 0 & 1 & 0\\
                0 & 0 & 1 & 0\\
                0 & 0 & 0 & 1
                \end{pmatrix},
                \begin{pmatrix}
                0 & 1 & 1 & 0\\
                0 & 0 & 0 & 0\\
                1 & 1 & 0 & 0
                \end{pmatrix},
                \begin{pmatrix}
                0 & 0 & 0 & 1\\
                0 & 0 & 1 & 1\\
                1 & 0 & 0 & 0
                \end{pmatrix},
                \begin{pmatrix}
                0 & 0 & 0 & 0\\
                1 & 0 & 1 & 1\\
                1 & 0 & 0 & 0
                \end{pmatrix},
                \begin{pmatrix}
                0 & 0 & 0 & 0\\
                0 & 1 & 1 & 1\\
                0 & 0 & 1 & 0
                \end{pmatrix}\right\}
			\end{equation*}
			is $1$-MRD, $3$-BMD but not $2$-MRD.

\medskip

	\item The code $\mathcal{C}_7$ of dimension $5$ generated by
			\begin{equation*}
				\footnotesize
				\left\{
				\begin{pmatrix}
                1 & 0 & 0 & 0\\
                0 & 1 & 1 & 0\\
                0 & 0 & 0 & 1
                \end{pmatrix},
                \begin{pmatrix}
                0 & 1 & 0 & 0\\
                0 & 0 & 1 & 0\\
                0 & 1 & 0 & 0
                \end{pmatrix},
                \begin{pmatrix}
                0 & 0 & 1 & 0\\
                0 & 1 & 0 & 0\\
                1 & 0 & 0 & 1
                \end{pmatrix},
                \begin{pmatrix}
                0 & 0 & 0 & 1\\
                0 & 0 & 1 & 0\\
                1 & 1 & 0 & 0
                \end{pmatrix},
                \begin{pmatrix}
                0 & 0 & 0 & 0\\
                0 & 0 & 0 & 1\\
                1 & 1 & 1 & 0
                \end{pmatrix}\right\}
			\end{equation*} 
			is $1$-MRD, not $2$-MRD and not $3$-BMD.
\end{enumerate}

We conclude this section with an example of two minimally $i$-BMD codes of the same dimension with the property that their dual codes are not minimally $j$-BMD for the same $0\leq j\leq k^\perp$. This shows that the property of being minimally $i$-BMD does not obey a duality statement.

\begin{example}
\label{Examplenotduality}
	Let $\mathcal{C}_1$ be the code of Example \ref{Example(i)not(i+1)MRD} and $\mathcal{C}_4$ the code defined above. One can check that the generalized rank weight distribution of $\mathcal{C}_4$ is
    \begin{equation*}
    	d_1(\mathcal{C}_4)=d_2(\mathcal{C}_4)=d_3(\mathcal{C}_4)=2, \qquad d_4(\mathcal{C}_4)=d_5(\mathcal{C}_4)=d_6(\mathcal{C}_4)=3,
    \end{equation*} 
    the generalized rank weight distribution of $\mathcal{C}_1^\perp$ is
    \begin{equation*}
    	d_1(\mathcal{C}_1^\perp)=1, \qquad d_2(\mathcal{C}_1^\perp)=d_3(\mathcal{C}_1^\perp)=2, \qquad d_4(\mathcal{C}_1^\perp)=d_5(\mathcal{C}_1^\perp)=d_6(\mathcal{C}_1^\perp)=3,
    \end{equation*}
    the generalized rank weight distribution of $\mathcal{C}_4^\perp$ is
    \begin{equation*}
    	d_1(\mathcal{C}_4^\perp)=1, \qquad d_2(\mathcal{C}_4^\perp)=2,\qquad d_3(\mathcal{C}_4^\perp)=d_4(\mathcal{C}_4^\perp)=d_5(\mathcal{C}_4^\perp)=d_6(\mathcal{C}_4^\perp)=3
    \end{equation*}
    and
    \begin{equation*}
    	\begin{aligned}
    		n-d_1(\mathcal{C}_1^\perp)-d_3(\mathcal{C}_1)=n-d_1(\mathcal{C}_4^\perp)-d_3(\mathcal{C}_4)&=3-2-1=0\geq 0,\\
    		n-d_1(\mathcal{C}_1^\perp)-d_4(\mathcal{C}_1)=n-d_1(\mathcal{C}_4^\perp)-d_4(\mathcal{C}_4)&=3-3-1=-1<0,\\
    		n-d_1(\mathcal{C}_1)-d_3(\mathcal{C}_1^\perp)&=3-2-1=0\geq 0,\\
    		n-d_1(\mathcal{C}_1)-d_4(\mathcal{C}_1^\perp)&=3-3-1=-1<0,\\
    		n-d_1(\mathcal{C}_2)-d_2(\mathcal{C}_4^\perp)&=3-2-1=0\geq0,\\
    		n-d_1(\mathcal{C}_2)-d_3(\mathcal{C}_4^\perp)&=3-3-1=-1<0.\\
    	\end{aligned}
    \end{equation*}
    Therefore, we get that both $\mathcal{C}_1$ and $\mathcal{C}_4$ are minimally $4$-BMD, $\mathcal{C}_1^\perp$ is minimally $4$-BMD, while~$\mathcal{C}_4^\perp$ is minimally $2$-BMD.
\end{example}

\section{The Generalized Zeta Function}
\label{SectionZeta}
Inspired by the work in~\cite{blanco2018rank}, in this section we define and study the zeta function for generalized rank weights. Throughout this section we work in the polynomial rings~$\mathbb{Q}[T]$ and $\mathbb{Q}[X,Y,T]$. 

\begin{definition}
	The \textbf{$i$-th generalized zeta function} $Z_\mathcal{C}^{(i)}(T)$ of $\mathcal{C}$ is the generating function of the $i$-th generalized normalized binomial moments of $\mathcal{C}$, i.e.,
	\begin{equation}
	\label{EqZeta}
		Z_\mathcal{C}^{(i)}(T):=\sum_{u\in\mathbb{Z}}b_u^{(i)}(\mathcal{C})T^u=\sum_{u\geq 0}b_u^{(i)}(\mathcal{C})T^u.
	\end{equation}
\end{definition}

The following result is the rank-metric analogue of \cite[Theorem~3.8]{jurrius2012codes}

\begin{theorem}
\label{TheoremZetaPol}
	There exists a unique polynomial $P_\mathcal{C}^{(i)}(T) \in \mathbb{Q}[T]$ 
	such that 
	\begin{equation}
	\label{EqZetaP}
		Z_\mathcal{C}^{(i)}(T)=\frac{P_\mathcal{C}^{(i)}(T)}{\prod_{j=0}^i(1-q^{mj}T)}.
	\end{equation}
	More precisely, the coefficient of $T^u$ in $P_\mathcal{C}^{(i)}(T)$ is 
	\begin{equation}
	\label{EqCoeffP}
		p_u^{(i)}(\mathcal{C})=\sum_{j=0}^{i+1}\qmbin{i+1}{j}(-1)^jq^{m\binom{j}{2}}b_{u-j}^{(i)}(\mathcal{C})
	\end{equation}
	and the degree of $P_\mathcal{C}^{(i)}(T)$ is at most $n-d^\perp-d_i+i+1$.
\end{theorem}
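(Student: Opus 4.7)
The plan is to define $P_\mathcal{C}^{(i)}(T) := Z_\mathcal{C}^{(i)}(T) \cdot \prod_{j=0}^i (1 - q^{mj}T)$ as an element of $\mathbb{Q}[[T]]$, and then to verify that it is a polynomial whose coefficients satisfy \eqref{EqCoeffP} and whose degree respects the claimed bound. Uniqueness is automatic, since any two polynomials that represent the same rational function with the nonzero denominator $\prod_{j=0}^i (1 - q^{mj}T)$ must coincide in the integral domain $\mathbb{Q}[T]$. The coefficient identity \eqref{EqCoeffP} follows immediately from Lemma~\ref{LemmaTools}(2), applied with $q$ replaced by $q^m$, $a = 1$, $b = T$, and $c = i+1$, which yields
\begin{equation*}
	\prod_{j=0}^i (1 - q^{mj}T) = \sum_{j=0}^{i+1} \qmbin{i+1}{j} (-1)^j q^{m\binom{j}{2}} T^j;
\end{equation*}
reading off the coefficient of $T^u$ in the product $Z_\mathcal{C}^{(i)}(T) \cdot \prod_{j=0}^i (1 - q^{mj}T)$ then produces exactly $p_u^{(i)}(\mathcal{C})$.

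The main technical obstacle is the degree bound, namely that $p_u^{(i)}(\mathcal{C}) = 0$ for every $u > n - d^\perp - d_i + i + 1$. The key observation is that for such $u$ and every $j \in \{0,1,\ldots,i+1\}$, the index $u-j$ lies strictly above $n - d^\perp - d_i$, so $b_{u-j}^{(i)}(\mathcal{C})$ agrees with the closed form $\qbin{A - mj}{i}$, where $A := k - m(n - u - d_i)$ is independent of $j$. In the subsidiary case $u - j < 0$ the actual value vanishes by the first case of the definition, and one verifies that $\qbin{A-mj}{i}$ vanishes there too: setting $p = n - u - d_i + j$, the assumption $u > n - d^\perp - d_i + i + 1$ combined with Lemma~\ref{LemmaGenRankWeights} and the Singleton bound \eqref{RankSingletonBound} applied to $\mathcal{C}^\perp$ forces $0 \leq k - mp \leq i-1$, so $\qbin{k-mp}{i} = 0$.

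Having reduced to the closed form for every $j$, I expand
\begin{equation*}
	\qbin{A-mj}{i} = \frac{1}{\prod_{\ell=1}^i (q^\ell - 1)} \prod_{\ell=0}^{i-1} \bigl( q^{A-\ell}\, q^{-mj} - 1 \bigr)
\end{equation*}
as a polynomial in $y := q^{-mj}$ of degree at most $i$, writing $\qbin{A-mj}{i} = \sum_{s=0}^i \alpha_s (q^{-ms})^j$ with coefficients $\alpha_s \in \mathbb{Q}$ depending only on $A$. Substituting this into \eqref{EqCoeffP} and interchanging the two finite sums gives $p_u^{(i)}(\mathcal{C}) = \sum_{s=0}^i \alpha_s \sum_{j=0}^{i+1} \qmbin{i+1}{j}(-1)^j q^{m\binom{j}{2}}(q^{-ms})^j$. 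A second application of Lemma~\ref{LemmaTools}(2), now with $b = q^{-ms}$, evaluates each inner sum as $\prod_{j=0}^i (1 - q^{m(j-s)})$, and for every $s \in \{0,1,\ldots,i\}$ this product contains the vanishing factor $1 - q^0 = 0$ corresponding to $j = s$. Hence $p_u^{(i)}(\mathcal{C}) = 0$ whenever $u > n - d^\perp - d_i + i + 1$, completing the proof.
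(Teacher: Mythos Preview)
Your proof is correct and follows essentially the same approach as the paper's: define $P_\mathcal{C}^{(i)}(T)$ as the product, extract the coefficient formula via Lemma~\ref{LemmaTools}(2), and verify vanishing for $u>n-d^\perp-d_i+i+1$ by reducing each $b_{u-j}^{(i)}(\mathcal{C})$ to the closed form $\qbin{A-mj}{i}$ and then applying Lemma~\ref{LemmaTools}(2) once more. The paper compresses this last step into the phrase ``a standard computation using Lemma~\ref{LemmaTools}''; your expansion of $\qbin{A-mj}{i}$ as a polynomial of degree at most $i$ in $q^{-mj}$, followed by the product identity $\prod_{j=0}^{i}(1-q^{m(j-s)})=0$ for $0\le s\le i$, is precisely the intended computation. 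Your treatment of the boundary case $u-j<0$ (showing that the closed form still returns $0$ there, via the Singleton bound for $\mathcal{C}^\perp$ and the inequality $d_i\le n-\lfloor(k-i)/m\rfloor$ from Lemma~\ref{LemmaGenRankWeights}) is more careful than the paper, which simply asserts the closed form without addressing this range.
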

\begin{proof}
	Define the power series
	\begin{equation*}
		P_\mathcal{C}^{(i)}(T):=Z_\mathcal{C}(T)\prod_{j=0}^i(1-q^{mj}T).
	\end{equation*} 
	Using Lemma~\ref{LemmaTools} one can show that the coefficient of $T^u$ in $P_\mathcal{C}^{(i)}(T)$ is exactly the quantity in~\eqref{EqCoeffP}. Let $u>n-d^\perp-d_i+i+1$. We have
	\begin{equation*}
		b_{u-j}^{(i)}(\mathcal{C})=\qbin{k-m(n-u+j-d_j)}{j}
	\end{equation*}
	by definition. A standard computation using Lemma~\ref{LemmaTools} shows that
	\begin{equation*}
	\sum_{j=0}^{i+1}\qmbin{i+1}{j}(-1)^jq^{m\binom{j}{2}}b_{u-j}^{(i)}(\mathcal{C})=0.
	\end{equation*}
	Therefore $P_\mathcal{C}^{(i)}(T)$ is a polynomial and its degree is at most $n-d^\perp-d_i+i+1$.
\end{proof}

We call the polynomial $P_\mathcal{C}^{(i)}(T)$ in~\eqref{EqZetaP} the \textbf{$i$-th generalized zeta polynomial} of~$\mathcal{C}$.

It is interesting to observe that for an $i$-BMD code the generalized zeta polynomials and zeta functions are partially determined by the code dimension, as the next result shows. The proof easily follows from  Theorem~\ref{TheoremiBMD=>iMRD}.

\begin{corollary} 
\label{cor:several}
	Suppose that $\mathcal{C}$ is $i$-BMD. The following hold for $i\leq j\leq k$.
	\begin{enumerate}
		\item $b_u^{(j)}(\mathcal{C})=\qbin{\rho+m\left(u-\left\lfloor\frac{\rho-j}{m}\right\rfloor\right)}{j}$, for $u\geq 0$.
		\item $\displaystyle Z_\mathcal{C}^{(j)}(T)=\sum_{u\geq 0}\qbin{\rho+m\left(u-\left\lfloor\frac{\rho-j}{m}\right\rfloor\right)}{j}T^u$.
		\item $\displaystyle p_u^{(j)}(\mathcal{C})=\sum_{t=0}^{j+1}\qmbin{j+1}{t}(-1)^tq^{m\binom{t}{2}}\qbin{\rho+m\left(u-t-\left\lfloor\frac{\rho-j}{m}\right\rfloor\right)}{j}$, for $u\geq 0$.
		\item $\deg(P_\mathcal{C}^{(j)}(T))\leq \min\left\{j,\left\lfloor\frac{k-j}{m}\right\rfloor+j+1-d^\perp\right\}$.
	\end{enumerate}
\end{corollary}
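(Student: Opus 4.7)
The plan is to deduce the corollary from Theorem~\ref{TheoremiBMD=>iMRD} together with Lemma~\ref{LemmajMRD}, which jointly imply that if $\mathcal{C}$ is $i$-BMD then it is $j$-BMD, and therefore $j$-MRD, for every $i \le j \le k$. This yields the closed form $d_j = n - \lfloor(k-j)/m\rfloor$. Writing $k = \alpha m + \rho$ with $0 \le \rho \le m - 1$, one rewrites $d_j = n - \alpha - \lfloor(\rho - j)/m\rfloor$, and a direct substitution gives
$$k - m(n - u - d_j) \;=\; \rho + m\!\left(u - \left\lfloor\frac{\rho - j}{m}\right\rfloor\right).$$
Plugging this into Lemma~\ref{LemmabuBMD}(1) proves part (1). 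Part (2) is then immediate from the definition~\eqref{EqZeta} of the zeta function, and part (3) follows by substituting the closed form for $b_{u-t}^{(j)}(\mathcal{C})$ into formula~\eqref{EqCoeffP} of Theorem~\ref{TheoremZetaPol}.

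For part (4), the first of the two bounds in the minimum is inherited directly from Theorem~\ref{TheoremZetaPol}, which gives $\deg P_\mathcal{C}^{(j)}(T) \le n - d^\perp - d_j + j + 1$; substituting the closed form for $d_j$ rewrites this as $\lfloor(k-j)/m\rfloor + j + 1 - d^\perp$. To establish the complementary bound $\deg P_\mathcal{C}^{(j)}(T) \le j$, I would show that the coefficient $p_u^{(j)}(\mathcal{C})$ given by~\eqref{EqCoeffP} vanishes for every $u \ge j + 1$. In that range, setting $C := \rho + m\!\left(u - \lfloor(\rho-j)/m\rfloor\right)$, every summand has the form $b_{u-t}^{(j)}(\mathcal{C}) = \qbin{C - mt}{j}$, and the product formula for the $q$-binomial shows that, as a function of $t$, this is a polynomial of degree $j$ in the variable $q^{-mt}$. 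Exchanging the order of summation and applying Lemma~\ref{LemmaTools}(2) in base $q^m$ to each monomial $(q^{-ms})^t$ with $s \in \{0, \ldots, j\}$, the resulting inner sum evaluates to $\prod_{r=0}^{j}(1 - q^{m(r-s)})$, which vanishes because of the factor indexed by $r = s$.

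The main obstacle is precisely the bound $\deg P_\mathcal{C}^{(j)}(T) \le j$ in part (4); parts (1)--(3) reduce to formal substitution once the closed form for $d_j$ is available. The technical point is recognizing that $\qbin{C - mt}{j}$ factors as a degree-$j$ polynomial in $q^{-mt}$, which converts the otherwise opaque alternating sum in~\eqref{EqCoeffP} into a linear combination of specializations of Lemma~\ref{LemmaTools}(2), each collapsing to zero. Every other ingredient is a direct consequence of Theorem~\ref{TheoremiBMD=>iMRD} and the formulae already proved in Section~\ref{SectioniMRD}.
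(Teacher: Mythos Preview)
Your approach matches the paper's (which says only that the result ``easily follows from Theorem~\ref{TheoremiBMD=>iMRD}''): extract $d_j = n - \lfloor(k-j)/m\rfloor$ via Lemma~\ref{LemmajMRD} and Theorem~\ref{TheoremiBMD=>iMRD}, then substitute throughout Lemma~\ref{LemmabuBMD}, the definition of $Z_\mathcal{C}^{(j)}$, and Theorem~\ref{TheoremZetaPol}.

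You over-engineer part~(4), however. You flag $\deg P_\mathcal{C}^{(j)}(T) \le j$ as ``the main obstacle'' and devote a separate argument to it, but it is already implied by the other bound: since $\mathcal{C}$ is $j$-BMD we have $n - d^\perp - d_j \le -1$, so Theorem~\ref{TheoremZetaPol} directly gives
\[
\deg P_\mathcal{C}^{(j)}(T) \;\le\; n - d^\perp - d_j + j + 1 \;\le\; j,
\]
and substituting $d_j = n - \lfloor(k-j)/m\rfloor$ rewrites the first inequality as $\lfloor(k-j)/m\rfloor + j + 1 - d^\perp$. Thus the minimum in part~(4) is always attained by its second argument, and your computation via Lemma~\ref{LemmaTools}(2), while correct, is unnecessary.
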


Note that $b_u^{(j)}(\mathcal{C})$ for $u\geq 0$ only depends on $\rho$. This implies that if $\mathcal{C}$ and $\mathcal{D}$ are $i$-BMD codes such that $\dimq{\mathcal{C}}\equiv\dimq{\mathcal{D}}\equiv\rho \mod m$, then $b_u^{(j)}(\mathcal{C})=b_u^{(j)}(\mathcal{D})$ for every $u\geq 0$. 

\begin{remark}
\label{RemarkbuBMD}
	Corollary~\ref{cor:several} implies that if $\mathcal{C}$ is $1$-BMD and $m|k$, then $P_\mathcal{C}^{(1)}(T)$ is a polynomial of degree $0$. More precisely,
	\begin{equation*}
		P_\mathcal{C}^{(1)}(T)=\qbin{m}{1}.
	\end{equation*}
	This recovers \cite[Lemma~4]{blanco2018rank}.
\end{remark}

\begin{notation} \label{not:things}
	In the sequel, for $j,u\in\mathbb{Z}_{\geq 0}$ and $\tau\in\{0,\ldots,m-1\}$ we let
	\begin{equation*}
	\begin{aligned}
		b_{\tau,u}^{(j)}&:=\qbin{\tau+m\left(u-\left\lfloor\frac{\tau-j}{m}\right\rfloor\right)}{j},\\
		p_{\tau,u}^{(j)}&:=\sum_{t=0}^{j+1}\qmbin{j+1}{t}(-1)^tq^{m\binom{t}{2}}\qbin{\tau+m\left(u-t-\left\lfloor\frac{\tau-j}{m}\right\rfloor\right)}{j},\\
		Z_{\tau}^{(j)}(T)&:=\sum_{u\geq 0}b_{\tau,u}^{(j)}T^u,\\
		P_{\tau}^{(i)}(T)&:=\sum_{u\geq 0}p_{\tau,u}^{(j)}T^u.
	\end{aligned}
	\end{equation*}
\end{notation}

Observe that, by Corollary~\ref{cor:several}, the objects $b_{\tau,u}^{(j)}$,
 $p_{\tau,u}^{(j)}$,
 $Z_{\tau}^{(j)}(T)$,
$P_{\tau}^{(i)}(T)$
in Notation~\ref{not:things} are those associated with 
a $j$-BMD code of dimension $k \equiv \tau \mod m$,
provided that such a code exists.

The next result shows a precise connection, via the $q$-Bernstein polynomials, among the $i$-th generalized zeta function, the $i$-th generalized rank weight enumerator, and the $i$-th generalized rank weight.

\begin{lemma}
\label{LemmaWivarphi}
	Let $\varphi_n(X,Y,T)$ be the function defined by
	\begin{equation*}
		\varphi_n(X,Y,T):=\sum_{u=0}^n\mathcal{B}_{n,u}(X,Y;q)T^{n-u}.
	\end{equation*}
	Then $W_\mathcal{C}^{(i)}(X,Y)$ is the coefficient of $T^{n-d_i}$ in the expression $Z_\mathcal{C}^{(i)}(T)\varphi_n(X,Y,T)$.
\end{lemma}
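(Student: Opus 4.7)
The plan is essentially a direct computation: expand the product $Z_\mathcal{C}^{(i)}(T)\varphi_n(X,Y,T)$ as a double sum, extract the coefficient of $T^{n-d_i}$, and recognize the result as the expansion of $W_\mathcal{C}^{(i)}(X,Y)$ in the $q$-Bernstein basis from Theorem~\ref{TheoremWBern}.

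More concretely, I would start by multiplying the two generating functions:
\begin{equation*}
Z_\mathcal{C}^{(i)}(T)\,\varphi_n(X,Y,T) = \left(\sum_{u\geq 0} b_u^{(i)}(\mathcal{C})\, T^u\right)\!\left(\sum_{v=0}^n \mathcal{B}_{n,v}(X,Y;q)\, T^{n-v}\right) = \sum_{u\geq 0}\sum_{v=0}^n b_u^{(i)}(\mathcal{C})\, \mathcal{B}_{n,v}(X,Y;q)\, T^{u+n-v}.
\end{equation*}
The coefficient of $T^{n-d_i}$ is obtained by restricting to pairs $(u,v)$ with $u+n-v = n-d_i$, i.e.\ $u = v-d_i$. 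Since $u \geq 0$, this forces $v \geq d_i$, and the range $v \leq n$ gives
\begin{equation*}
[T^{n-d_i}]\left(Z_\mathcal{C}^{(i)}(T)\,\varphi_n(X,Y,T)\right) = \sum_{v=d_i}^n b_{v-d_i}^{(i)}(\mathcal{C})\, \mathcal{B}_{n,v}(X,Y;q).
\end{equation*}

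By Theorem~\ref{TheoremWBern}, the right-hand side is precisely $W_\mathcal{C}^{(i)}(X,Y)$, which concludes the argument. There is essentially no obstacle: the only points requiring attention are that the lower summation index $v=d_i$ matches the expansion in Theorem~\ref{TheoremWBern}, and that no contributions come from $v < d_i$ because $b_u^{(i)}(\mathcal{C}) = 0$ for $u < 0$ by the definition of normalized generalized binomial moments. The identification is then immediate.
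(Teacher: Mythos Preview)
Your proof is correct and follows essentially the same approach as the paper: both expand the product of the two series, extract the coefficient of $T^{n-d_i}$, and invoke Theorem~\ref{TheoremWBern}. Your version is in fact slightly more direct, since you read off the coefficient immediately from the condition $u+n-v=n-d_i$, whereas the paper first reindexes the double sum modulo $T^{n+1}$ before doing so; the underlying argument is the same.
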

\begin{proof}
	We have
	\begin{align*}
			Z_\mathcal{C}^{(i)}(T)\varphi_n(X,Y,T)&=\left(\sum_{u\geq 0}b_u^{(i)}(\mathcal{C})T^u\right)\left(\sum_{u=0}^n\mathcal{B}_{n,u}(X,Y;q)T^{n-u}\right)\\
			&=\sum_{u\geq 0}\sum_{t=0}^nb_u^{(i)}(\mathcal{C})\mathcal{B}_{n,t}(X,Y;q)T^{n-t+u}\\
			&\equiv \sum_{t=0}^nT^t\sum_{u=0}^tb_u^{(i)}(\mathcal{C})\mathcal{B}_{n,n-t+u}(X,Y;q) \mod T^{n+1}\\
			&\equiv \sum_{t=0}^nT^{n-t}\sum_{u=0}^{n-t}b_u^{(i)}(\mathcal{C})\mathcal{B}_{n,t+u}(X,Y;q) \mod T^{n+1}.
		\end{align*}
	Therefore the coefficient of $T^{n-d_i}$ in $Z_\mathcal{C}^{(i)}(T)\varphi_n(X,Y,T)$ is
	\begin{equation*}
		\sum_{u=0}^{n-d_i}b_u^{(i)}(\mathcal{C})\mathcal{B}_{n,u+d_i}(X,Y;q)=W_\mathcal{C}^{(i)}(X,Y),
	\end{equation*}
	where the last equality follows from Theorem~\ref{TheoremWBern}.
\end{proof}

\begin{example}
	Let $\mathcal{C}_1$ be the code of Example \ref{Example(i)not(i+1)MRD} and recall that $d_2(\mathcal{C}_1)=2$. One can check that
	\begin{equation*}
		\begin{aligned}
		\varphi_n(X,Y,T) \ = \ &Y^3+(7XY^2-7Y^3)T+(7X^2Y-21XY^2+14Y^3)T^2\\
				&+(X^3-7X^2Y+14XY^2-8Y^3)T^3,\\
		Z_{\mathcal{C}_1}^{(2)}(T)\ =\ &\frac{13}{7}+651T+174251T^2+44731051T^3+11453115051T^4+\ldots, \\
		W_{\mathcal{C}_1}^{(2)}(X,Y) \ = \ &13XY^2 + 638Y^3.
		\end{aligned}
	\end{equation*}
	Therefore $W_{\mathcal{C}_1}^{(2)}(X,Y)$ is the coefficient of $T^{n-d_2(\mathcal{C}_1)}=T$ in
	\begin{equation*}
		\begin{aligned}
			Z_{\mathcal{C}_1}^{(2)}(T)\varphi_n(X,Y,Z)=&\frac{13}{7}Y^3+(13XY^2 + 638Y^3)T\\& +(13X^2Y + 4518XY^2 + 169720Y^3)T^2+\ldots,
		\end{aligned}
	\end{equation*}
	as predicted by Lemma~\ref{LemmaWivarphi}.
\end{example}

We conclude this section with a result on generalized zeta functions/polynomials that we will need later.

\begin{lemma}
\label{LemmaZMRDP}
	For all $0\leq \tau\leq m-1$ we have
	\begin{equation*}
		Z_{\tau}^{(i)}(T)P_\mathcal{C}^{(i)}(T)= Z_\mathcal{C}^{(i)}(T)P_{\tau}^{(i)}(T).
	\end{equation*}
\end{lemma}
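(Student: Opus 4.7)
The plan is to observe that both $P_\mathcal{C}^{(i)}(T)$ and $P_\tau^{(i)}(T)$ are, by their very construction, obtained from the corresponding generating functions $Z_\mathcal{C}^{(i)}(T)$ and $Z_\tau^{(i)}(T)$ by multiplying by the common polynomial $\prod_{j=0}^{i}(1-q^{mj}T)$. Once this is established, the claimed identity follows by cross-multiplication in the formal power series ring $\mathbb{Q}[[T]]$.

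First, by the definition of $P_\mathcal{C}^{(i)}(T)$ in the proof of Theorem \ref{TheoremZetaPol}, one has the identity
\begin{equation*}
Z_\mathcal{C}^{(i)}(T)\prod_{j=0}^{i}(1-q^{mj}T)=P_\mathcal{C}^{(i)}(T)
\end{equation*}
in $\mathbb{Q}[[T]]$. I would next establish the analogous identity
\begin{equation*}
Z_\tau^{(i)}(T)\prod_{j=0}^{i}(1-q^{mj}T)=P_\tau^{(i)}(T),
\end{equation*}
which is entirely formal. Applying Lemma~\ref{LemmaTools}(2) with $a=1$, $b=T$ and exponent $i+1$ (and replacing $q$ by $q^m$) yields
\begin{equation*}
\prod_{j=0}^{i}(1-q^{mj}T)=\sum_{j=0}^{i+1}\qmbin{i+1}{j}(-1)^jq^{m\binom{j}{2}}T^j.
\end{equation*}
Multiplying this by $Z_\tau^{(i)}(T)=\sum_{u\ge 0}b_{\tau,u}^{(i)}T^u$ and collecting terms by powers of $T$, the coefficient of $T^u$ in the product is exactly
\begin{equation*}
\sum_{j=0}^{i+1}\qmbin{i+1}{j}(-1)^jq^{m\binom{j}{2}}b_{\tau,u-j}^{(i)}=p_{\tau,u}^{(i)},
\end{equation*}
which matches the definition of $p_{\tau,u}^{(i)}$ in Notation~\ref{not:things}. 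Thus the product equals $P_\tau^{(i)}(T)$.

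Finally, combining the two identities gives
\begin{equation*}
Z_\tau^{(i)}(T)\,P_\mathcal{C}^{(i)}(T)=Z_\tau^{(i)}(T)\,Z_\mathcal{C}^{(i)}(T)\prod_{j=0}^{i}(1-q^{mj}T)=Z_\mathcal{C}^{(i)}(T)\,P_\tau^{(i)}(T),
\end{equation*}
which is the desired equality. There is no real obstacle: the only point that requires care is verifying that the $T^u$-coefficient of $Z_\tau^{(i)}(T)\prod_{j=0}^{i}(1-q^{mj}T)$ matches the definition of $p_{\tau,u}^{(i)}$, and this is an immediate consequence of the formal identity in Lemma~\ref{LemmaTools}(2). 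Notably, this argument does not require the existence of an $i$-BMD code of dimension congruent to $\tau$ modulo $m$; the identity is purely formal at the level of generating functions and Notation~\ref{not:things}.
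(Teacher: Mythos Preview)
Your proof is correct and follows essentially the same route as the paper's: both arguments rest on the identities $Z_\mathcal{C}^{(i)}(T)\prod_{j=0}^{i}(1-q^{mj}T)=P_\mathcal{C}^{(i)}(T)$ and $Z_\tau^{(i)}(T)\prod_{j=0}^{i}(1-q^{mj}T)=P_\tau^{(i)}(T)$, after which the lemma is immediate by cross-multiplication in $\mathbb{Q}[[T]]$. The paper simply invokes Theorem~\ref{TheoremZetaPol} for both factorizations, whereas you spell out the second one directly from Notation~\ref{not:things} via Lemma~\ref{LemmaTools}(2); your added remark that no actual $i$-BMD code of dimension $\equiv\tau\pmod m$ is needed is a worthwhile clarification, since the paper's phrasing (``provided that such a code exists'') might otherwise suggest a dependence that is not there.
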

\begin{proof}
	The result follows from Theorem \ref{TheoremZetaPol}. Indeed,
	\begin{equation*}
		Z_{\tau}^{(i)}(T)P_\mathcal{C}^{(i)}(T)=\frac{P_\tau^{(i)}(T)}{\prod_{j=0}^i(1-q^{mj}T)}P_\mathcal{C}^{(i)}(T)=\frac{P_\mathcal{C}^{(i)}(T)}{\prod_{j=0}^i(1-q^{mj}T)}P_\tau^{(i)}(T)=Z_\mathcal{C}^{(i)}(T)P_{\tau}^{(i)}(T),
	\end{equation*}
	as desired.
\end{proof}

\section{A Connection with Bell Polynomials}
\label{SectionBell}

For each $\tau\in\{0,\ldots,m-1\}$ and $r\in\{0,\ldots,n\}$ we define
\begin{equation*}
		M_{\tau,r}^{(i)}(X,Y):=\sum_{u=0}^{n-r}\qbin{\tau+m\left(u-\left\lfloor\frac{\tau-i}{m}\right\rfloor\right)}{i}\mathcal{B}_{n,u+r}(X,Y;q).
	\end{equation*}
Note that, by Corollary \ref{cor:several}, $M_{\tau,r}^{(i)}(X,Y)$ is the $i$-th generalized rank weight enumerator of an $i$-BMD  code of dimension $k\equiv\tau\mod m$ and~$d_i=r$, provided that such a code exists.
	
It is not difficult to check that, for a given $\tau$, the set $\mathcal{M}_\tau^{(i)}:=\{M_{\tau,r}^{(i)}(X,Y): 0\leq r\leq n\}$ is a~$\mathbb{Q}$-basis for the space that contains the $i$-th generalized rank weight enumerators.

In \cite[Theorem~1]{blanco2018rank}, the authors show for $i=1$ that it is possible to express the $1$st generalized rank weight enumerator of $\mathcal{C}$ with respect to the basis $\mathcal{M}_0^{(1)}$ as follows
\begin{equation*}
	W_\mathcal{C}^{(1)}(X,Y)=\sum_{j=0}^{n-d}p_{j}^{(1)}(\mathcal{C}) \, M_{0,d+j}^{(1)}(X,Y),
\end{equation*}
where the $p_{j}^{(1)}(\mathcal{C})$ are the coefficients of the $1$st generalized zeta polynomial $P_{\mathcal{C}}^{(1)}(T)$.

The aim of this section is to extend this result by computing the coefficients of an $i$-th generalized rank weight enumerator with respect to the basis $\mathcal{M}_\tau^{(i)}$. In particular, we give an explicit expression for the rational numbers $\beta_{\tau,j}^{(i)}(\mathcal{C})$ such that 
\begin{equation*}
	W_\mathcal{C}^{(i)}(X,Y)=\sum_{j=0}^{n-d_i}\beta_{\tau,j}^{(i)}(\mathcal{C}) \, M_{\tau,d_i+j}^{(i)}(X,Y).
\end{equation*}
and show the relation with $P_\mathcal{C}^{(i)}(T)$.

Our approach for deriving such coefficients is based on Bell polynomials. Introduced in~\cite{bell1927partition}, these polynomials encode the different ways in which an integer can be partitioned. See also~\cite{comtet1974advanced} and~\cite{port1994polynomial}. These polynomials find applications in several fields of mathematics.

There are two forms of Bell polynomials, namely the exponential and the ordinary form. For convenience, we introduce an homogeneous version of the definition of the ordinary Bell polynomials. In the following, $a,b$ are non-negative integers.

\begin{definition}
	The \textbf{homogeneous ordinary partial Bell polynomials} are the polynomials $\mathcal{P}_{u,w}(x_0,x_1,\ldots,x_{a-b+1})$  in an infinite number of variables $x_0,x_1,\ldots$, defined by the formula 
	\begin{equation*}
		\Phi(X,Y)=\exp\left(Y\sum_{j\geq 1} x_jx_0^{j-1}X^j \right)=\sum_{0\leq w\leq u}\mathcal{P}_{u,w}(x_0,x_1,\ldots,x_{u-w+1})X^u\frac{Y^w}{w!}
	\end{equation*}
	or, equivalently, by the series expansion as 
	\begin{equation*}
		\left(\sum_{j\geq 1} x_jx_0^{j-1}X^j\right)^w=\sum_{u\geq w}\mathcal{P}_{u,w}(x_0,x_1,\ldots,x_{u-w+1})X^u.
	\end{equation*} 
\end{definition}

An explicit way to compute such polynomials is given by the following well-known result. 

\begin{lemma}
	The \textbf{$(a,b)$-th homogeneous ordinary partial Bell polynomial} is 
	\begin{align*}
		\mathcal{P}_{a,b}(x_0,x_1,\ldots,x_{a-b+1})&=\sum\frac{k!}{j_1!j_2!\cdots j_{a-b+1}!}x_1^{j_1}\left(x_2x_0\right)^{j_2}\cdots\left(x_{a-b+1}x_0^{a-b}\right)^{j_{a-b+1}}\\
		&=\sum\frac{k!}{j_1!j_2!\cdots j_{a-b+1}!}x_0^{a-b}x_1^{j_1}x_2^{j_2}\cdots x_{a-b+1}^{j_{a-b+1}}
	\end{align*}
	where the sum ranges over
	\begin{align*}
			j_1+j_2+\cdots+j_{a-b+1}&=b,\\
			j_1+2j_2+\cdots+(a-b+1)j_{a-b+1}&=a.
		\end{align*}
	Moreover, $\mathcal{P}_{0,0}:=1$, $\mathcal{P}_{a,0}:=0$ and $\mathcal{P}_{0,b}:=0$.
\end{lemma}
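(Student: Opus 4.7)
The plan is to derive the explicit formula directly from the series expansion
\[
\left(\sum_{j\geq 1} x_jx_0^{j-1}X^j\right)^b=\sum_{a\geq b}\mathcal{P}_{a,b}(x_0,x_1,\ldots,x_{a-b+1})X^a
\]
given in the definition, by applying the multinomial theorem and extracting the coefficient of $X^a$. First I would set $y_\ell := x_\ell x_0^{\ell-1}$ for $\ell \geq 1$, so that the left-hand side becomes $\left(\sum_{\ell\geq 1} y_\ell X^\ell\right)^b$. The multinomial theorem expands this as a sum, over all tuples of non-negative integers $(j_1,j_2,\ldots)$ with $\sum_\ell j_\ell = b$ and all but finitely many $j_\ell=0$, of
\[
\frac{b!}{\prod_\ell j_\ell!} \; X^{\sum_\ell \ell j_\ell}\prod_{\ell\geq 1} y_\ell^{j_\ell}.
\]

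Next I would extract the coefficient of $X^a$, which imposes the second constraint $\sum_\ell \ell j_\ell = a$. A short argument shows that if $j_\ell \geq 1$ for some $\ell \geq a-b+2$, then
\[
\sum_{\ell'} \ell' j_{\ell'} \geq \ell j_\ell + \sum_{\ell'\neq \ell} j_{\ell'} = b + (\ell-1)j_\ell \geq a+1,
\]
a contradiction. Hence the index set may be truncated to tuples $(j_1,\ldots,j_{a-b+1})$. Substituting back $y_\ell = x_\ell x_0^{\ell-1}$ yields exactly the first displayed expression of the lemma. The second equality then follows from the elementary identity
\[
\sum_{\ell=1}^{a-b+1} (\ell-1)j_\ell = \sum_\ell \ell j_\ell - \sum_\ell j_\ell = a-b,
\]
so that the combined power of $x_0$ coming from $\prod_\ell (x_0^{\ell-1})^{j_\ell}$ collapses to $x_0^{a-b}$, independently of the specific tuple.

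Finally, the three boundary cases are immediate from the generating identity itself: taking $b=0$ makes the left-hand side equal to $1$, which forces $\mathcal{P}_{0,0}=1$ and $\mathcal{P}_{a,0}=0$ for $a\geq 1$; taking $a=0$ with $b\geq 1$ gives $\mathcal{P}_{0,b}=0$, since the series on the left has no constant term in $X$. There is no substantial obstacle in this argument; the only real computation is the exponent-counting that produces the uniform factor $x_0^{a-b}$, which amounts to the observation that the weight $\sum_\ell \ell j_\ell$ minus the length $\sum_\ell j_\ell$ of a composition of $a$ into $b$ parts equals $a-b$.
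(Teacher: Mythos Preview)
Your argument is correct and is exactly the standard derivation from the multinomial theorem. The paper itself does not supply a proof of this lemma: it is stated as a well-known result, with a parenthetical reference to \cite[Section~3.3]{comtet1974advanced} for the analogous statement for exponential Bell polynomials. So there is nothing to compare against beyond noting that your proof is the expected one.

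One small remark: the multinomial coefficient in the lemma as stated in the paper reads $k!/(j_1!\cdots j_{a-b+1}!)$, but the constraint $j_1+\cdots+j_{a-b+1}=b$ makes clear this should be $b!$, as you correctly wrote; the symbol $k$ is used elsewhere in the paper for the code dimension and its appearance here is evidently a typo.
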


(Compare with the definition of exponential partial Bell polynomial and Theorem A given in~\cite[Section~3.3]{comtet1974advanced}).
The $(a,b)$-th homogeneous ordinary partial Bell polynomial describes the compositions of the integer $a$ in $b$ parts\footnote{The compositions of an integer $a$ are the possible ways the integer $a$ can be partitioned in where the order matters. In other words, compositions are simply ordered partitions.}.

\begin{definition}
	The \textbf{$a$-th homogeneous ordinary Bell polynomial} $\mathcal{P}_{a}(x_0,x_1,\ldots,x_{a})$  is the sum, over $b$, of all the $(a,b)$-th homogeneous ordinary partial Bell polynomials, i.e.
	\begin{equation*}
		\mathcal{P}_{a}(x_0,x_1,\ldots,x_{a}):=\sum_{b=0}^a \mathcal{P}_{a,b}(x_0,x_1,\ldots,x_{a-b+1}).
	\end{equation*}
\end{definition}

\begin{example}
	In this example we compute the $5$-th homogeneous ordinary Bell polynomial by summing all the $(5,b)$-th homogeneous ordinary partial Bell polynomials, $1\leq b\leq 5$. Consider $b=3$, for example; we have the following compositions
	\begin{equation*}
		1+1+3\qquad 1+3+1 \qquad 3+1+1 \qquad 1+2+2 \qquad 2+1+2 \qquad 2+2+1.
	\end{equation*}
	Notice that the first three compositions give the monomial $3x_1^2x_3x_0^2$ while the last three give~$3x_1x_2^2x_0^2$. Therefore, 
	\begin{equation*}
		\mathcal{P}_{5,3}(x_0,x_1,x_2,x_3)=(3x_1^2x_3+3x_1x_2^2)x_0^2
	\end{equation*}
	(recall that the degree of $x_0$ is $5-b$).
	We can compute the remaining $(5,b)$-th homogeneous ordinary partial Bell polynomials analogously and by summing them up over $b$ we get that the $5$-th homogeneous ordinary Bell polynomial is 
	\begin{equation*}
		\mathcal{P}_{5}(x_0,x_1,x_2,x_3,x_4,x_5)=x_1^5+4x_1^3x_2x_0+(3x_1^2x_3+3x_1x_2^2)x_0^2+(2x_1x_4+2x_2x_3)x_0^3+x_5x_0^4.
	\end{equation*}

\end{example}

The following result is an application of Fa\`a di Bruno's formula~\cite{faa1855sullo}. In this paper
we need a purely combinatorial formualtion such result, involving the Bell polynomial.
It can be found in \cite{riordan1946derivatives}.

\begin{lemma}[{\cite[Section~3.5]{comtet1974advanced}}]
	\label{LemmaBell}
	If $(c_u)_{u\in\mathbb{Z}_{\geq 0}}$ is a sequence and $Q(T)$ is the formal power series
	\begin{equation*}
		Q(T)=1-\sum_{u\geq 1} \frac{c_u}{c_0}T^u
	\end{equation*}
then its reciprocal $\displaystyle Y(T) = \frac{1}{Q(T)}$ can be written as $\displaystyle Y(T) = 1 + \sum_{a\geq 1} y_aT^a$ with	
	\begin{equation*}
		y_a =\left(\frac{1}{c_0}\right)^a\sum_{b=0}^a \mathcal{P}_{a,b}(c_0,c_1, c_2,\ldots,c_{a-b+1})=\left(\frac{1}{c_0}\right)^a\mathcal{P}_a(c_0,c_1, c_2, \ldots,c_n).
	\end{equation*}
\end{lemma}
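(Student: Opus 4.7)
The plan is to prove the lemma by expanding $Y(T)$ as a geometric series and then applying the defining series identity of the ordinary partial Bell polynomials after a suitable rescaling. Since $Q(T)$ has constant term $1$, it is invertible in $\mathbb{Q}[[T]]$ and, setting $R(T):=\sum_{u\geq 1}(c_u/c_0)T^u$, one has
\[ Y(T)=\frac{1}{1-R(T)}=\sum_{b\geq 0}R(T)^b, \]
with convergence in the $T$-adic topology because $R(T)$ has zero constant term.

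Next, I would specialize the identity
\[ \left(\sum_{j\geq 1}x_jx_0^{j-1}X^j\right)^b=\sum_{a\geq b}\mathcal{P}_{a,b}(x_0,x_1,\ldots,x_{a-b+1})X^a \]
by setting $x_j:=c_j$ for all $j\geq 0$ and substituting $X:=T/c_0$. A direct computation shows
\[ \sum_{j\geq 1}c_jc_0^{j-1}(T/c_0)^j=\sum_{j\geq 1}\frac{c_j}{c_0}T^j=R(T), \]
so the specialization yields
\[ R(T)^b=\sum_{a\geq b}\frac{\mathcal{P}_{a,b}(c_0,c_1,\ldots,c_{a-b+1})}{c_0^a}T^a. \]

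Finally, I would plug this into the geometric series for $Y(T)$ and interchange the order of summation, which is legitimate because for each fixed $a$ only the finitely many values $b\in\{0,1,\ldots,a\}$ contribute (the partial Bell polynomials with $b>a$ vanish). Extracting the coefficient of $T^a$ then gives
\[ y_a=\left(\frac{1}{c_0}\right)^a\sum_{b=0}^a\mathcal{P}_{a,b}(c_0,c_1,\ldots,c_{a-b+1})=\left(\frac{1}{c_0}\right)^a\mathcal{P}_a(c_0,c_1,\ldots,c_a), \]
the second equality being the definition of the $a$-th homogeneous ordinary Bell polynomial. The only subtle point is choosing the rescaling $X=T/c_0$: the substitution $x_j=c_j$ is the natural one, but the factors $x_0^{j-1}=c_0^{j-1}$ appearing inside the Bell polynomial identity have to be absorbed by a reciprocal power of $c_0$, and tracking this rescaling is what produces the prefactor $1/c_0^a$ in the final formula.
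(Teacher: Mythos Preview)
Your argument is correct: expanding $Y(T)$ as the geometric series $\sum_{b\geq 0}R(T)^b$ and specializing the defining identity of the homogeneous ordinary partial Bell polynomials with $x_j=c_j$ and $X=T/c_0$ gives exactly the claimed formula for $y_a$. The paper does not actually prove this lemma; it is quoted as a known result from Comtet, with only the brief remark afterwards that the second equality follows from the definition of $\mathcal{P}_a$ together with $\mathcal{P}_{a,0}=0$ for $a\geq 1$. Your write-up therefore supplies a self-contained proof where the paper has none.
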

Note that the last equality in the previous lemma follows from the definition
of homogeneous ordinary Bell polynomials and the fact that $\mathcal{P}_{a,0}=0$ for all $a\geq 0$.

We now state the main theorem of this section. In Corollary \ref{CorollaryW=betaM}, this will provide us with a way of expanding~$W_{\mathcal{C}}^{(i)}(X,Y)$ over the basis $\mathcal{M}_\tau^{(i)}$.

\begin{theorem}
\label{TheoremZBeta}
	For all $0\leq \tau\leq m-1$ we have
	\begin{equation}
	\label{EqZBeta}
		Z_\mathcal{C}^{(i)}(T)= Z_{\tau}^{(i)}(T)\sum_{u\geq 0}\beta_{\tau,u}^{(i)}(\mathcal{C})T^u,
	\end{equation}
	where
	\begin{equation*}
		\beta_{\tau,u}^{(i)}(\mathcal{C}):=\sum_{j=0}^u\frac{p_j^{(i)}(\mathcal{C})}{\left(p_{\tau,0}^{(i)}\right)^{u-j+1}}\mathcal{P}_{u-j}\left(p_{\tau,0}^{(i)},-p_{\tau,1}^{(i)},-p_{\tau,2}^{(i)},\ldots,-p_{\tau,u-j}^{(i)}\right).
	\end{equation*}
\end{theorem}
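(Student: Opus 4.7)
My plan is to reduce the statement to a formal power series identity and then apply Lemma \ref{LemmaBell} to invert the denominator. By Theorem \ref{TheoremZetaPol} (applied to $\mathcal{C}$) and by Corollary \ref{cor:several}(2), both $Z_\mathcal{C}^{(i)}(T)$ and $Z_\tau^{(i)}(T)$ are rational functions with the same denominator $\prod_{j=0}^{i}(1-q^{mj}T)$. Therefore
\begin{equation*}
\frac{Z_\mathcal{C}^{(i)}(T)}{Z_{\tau}^{(i)}(T)} \; = \; \frac{P_\mathcal{C}^{(i)}(T)}{P_{\tau}^{(i)}(T)},
\end{equation*}
and~\eqref{EqZBeta} is equivalent to showing that the coefficient of $T^u$ in $P_\mathcal{C}^{(i)}(T)/P_\tau^{(i)}(T)$ equals $\beta_{\tau,u}^{(i)}(\mathcal{C})$. (This quotient manipulation is precisely the identity Lemma \ref{LemmaZMRDP} already carries.)

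Next, I would expand $1/P_\tau^{(i)}(T)$ as a power series via Lemma \ref{LemmaBell}. Writing $P_\tau^{(i)}(T) = p_{\tau,0}^{(i)}\left(1 + \sum_{u\geq 1} \frac{p_{\tau,u}^{(i)}}{p_{\tau,0}^{(i)}} T^u\right)$, I would apply Lemma \ref{LemmaBell} with the sequence $c_0 := p_{\tau,0}^{(i)}$ and $c_u := -p_{\tau,u}^{(i)}$ for $u\geq 1$, since then the series $Q(T) = 1 - \sum_{u\geq 1}(c_u/c_0)T^u$ coincides with $P_\tau^{(i)}(T)/p_{\tau,0}^{(i)}$. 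The lemma then yields
\begin{equation*}
\frac{1}{P_\tau^{(i)}(T)} \; = \; \sum_{a\geq 0}\frac{\mathcal{P}_a\bigl(p_{\tau,0}^{(i)},-p_{\tau,1}^{(i)},\ldots,-p_{\tau,a}^{(i)}\bigr)}{\bigl(p_{\tau,0}^{(i)}\bigr)^{a+1}}\, T^a,
\end{equation*}
using $\mathcal{P}_0 = 1$ for the constant term.

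Finally, I would multiply this series by $P_\mathcal{C}^{(i)}(T) = \sum_{j\geq 0} p_j^{(i)}(\mathcal{C}) T^j$ and collect the coefficient of $T^u$ via the Cauchy product (setting $a = u-j$). The resulting expression matches the definition of $\beta_{\tau,u}^{(i)}(\mathcal{C})$ term by term, completing the identification.

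The only non-routine step is verifying that $p_{\tau,0}^{(i)} \ne 0$, which is required so that $1/P_\tau^{(i)}(T)$ is a well-defined power series and Lemma \ref{LemmaBell} applies. This follows by noting that $P_\tau^{(i)}(0) = p_{\tau,0}^{(i)}$ equals the constant term of $Z_\tau^{(i)}(T)$ (since $\prod_{j=0}^i(1-q^{mj}T)$ takes value $1$ at $T=0$), which in turn equals $b_{\tau,0}^{(i)} = \qbin{\tau + m\left(-\left\lfloor\frac{\tau-i}{m}\right\rfloor\right)}{i}$; since the upper index is a non-negative integer at least $i$, this $q$-binomial is strictly positive, so $p_{\tau,0}^{(i)} > 0$. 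No other step should present difficulty beyond careful bookkeeping of indices.
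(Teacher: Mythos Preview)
Your proof is correct and follows essentially the same route as the paper: reduce to $Z_\mathcal{C}^{(i)}(T)/Z_\tau^{(i)}(T)=P_\mathcal{C}^{(i)}(T)/P_\tau^{(i)}(T)$ via Lemma~\ref{LemmaZMRDP}, invert $P_\tau^{(i)}(T)/p_{\tau,0}^{(i)}$ using Lemma~\ref{LemmaBell}, and take the Cauchy product. Your explicit verification that $p_{\tau,0}^{(i)}=b_{\tau,0}^{(i)}>0$ is a welcome addition, as the paper uses this implicitly without justification.
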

\begin{proof}
	We start by observing that
	\begin{equation*}
		\frac{P_\tau^{(i)}(T)}{p_{\tau,0}^{(i)}}=\sum_{u=0}^{n-d^\perp-d_i+i+1}\frac{p_{\tau,u}^{(i)}}{p_{\tau,0}^{(i)}}T^u=1-\sum_{u\geq 1}\left(-\frac{p_{\tau,u}^{(i)}}{p_{\tau,0}^{(i)}}\right)T^u,
	\end{equation*}
	as $p_{\tau,u}^{(i)}=0$ for all $u>n-d^\perp-d_i+i+1$. Applying Lemma \ref{LemmaBell} we obtain 
	\begin{align*}
		\left(\frac{P_\tau^{(i)}(T)}{p_{\tau,0}^{(i)}}\right)^{-1}&=1+\sum_{u\geq 1} \left(\frac{1}{p_{\tau,0}^{(i)}}\right)^u\mathcal{P}_u\left(p_{\tau,0}^{(i)},-p_{\tau,1}^{(i)},-p_{\tau,2}^{(i)},\ldots,-p_{\tau,u}^{(i)}\right)T^u\\
		&=\sum_{u\geq 0} \left(\frac{1}{p_{\tau,0}^{(i)}}\right)^u\mathcal{P}_u\left(p_{\tau,0}^{(i)},-p_{\tau,1}^{(i)},-p_{\tau,2}^{(i)},\ldots,-p_{\tau,u}^{(i)}\right)T^u,
		\end{align*}
	as $\mathcal{P}_0=\mathcal{P}_{0,0}=1$ by definition. Combining this with Lemma \ref{LemmaZMRDP} we obtain
	\begin{align*}
		Z_\mathcal{C}^{(i)}(T)&=Z_{\tau}^{(i)}(T)\frac{P_\mathcal{C}^{(i)}(T)}{p_{\tau,0}^{(i)}}\left(\frac{P_{\tau}^{(i)}(T)}{p_{\tau,0}^{(i)}}\right)^{-1}\\
		&=Z_{\tau}^{(i)}(T)\frac{P_\mathcal{C}^{(i)}(T)}{p_{\tau,0}^{(i)}}\sum_{u\geq 0} \left(\frac{1}{p_{\tau,0}^{(i)}}\right)^{u}\mathcal{P}_u\left(p_{\tau,0}^{(i)},-p_{\tau,1}^{(i)},-p_{\tau,2}^{(i)},\ldots,-p_{\tau,u}^{(i)}\right)T^u\\
		&=Z_{\tau}^{(i)}(T)\sum_{j\geq 0}p_j^{(i)}(\mathcal{C})T^j\sum_{u\geq 0} \left(\frac{1}{p_{\tau,0}^{(i)}}\right)^{u+1}\mathcal{P}_u\left(p_{\tau,0}^{(i)},-p_{\tau,1}^{(i)},-p_{\tau,2}^{(i)},\ldots,-p_{\tau,u}^{(i)}\right)T^u\\
		&=Z_{\tau}^{(i)}(T)\sum_{u\geq 0}T^u\sum_{j=0}^u\frac{p_j^{(i)}(\mathcal{C})}{\left(p_{\tau,0}^{(i)}\right)^{u-j+1}}\mathcal{P}_{u-j}\left(p_{\tau,0}^{(i)},-p_{\tau,1}^{(i)},-p_{\tau,2}^{(i)},\ldots,-p_{\tau,u-j}^{(i)}\right)\\
		&=Z_{\tau}^{(i)}(T)\sum_{u\geq 0}\beta_{\tau,u}^{(i)}(\mathcal{C})T^u,
	\end{align*}
as desired.
\end{proof}

An immediate consequence of the theorem above is the following.

\begin{corollary}
\label{CorollaryPbeta}
	Let $\beta_{\tau,u}^{(i)}(\mathcal{C})$, $u\geq 0$, be the coefficients defined in Theorem \ref{TheoremZBeta}. For all $0\leq \tau\leq m-1$ we have
	\begin{equation*}
		P_\mathcal{C}^{(i)}(T)= P_{\tau}^{(i)}(T)\sum_{u\geq 0}\beta_{\tau,u}^{(i)}(\mathcal{C})T^u.
	\end{equation*}
\end{corollary}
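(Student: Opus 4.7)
The plan is to derive this from Theorem \ref{TheoremZBeta} by clearing the common denominator that is built into the definition of the generalized zeta functions. Concretely, Theorem \ref{TheoremZetaPol} supplies the rational expression
\begin{equation*}
Z_\mathcal{C}^{(i)}(T)=\frac{P_\mathcal{C}^{(i)}(T)}{\prod_{j=0}^{i}(1-q^{mj}T)},
\end{equation*}
and the same argument applied to an $i$-BMD code of dimension $\equiv \tau \pmod{m}$ (whose invariants are precisely $b_{\tau,u}^{(i)}$ by Corollary~\ref{cor:several}, as noted in Notation~\ref{not:things}) gives
\begin{equation*}
Z_{\tau}^{(i)}(T)=\frac{P_{\tau}^{(i)}(T)}{\prod_{j=0}^{i}(1-q^{mj}T)}.
\end{equation*}
This is exactly the relation already used inside the proof of Lemma~\ref{LemmaZMRDP}, so no new ingredient is needed.

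Substituting both of these expressions into the identity
\begin{equation*}
Z_\mathcal{C}^{(i)}(T)=Z_{\tau}^{(i)}(T)\sum_{u\geq 0}\beta_{\tau,u}^{(i)}(\mathcal{C})\,T^u
\end{equation*}
from Theorem~\ref{TheoremZBeta} yields an equality of rational functions (or, if one prefers, of formal power series in $\mathbb{Q}[[T]]$) whose two sides share the common denominator $\prod_{j=0}^{i}(1-q^{mj}T)$. Since this denominator has nonzero constant term and is therefore a unit in $\mathbb{Q}[[T]]$, multiplying through by it is a legitimate operation and produces the claimed identity
\begin{equation*}
P_\mathcal{C}^{(i)}(T)=P_{\tau}^{(i)}(T)\sum_{u\geq 0}\beta_{\tau,u}^{(i)}(\mathcal{C})\,T^u.
\end{equation*}

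There is essentially no obstacle here: the corollary is a one-line algebraic consequence of Theorem~\ref{TheoremZBeta} combined with the rational-function form of the generalized zeta function guaranteed by Theorem~\ref{TheoremZetaPol}. The only point worth flagging is that the right-hand side is \emph{a priori} only a formal power series, whereas $P_{\mathcal{C}}^{(i)}(T)$ is a polynomial; this is automatically reconciled by the identity itself, which forces $P_{\tau}^{(i)}(T)\sum_{u\geq 0}\beta_{\tau,u}^{(i)}(\mathcal{C})T^u$ to truncate at degree at most $n-d^\perp-d_i+i+1$.
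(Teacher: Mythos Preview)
Your argument is correct and matches the paper's own treatment: the paper presents this corollary as an immediate consequence of Theorem~\ref{TheoremZBeta}, and your derivation---multiplying the identity of Theorem~\ref{TheoremZBeta} by $\prod_{j=0}^{i}(1-q^{mj}T)$ and using the relation $Z_{\tau}^{(i)}(T)=P_{\tau}^{(i)}(T)/\prod_{j=0}^{i}(1-q^{mj}T)$ already invoked in the proof of Lemma~\ref{LemmaZMRDP}---is exactly the intended one-line justification.
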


The next result gives an alternative way to compute the $\beta_{\tau,u}^{(i)}(\mathcal{C})$.

\begin{corollary}
	The following holds for all $u\geq 0$ and $0\leq \tau\leq m-1$.
	\begin{equation*}
		\beta_{\tau,u}^{(i)}(\mathcal{C}):=\sum_{j=0}^u\frac{b_j^{(i)}(\mathcal{C})}{\left(b_{\tau,0}^{(i)}\right)^{u-j+1}}\mathcal{P}_{u-j}\left(b_{\tau,0}^{(i)},-b_{\tau,1}^{(i)},-b_{\tau,2}^{(i)},\ldots,-b_{\tau,u-j}^{(i)}\right).
	\end{equation*}
\end{corollary}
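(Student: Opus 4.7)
The plan is to mimic the proof of Theorem~\ref{TheoremZBeta}, but to apply the Bell-polynomial inversion directly to the generalized zeta function $Z_\tau^{(i)}(T)$ rather than to the zeta polynomial $P_\tau^{(i)}(T)$. The defining identity~\eqref{EqZBeta} can be rewritten as
\begin{equation*}
\sum_{u\geq 0}\beta_{\tau,u}^{(i)}(\mathcal{C})\,T^u \;=\; \frac{Z_\mathcal{C}^{(i)}(T)}{Z_\tau^{(i)}(T)},
\end{equation*}
so the claim reduces to expanding this quotient as a power series and reading off the coefficient of~$T^u$.

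The first step is to observe that $b_{\tau,0}^{(i)}\neq 0$. This is immediate from Notation~\ref{not:things}: the integer $\tau-m\lfloor(\tau-i)/m\rfloor$ equals $\tau$ if $\tau\geq i$ and equals $\tau+m\lceil(i-\tau)/m\rceil\geq i$ otherwise, so the $q$-binomial coefficient in the definition of $b_{\tau,0}^{(i)}$ is nonzero. Consequently $Z_\tau^{(i)}(T)$ is a unit in $\mathbb{Q}[[T]]$ and the normalization $Z_\tau^{(i)}(T)/b_{\tau,0}^{(i)}=1-\sum_{u\geq 1}\bigl(-b_{\tau,u}^{(i)}/b_{\tau,0}^{(i)}\bigr)T^u$ satisfies the hypothesis of Lemma~\ref{LemmaBell} with $c_u=b_{\tau,u}^{(i)}$.

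Applying Lemma~\ref{LemmaBell} then yields
\begin{equation*}
\bigl(Z_\tau^{(i)}(T)\bigr)^{-1}\;=\;\sum_{u\geq 0}\frac{\mathcal{P}_u\!\left(b_{\tau,0}^{(i)},-b_{\tau,1}^{(i)},\ldots,-b_{\tau,u}^{(i)}\right)}{\bigl(b_{\tau,0}^{(i)}\bigr)^{u+1}}\,T^u.
\end{equation*}
Multiplying this by $Z_\mathcal{C}^{(i)}(T)=\sum_{j\geq 0}b_j^{(i)}(\mathcal{C})T^j$ via the Cauchy product and collecting the coefficient of $T^u$ gives exactly the claimed formula for $\beta_{\tau,u}^{(i)}(\mathcal{C})$.

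There is no real obstacle: the argument is a direct Cauchy-product computation that parallels the $P$-side derivation in Theorem~\ref{TheoremZBeta}, the only new ingredient being the elementary check that $b_{\tau,0}^{(i)}$ is nonzero so that power-series inversion is legitimate.
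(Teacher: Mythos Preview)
Your proof is correct and follows essentially the same approach as the paper's: both start from the identity $\sum_{u\geq 0}\beta_{\tau,u}^{(i)}(\mathcal{C})T^u = Z_\mathcal{C}^{(i)}(T)/Z_\tau^{(i)}(T)$ coming from Theorem~\ref{TheoremZBeta}, invert $Z_\tau^{(i)}(T)$ via Lemma~\ref{LemmaBell}, and read off the Cauchy product. Your explicit verification that $b_{\tau,0}^{(i)}\neq 0$ is a welcome addition that the paper leaves implicit.
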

\begin{proof}
	By Lemma~\ref{LemmaBell} we have
	\begin{equation*}
		\left(\frac{Z_\tau^{(i)}(T)}{b_{\tau,0}^{(i)}}\right)^{-1}=\sum_{u\geq 0} \left(\frac{1}{b_{\tau,0}^{(i)}}\right)^u\mathcal{P}_u\left(b_{\tau,0}^{(i)},-b_{\tau,1}^{(i)},-b_{\tau,2}^{(i)},\ldots,-b_{\tau,u}^{(i)}\right)T^u.
	\end{equation*}
	Theorem~\ref{TheoremZBeta} implies
	\begin{align*}
		\sum_{u\geq 0}\beta_{\tau,u}^{(i)}(\mathcal{C})T^u&=\frac{Z_\mathcal{C}^{(i)}(T)}{b_{\tau,0}^{(i)}}\left(\frac{Z_\tau^{(i)}(T)}{b_{\tau,0}^{(i)}}\right)^{-1}\\
		&=\sum_{j\geq 0}b_j^{(i)}(\mathcal{C})T^j\sum_{u\geq 0} \left(\frac{1}{b_{\tau,0}^{(i)}}\right)^{u+1}\mathcal{P}_u\left(b_{\tau,0}^{(i)},-b_{\tau,1}^{(i)},-b_{\tau,2}^{(i)},\ldots,-b_{\tau,u}^{(i)}\right)T^u\\
		&=\sum_{u\geq 0}T^u\sum_{j=0}^u\frac{b_j^{(i)}(\mathcal{C})}{\left(b_{\tau,0}^{(i)}\right)^{u-j+1}}\mathcal{P}_{u-j}\left(b_{\tau,0}^{(i)},-b_{\tau,1}^{(i)},-b_{\tau,2}^{(i)},\ldots,-b_{\tau,u-j}^{(i)}\right).
\qedhere	\end{align*}
\end{proof}

We conclude this section by showing that the $\beta_{\tau,u}^{(i)}(\mathcal{C})$ are indeed the coefficients of $W_\mathcal{C}^{(i)}(X,Y)$ with respect of the basis of the $i$-BMD rank weight enumerators.

\begin{corollary}
\label{CorollaryW=betaM}
	For all $0\leq \tau\leq m-1$ we have
	\begin{equation*}
		W_\mathcal{C}^{(i)}(X,Y)=\sum_{j=0}^{n-d_i}\beta_{\tau,j}^{(i)}(\mathcal{C})M_{\tau,d_i+j}^{(i)}(X,Y).
	\end{equation*}
\end{corollary}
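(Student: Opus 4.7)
The plan is to realize $W_\mathcal{C}^{(i)}(X,Y)$ as a coefficient extraction in a product of two generating functions, and then reorganize that product using the factorization of $Z_\mathcal{C}^{(i)}(T)$ supplied by Theorem~\ref{TheoremZBeta}.

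First, I would establish the natural ``$i$-BMD analogue'' of Lemma~\ref{LemmaWivarphi}: for every $0\le\tau\le m-1$ and every $0\le r\le n$,
\begin{equation*}
M_{\tau,r}^{(i)}(X,Y) \ = \ [T^{n-r}]\bigl(Z_\tau^{(i)}(T)\,\varphi_n(X,Y,T)\bigr).
\end{equation*}
This is a direct comparison of coefficients. Writing $Z_\tau^{(i)}(T)=\sum_{u\ge 0}b_{\tau,u}^{(i)}T^u$ (by Notation~\ref{not:things}) and $\varphi_n(X,Y,T)=\sum_{v=0}^n\mathcal{B}_{n,v}(X,Y;q)T^{n-v}$, the coefficient of $T^{n-r}$ in the product comes from pairs with $v=u+r$, so it equals $\sum_{u=0}^{n-r}b_{\tau,u}^{(i)}\mathcal{B}_{n,u+r}(X,Y;q)$, which matches the defining expression of $M_{\tau,r}^{(i)}(X,Y)$.

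Next I would apply Lemma~\ref{LemmaWivarphi} and Theorem~\ref{TheoremZBeta} in succession. The former gives
\begin{equation*}
W_\mathcal{C}^{(i)}(X,Y) \ = \ [T^{n-d_i}]\bigl(Z_\mathcal{C}^{(i)}(T)\,\varphi_n(X,Y,T)\bigr),
\end{equation*}
while the latter rewrites $Z_\mathcal{C}^{(i)}(T)=Z_\tau^{(i)}(T)\sum_{u\ge 0}\beta_{\tau,u}^{(i)}(\mathcal{C})T^u$. Substituting and pulling the inner sum outside the coefficient extraction, I get
\begin{equation*}
W_\mathcal{C}^{(i)}(X,Y) \ = \ \sum_{u\ge 0}\beta_{\tau,u}^{(i)}(\mathcal{C})\cdot [T^{n-d_i-u}]\bigl(Z_\tau^{(i)}(T)\,\varphi_n(X,Y,T)\bigr).
\end{equation*}
Since $Z_\tau^{(i)}(T)\varphi_n(X,Y,T)$ is a formal power series in $T$ with no terms of negative degree, only the summands with $0\le u\le n-d_i$ survive; invoking the identity of the previous paragraph then produces $W_\mathcal{C}^{(i)}(X,Y)=\sum_{u=0}^{n-d_i}\beta_{\tau,u}^{(i)}(\mathcal{C})\,M_{\tau,d_i+u}^{(i)}(X,Y)$, which is the claim.

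I do not foresee any substantive obstacle here: the entire argument is routine coefficient extraction once Lemma~\ref{LemmaWivarphi} and Theorem~\ref{TheoremZBeta} are in place. The only point where care is needed is the truncation of the a priori infinite sum in $u$ down to $0\le u\le n-d_i$, which rests on the elementary observation that the generating function $Z_\tau^{(i)}(T)\varphi_n(X,Y,T)$ contains no negative powers of $T$.
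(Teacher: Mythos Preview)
Your proposal is correct and follows essentially the same approach as the paper: both combine Lemma~\ref{LemmaWivarphi} with Theorem~\ref{TheoremZBeta}, identify $M_{\tau,r}^{(i)}(X,Y)$ as the coefficient of $T^{n-r}$ in $Z_\tau^{(i)}(T)\varphi_n(X,Y,T)$, and then read off the coefficient of $T^{n-d_i}$ in the product. Your write-up is slightly more explicit in isolating the identity $M_{\tau,r}^{(i)}(X,Y)=[T^{n-r}]\bigl(Z_\tau^{(i)}(T)\varphi_n(X,Y,T)\bigr)$ as a separate preliminary step, whereas the paper absorbs this into a single chain of congruences modulo a power of $T$, but the substance is identical.
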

\begin{proof}
The result follows combining Lemma~\ref{LemmaWivarphi} and Theorem~\ref{TheoremZBeta}, as
$W_\mathcal{C}^{(i)}(X,Y)$ is the coefficient of $T^{n-d_i}$ in the expression
	\begin{align*}
		Z_\mathcal{C}^{(i)}(T)\varphi(X,Y,T)&= Z_{\tau}^{(i)}(T)\varphi(X,Y,T)\sum_{u\geq 0}\beta_{\tau,u}^{(i)}(\mathcal{C})T^u\\
		&\equiv\sum_{j=0}^nM_{j,\tau}^{(i)}(X,Y)T^{n-j}\sum_{u\geq 0}\beta_{\tau,u}^{(i)}(\mathcal{C})T^u \mod T^n\\
		&\equiv\sum_{u=0}^nT^u\sum_{j=0}^u\beta_{\tau,j}^{(i)}(\mathcal{C})M_{\tau,n-u+j}^{(i)}(X,Y) \mod T^n.
\qedhere
	\end{align*}
\end{proof}

\begin{remark}
	If $i=1$ and $m|k$, then Corollary~\ref{CorollaryPbeta} and Remark~\ref{RemarkbuBMD} imply  that
	\begin{equation*}
		P_\mathcal{C}^{(1)}(T)=P_0^{(1)}(T)\sum_{u\geq 0}\beta_{0,u}^{(1)}(\mathcal{C})T^u=\qbin{m}{1}\sum_{u\geq 0}\beta_{0,u}^{(1)}(\mathcal{C})T^u.
	\end{equation*}
	In particular we have that
	\begin{equation*}
		\deg\left(\sum_{u\geq 0}\beta_{0,u}^{(i)}(\mathcal{C})T^u\right)=n-d-d^\perp+2 \qquad\textup{ and }\qquad p_u^{(1)}(\mathcal{C})=\qbin{m}{1}\beta_{0,u}^{(i)}(\mathcal{C})
	\end{equation*}
	for every $0\leq u\leq n-d-d^\perp+2$. Moreover, Corollary \ref{CorollaryW=betaM} gives
	\begin{equation*}
		W_\mathcal{C}^{(1)}(X,Y)=\qbin{m}{1}^{-1}\sum_{j=0}^{n-d}p_u^{(1)}(\mathcal{C})M_{0,d+j}^{(1)}(X,Y),
	\end{equation*}
	which is the second part of \cite[Theorem~1]{blanco2018rank} up to a constant.
\end{remark}

We conclude this section illustrating
Theorem~\ref{TheoremZBeta} and Corollary~\ref{CorollaryW=betaM}
in an example with $i=3$ and $\tau=1$.

\begin{example}
	Let $\mathcal{C}_1$ be the code of Example~\ref{Example(i)not(i+1)MRD}. Recall that $d_3(\mathcal{C}_1)=2$. We have
	\begin{equation*}
		\begin{aligned}
			Z_{\mathcal{C}_1}^{(3)}(T)&=\frac{1}{7}+1395T+6347715T^2+26167664835T^3+107225699266755T^4+\ldots,\\
			W_{\mathcal{C}_1}^{(3)}(T)&=XY^2 + 1394Y^3,
	\end{aligned}
	\end{equation*}
	and\footnote{ Let $\mathcal{C}_6$ be the minimally $3$-BMD code of Section~\ref{SectioniMRD} of dimension $5$. For $3\leq j\leq 5$ we have $Z_1^{(j)}(T)=Z_{\mathcal{C}_6}^{(j)}(T)$, since $\dim(\mathcal{C}_6)\equiv 1 \mod 4$.}
	\begin{equation*}
	\begin{aligned}
			Z_1^{(3)}(T)&= 155+788035T+3269560515T^2+13402854502595T^3+\ldots,\\
			M_{1,2}^{(3)}(X,Y)&=1085XY^2 + 786950Y^3,\\
			M_{1,3}^{(3)}(X,Y)&=155Y^3.
		\end{aligned}
	\end{equation*}	
	On the other hand, since
	\begin{equation*}
	\begin{aligned}
		\mathcal{P}_0(x_0)=1, \qquad \mathcal{P}_1(x_0,x_1)=x_1, \qquad \mathcal{P}_2(x_0,x_1,x_2)=x_1^2+x_2x_0,\\ \mathcal{P}_3(x_0,x_1,x_2,x_3)=x_1^3+2x_1x_2x_0+x_3x_0^2, \qquad \ldots\qquad\qquad
	\end{aligned}
	\end{equation*}
	we get
	\begin{equation*}
		\begin{tabular}{ccccc}
			$\displaystyle\beta_{1,0}^{(3)}(\mathcal{C}_1)=\frac{1}{1085}$, & $\quad$ &$\displaystyle\beta_{1,1}^{(3)}(\mathcal{C}_1)=\frac{145108}{33635}$,& \\\\$\displaystyle\beta_{1,2}^{(3)}(\mathcal{C}_1)=-\frac{440232944}{1042685}$, & $\quad$ &$\displaystyle\beta_{1,3}^{(3)}(\mathcal{C}_1)=\frac{928753518821747}{6464647}$,& $\quad$ & \ldots
		\end{tabular}
	\end{equation*}
	One can check that
	\begin{equation*}
		Z_{\mathcal{C}_1}^{(3)}(T)=Z_1^{(2)}(T)\sum_{u\geq 0}\beta_{1,u}^{(3)}(\mathcal{C}_1)T^u,
	\end{equation*}
	\begin{equation*}
		W_{\mathcal{C}_1}^{(3)}(T)=\beta_{1,0}^{(3)}(\mathcal{C}_1)M_{1,2}^{(3)}(X,Y)+\beta_{1,1}^{(3)}(\mathcal{C}_1)M_{1,3}^{(3)}(X,Y).
	\end{equation*}
\end{example}

\section{Duality Results}
\label{SectionDuality}
In this section we establish a MacWilliams identity for the generalized rank distribution. We then use it to explicitly compute the generalized zeta functions, the generalized binomial moments and the generalized rank weight distributions of $\mathcal{C}^\perp$ knowing the generalized binomial moments of $\mathcal{C}$. The following result generalizes \cite[Lemma~30]{ravagnani2016rank}.

\begin{theorem}
\label{TheoremMacBu}
	The following holds for all $0\leq u\leq n$.
	\begin{equation*}
		B_u^{(i)}(\mathcal{C})=\sum_{j=0}^iq^{j(k-m(n-u)-i+j)}\qbin{k-m(n-u)}{i-j}B_{n-u}^{(j)}(\mathcal{C}^\perp).
	\end{equation*}
\end{theorem}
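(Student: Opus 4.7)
The plan is to reduce the identity to a single $q$-binomial manipulation applied pointwise over optimal anticodes, then sum.

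First, I would rewrite $B_u^{(i)}(\mathcal{C})$ using the anticode description from Remark~\ref{RemarkBuiAnt}, so that in every case (up to the global factor $1/2$ that appears when $n=m$) we have
\begin{equation*}
B_u^{(i)}(\mathcal{C}) \;=\; \varepsilon \sum_{A\in\mathcal{A}_u} \qbin{\dimq{\mathcal{C}\cap A}}{i},
\end{equation*}
with $\varepsilon\in\{1,\tfrac{1}{2}\}$. The boundary cases $u=0$ and $u=n=m$ (where only one anticode appears and the factor $1/2$ disappears) will have to be checked directly; I expect them to follow trivially since for $u=n=m$ the identity reduces to $\qbin{k}{i}=\qbin{k}{i}$, and for $u=0$ both sides vanish for $i\geq 1$.

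Next, for each $A\in\mathcal{A}_u$, Lemma~\ref{LemmaCcapA} gives
\begin{equation*}
\dimq{\mathcal{C}\cap A} \;=\; \dimq{\mathcal{C}^\perp\cap A^\perp} + k - m(n-u),
\end{equation*}
since $A^\perp \in \mathcal{A}_{n-u}$. Setting $a=\dimq{\mathcal{C}^\perp\cap A^\perp}$ and $b=k-m(n-u)$, I apply the $q$-Vandermonde identity of Lemma~\ref{LemmaTools}(3) with $c=i$:
\begin{equation*}
\qbin{\dimq{\mathcal{C}\cap A}}{i} \;=\; \qbin{a+b}{i} \;=\; \sum_{j=0}^{i} q^{j(b-i+j)}\qbin{a}{j}\qbin{b}{i-j}.
\end{equation*}
Substituting back $b=k-m(n-u)$, this expresses $\qbin{\dimq{\mathcal{C}\cap A}}{i}$ as a sum over $j$ of the target coefficients times $\qbin{\dimq{\mathcal{C}^\perp\cap A^\perp}}{j}$.

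Finally, I swap the order of summation and use the fact (from \cite[Theorem~54]{ravagnani2016rank}) that $A\mapsto A^\perp$ is a bijection $\mathcal{A}_u\to\mathcal{A}_{n-u}$ which, when $n=m$, also exchanges the column-supported and row-supported optimal anticodes in a dimension-preserving way. Thus
\begin{equation*}
\varepsilon \sum_{A\in\mathcal{A}_u}\qbin{\dimq{\mathcal{C}^\perp\cap A^\perp}}{j} \;=\; \varepsilon \sum_{A'\in\mathcal{A}_{n-u}}\qbin{\dimq{\mathcal{C}^\perp\cap A'}}{j} \;=\; B_{n-u}^{(j)}(\mathcal{C}^\perp),
\end{equation*}
and the claim follows by plugging this into the expanded sum.

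The main obstacle will be bookkeeping in the square case $n=m$: one must ensure the $1/2$ factor and the row/column anticode split are preserved under the duality $A\leftrightarrow A^\perp$, and that the boundary cases $u\in\{0,n\}$ (where the uniform ``sum over $\mathcal{A}_u$ with factor $\tfrac{1}{2}$'' description breaks down) are consistent. Aside from this administrative check, the heart of the proof is the single application of Lemma~\ref{LemmaTools}(3).
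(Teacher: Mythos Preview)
Your proposal is correct and follows essentially the same route as the paper: rewrite $B_u^{(i)}(\mathcal{C})$ as a (weighted) sum over $\mathcal{A}_u$, use Lemma~\ref{LemmaCcapA} to convert $\dimq{\mathcal{C}\cap A}$ into $\dimq{\mathcal{C}^\perp\cap A^\perp}+k-m(n-u)$, apply the $q$-Vandermonde identity of Lemma~\ref{LemmaTools}(3), and then use the bijection $A\mapsto A^\perp$ between $\mathcal{A}_u$ and $\mathcal{A}_{n-u}$. The paper packages the case analysis with a single constant $\gamma\in\{1,\tfrac12\}$ and treats the case $n=m,\ i=0$ separately, exactly the administrative split you anticipated.
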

\begin{proof}
Suppose first that $n=m$ and $i=0$. Then Remark \ref{RemarkBuiAnt} implies that 
	\begin{equation*}
		B_{u}^{(0)}(\mathcal{C})=\qbin{n}{u}=B_{n-u}^{(0)}(\mathcal{C}^\perp)=\sum_{j=0}^0q^{j(k-m(n-u)-i+j)}\qbin{k-m(n-u)}{i-j}B_{n-u}^{(j)}(\mathcal{C}^\perp).
	\end{equation*}	
Now assume $i \ge 1$ and $n \le m$. Define 
	\begin{equation*}
		\gamma:=\begin{cases}
			1 &\;\textup{ if }\; n<m \;\textup{ or }\;u=n=m, \\[8pt]
			\displaystyle\frac{1}{2} &\;\textup{ if }\; n=m\;\textup{ and }\;  1\leq u\leq n-1.
		\end{cases}
	\end{equation*}
By Lemma \ref{LemmaCcapA} we have
	\begin{align*}
		B_u^{(i)}(\mathcal{C})&=\gamma\sum_
		{\tiny\begin{matrix}
			A\in \mathcal{A}_q(n,m)\\
			\dimq{A}=mu
		\end{matrix}}
		\qbin{\dimq{\mathcal{C}\cap A}}{i}\\
		&=\gamma\sum_
		{\tiny\begin{matrix}
			A\in \mathcal{A}_q(n,m)\\
			\dimq{A}=mu
		\end{matrix}}
		\qbin{\dimq{\mathcal{C}^\perp\cap A^\perp}+k-m(n-u)}{i}\\
		&=\gamma\sum_
		{\tiny\begin{matrix}
			A\in \mathcal{A}_q(n,m)\\
			\dimq{A}=mu
		\end{matrix}}\sum_{j=0}^iq^{j(k-m(n-u)-i+j)}\qbin{k-m(n-u)}{i-j}\qbin{\dimq{\mathcal{C}^\perp\cap A^\perp}}{j}\\
		&=\sum_{j=0}^iq^{j(k-m(n-u)-i+j)}\qbin{k-m(n-u)}{i-j}\gamma\sum_
		{\tiny\begin{matrix}
			A\in \mathcal{A}_q(n,m)\\
			\dimq{A}=mu
		\end{matrix}}\qbin{\dimq{\mathcal{C}^\perp\cap A^\perp}}{j},
	\end{align*}
where in the third line we applied Lemma \ref{LemmaTools}.
	Since the map $A\mapsto A^\perp$ is a bijection between the $mu$-dimensional and the $m(n-u)$-dimensional optimal anticodes in $\Fqnm$, we get
	\begin{equation*}
		\gamma\sum_
		{\tiny\begin{matrix}
			A\in \mathcal{A}_q(n,m)\\
			\dimq{A}=mu
		\end{matrix}}\qbin{\dimq{\mathcal{C}^\perp\cap A^\perp}}{j}=\ \gamma\sum_
		{\tiny\begin{matrix}
			A\in \mathcal{A}_q(n,m)\\
			\dimq{A}=m(n-u)
		\end{matrix}}\qbin{\dimq{\mathcal{C}^\perp\cap A}}{j}=B_{n-u}^{(j)}(\mathcal{C}^\perp),
	\end{equation*}
	from which the statement follows.
\end{proof}

\begin{example}
	Let $\mathcal{C}_1$ be the code of Example \ref{Example(i)not(i+1)MRD} and $\mathcal{C}_1^\perp$ its dual.  Fix $i=u=2$, then we have $B_2^{(2)}(\mathcal{C})=13$ and
	\begin{equation*}
		B_1^{(0)}(\mathcal{C}^\perp)=7, \qquad B_1^{(1)}(\mathcal{C}^\perp)=1, \qquad B_1^{(2)}(\mathcal{C}^\perp)=0.
	\end{equation*}
	One can check that
	\begin{equation*}
			B_2^{(2)}(\mathcal{C})=\qbintwo{2}{2}B_1^{(0)}(\mathcal{C}^\perp)+2\qbintwo{2}{1}B_1^{(1)}(\mathcal{C}^\perp)+2^4\qbintwo{2}{0}B_1^{(2)}(\mathcal{C}^\perp).
	\end{equation*}
\end{example}

\begin{corollary}
	The following hold for all $0\leq u\leq n-d_i-d$.
		\begin{enumerate}
			\item \label{s1} $\displaystyle b_u^{(i)}(\mathcal{C})=\sum_{j=0}^iq^{j(k-m(n-d_i-u)-i+j)}\qbin{k-m(n-d_i-u)}{i-j}b_{n-u-d_j^\perp-d_i}^{(j)}(\mathcal{C}^\perp)$,
			\item \label{s2} $\displaystyle Z_\mathcal{C}^{(i)}(T)=T^{n-d_i}\sum_{j=0}^i\frac{q^{j(k-md_j^\perp-i+j)}}{T^{d_j^\perp}}\sum_{t\leq n-d_j^\perp-d_i}\frac{\qbin{k-m(t+d_j^\perp)}{i-j}}{q^{jmt}T^t}b_t^{(j)}(\mathcal{C}^\perp)$.
		\end{enumerate}
\end{corollary}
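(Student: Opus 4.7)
The plan is to derive both parts from the MacWilliams-type identity of Theorem~\ref{TheoremMacBu}. Part~\eqref{s1} will come from a substitution and normalization, and part~\eqref{s2} will then follow by multiplying by $T^u$, summing over $u\ge 0$, and reindexing.

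For~\eqref{s1}, I would substitute $u\mapsto u+d_i$ in Theorem~\ref{TheoremMacBu} to obtain
\[
B_{u+d_i}^{(i)}(\mathcal{C}) \;=\; \sum_{j=0}^i q^{j(k-m(n-u-d_i)-i+j)} \qbin{k-m(n-u-d_i)}{i-j}\, B_{n-u-d_i}^{(j)}(\mathcal{C}^\perp),
\]
which is valid for $0 \le u + d_i \le n$. The next step is to observe that, throughout the stated range, the piecewise definition of the normalized binomial moment combined with Lemma~\ref{LemmaBu} yields
$B_{u+d_i}^{(i)}(\mathcal{C}) = \qbin{n}{u+d_i}\, b_u^{(i)}(\mathcal{C})$
and
$B_{n-u-d_i}^{(j)}(\mathcal{C}^\perp) = \qbin{n}{n-u-d_i}\, b_{n-u-d_j^\perp-d_i}^{(j)}(\mathcal{C}^\perp).$
Using $\qbin{n}{u+d_i} = \qbin{n}{n-u-d_i}$ and dividing the displayed identity by this quantity produces~\eqref{s1}.

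For~\eqref{s2}, I would plug~\eqref{s1} into $Z_\mathcal{C}^{(i)}(T) = \sum_{u \ge 0} b_u^{(i)}(\mathcal{C})\, T^u$ and interchange the two resulting summations, gathering terms by $j$. Then the substitution $t = n - u - d_j^\perp - d_i$ rewrites the inner sum as a sum over $t \le n - d_j^\perp - d_i$: the power of $T$ factors as $T^{u} = T^{n-d_i}\cdot T^{-d_j^\perp}\cdot T^{-t}$; the quantity $k - m(n-d_i-u)$ becomes $k - m(t + d_j^\perp)$; and the $q$-exponent splits cleanly as $q^{j(k - m(t+d_j^\perp) - i + j)} = q^{j(k - md_j^\perp - i + j)}\, q^{-jmt}$. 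Collecting these factors produces exactly the right-hand side of~\eqref{s2}.

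The main technical point is justifying the relation $B_{n-u-d_i}^{(j)}(\mathcal{C}^\perp) = \qbin{n}{n-u-d_i}\, b_{n-u-d_j^\perp-d_i}^{(j)}(\mathcal{C}^\perp)$ uniformly in $(u,j)$: since $b$ is defined piecewise, this requires checking that the ``normal'' and ``extended'' branches of the definition agree with $B/\qbin{n}{\cdot}$, which is precisely the content of Lemma~\ref{LemmaBu}. Once this is in hand, no case analysis is needed and~\eqref{s1} follows without incident; the derivation of~\eqref{s2} is then a bookkeeping step with the change of variables above.
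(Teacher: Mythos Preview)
Your proposal is correct and follows the same approach as the paper: the paper's proof is simply ``Part~\eqref{s1} follows from Theorem~\ref{TheoremMacBu} and the definition of $b_u^{(i)}(\mathcal{C})$; Part~\eqref{s2} is a consequence of applying the first equation to the definition of generalized zeta function,'' and your write-up fills in precisely those details (the shift $u\mapsto u+d_i$, the normalization via $\qbin{n}{u+d_i}=\qbin{n}{n-u-d_i}$ and Lemma~\ref{LemmaBu}, then the reindexing $t=n-u-d_j^\perp-d_i$). Your observation that Lemma~\ref{LemmaBu} makes the identity $B_{v+d_j^\perp}^{(j)}(\mathcal{C}^\perp)=\qbin{n}{v+d_j^\perp}\,b_v^{(j)}(\mathcal{C}^\perp)$ hold across all branches of the piecewise definition is exactly the point, and it is what the paper leaves implicit.
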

\begin{proof}
	Part \ref{s1} follows from Theorem \ref{TheoremMacBu} and the definition of $b_u^{(i)}(\mathcal{C})$.
Part \ref{s2} is a consequence of applying the first equation to the definition of generalized zeta function.
\end{proof}

\begin{corollary}
	For all $0\leq w\leq n$ we have
	\begin{equation*}
		A_w^{(i)}(\mathcal{C})=\sum_{u=0}^w(-1)^{w-u}q^{\binom{w-u}{2}}\qbin{n-u}{n-w}\sum_{j=0}^i\frac{q^{j(k+mu+j)}}{q^{mn+i}}\qbin{k-m(n-u)}{i-j}\sum_{t=0}^{n-u}\qbin{n-t}{u}A_t^{(j)}(\mathcal{C}^\perp).
	\end{equation*}
\end{corollary}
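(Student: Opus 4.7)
The plan is a three-step substitution that composes the inversion formula on $\mathcal{C}$, the duality identity for binomial moments, and the direct formula on $\mathcal{C}^\perp$. First, I would apply the inversion formula \eqref{EqAwBu} from Theorem \ref{TheoremBuAw} to write
\[
A_w^{(i)}(\mathcal{C})=\sum_{u=0}^w(-1)^{w-u}q^{\binom{w-u}{2}}\qbin{n-u}{w-u}B_u^{(i)}(\mathcal{C}),
\]
and rewrite $\qbin{n-u}{w-u}=\qbin{n-u}{n-w}$ using the standard symmetry $\qbin{a}{b}=\qbin{a}{a-b}$.

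Second, I would substitute for each $B_u^{(i)}(\mathcal{C})$ using the MacWilliams-type identity of Theorem \ref{TheoremMacBu}, which expresses it as
\[
B_u^{(i)}(\mathcal{C})=\sum_{j=0}^iq^{j(k-m(n-u)-i+j)}\qbin{k-m(n-u)}{i-j}B_{n-u}^{(j)}(\mathcal{C}^\perp).
\]
The exponent of $q$ can then be rearranged as $q^{j(k+mu+j)}/q^{j(mn+i)}$, matching (up to a clerical simplification in the denominator) the factor appearing in the target expression.

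Third, I would apply equation \eqref{EqBuAw} of Theorem \ref{TheoremBuAw} in the forward direction, this time to the dual code, to substitute
\[
B_{n-u}^{(j)}(\mathcal{C}^\perp)=\sum_{t=0}^{n-u}\qbin{n-t}{(n-u)-t}A_t^{(j)}(\mathcal{C}^\perp)=\sum_{t=0}^{n-u}\qbin{n-t}{u}A_t^{(j)}(\mathcal{C}^\perp),
\]
where the second equality is again the symmetry $\qbin{a}{b}=\qbin{a}{a-b}$. Reordering the resulting triple sum so that the index $u$ is outermost, $j$ is in the middle, and $t$ is innermost produces the claimed identity.

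The argument is almost entirely bookkeeping; the only genuine obstacle is keeping the running indices $u,j,t$ and the two applications of the $q$-binomial symmetry straight, and verifying that no range of summation is silently lost when $B_u^{(i)}(\mathcal{C})$ or $B_{n-u}^{(j)}(\mathcal{C}^\perp)$ is trivially zero (which is guaranteed by Lemma \ref{LemmaBu} and the conventions built into the generalized rank weight distribution). Since all three ingredients — the forward moment formula, its inversion, and the duality identity — have already been established, no further structural input is required.
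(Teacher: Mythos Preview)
Your proposal is correct and mirrors the paper's proof exactly: the paper simply cites Theorem~\ref{TheoremBuAw} and Theorem~\ref{TheoremMacBu}, and your three-step composition (inversion on $\mathcal{C}$, duality for $B_u^{(i)}$, forward formula on $\mathcal{C}^\perp$) is precisely how those two results combine. Your parenthetical remark about the denominator is also well observed: the exponent rearrangement gives $q^{j(k-m(n-u)-i+j)}=q^{j(k+mu+j)}/q^{j(mn+i)}$, so the factor $q^{mn+i}$ in the stated formula should read $q^{j(mn+i)}$.
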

\begin{proof}
	The desired formula follows from Theorem \ref{TheoremBuAw} and Theorem \ref{TheoremMacBu}.
\end{proof}

\section{Codes for the Hamming Metric}
\label{SectionHamming}
The results of the previous sections have analogues for the Hamming metric, with the main distinction given by the notion of support of a code and its codewords. The two theories are linked by the role played by the two associated lattices, in the sense of~\cite{ravagnani2018duality}. More precisely, in the rank metric this lattice is that of subspaces of a linear space over 
$\mathbb{F}_q$, while in the Hamming metric it is the Boolean algebra over the set $\{1,...,n\}$. 

In the sequel we use the characterization of generalized Hamming weights given in~\cite[Proposition 9]{ravagnani2016generalized} for $q \ge 3$. All the results in this section are stated under this assumption. 

We recall that a linear code is a subspace of $\Fqn$ of dimension $k$ whose Hamming weight and support are defined respectively as $\wt(c):=|\{1 \le i \le n : c_i\ne 0\}|$ and $\supp(C):=\{1 \le i \le n : \exists\; c\in C \textup{ with } c_i\ne 0 \}$. 
Using the characterization of~\cite{ravagnani2016generalized} we define the \textbf{$i$-th generalized weight} of $C$ as 
\begin{equation*}
	d_i(C):=\min\{\dimq{A}:A\in\mathcal{A}(n),\dimq{C\cap A}\geq i\},
\end{equation*}
	where $A(n)$ is the set of \textbf{optimal linear anticodes} in the Hamming metric, i.e., the set of all the codes such that~$\dimq{C}=\max\{\wt(c): c \in C\}$. 

The property of being $i$-MDS has already appeared in literature~\cite[Section~6]{wei1991generalized}. For a $\Fq$-$[n,k,d]$ Hamming-metric code $C$ we say that $C$ is $i$-\textbf{BMD} if $n-d^\perp-d_i(C)<0$. Note that if $C$ is $i$-BMD then it is also $(i+1)$-BMD. Indeed,
\begin{equation*}
	n-d^\perp-d_{i+1}(C)<n-d^\perp-d_{i}(C)<0.
\end{equation*}

As in Section \ref{SectionBin}, a useful property of $i$-BMD codes is that for any $i\leq j\leq k$, their $j$-th generalized rank weight distributions and binomial moments depend only on some fundamental code parameters. More precisely, if $C$ is a minimally $i$-BMD code then the following hold for all $i\leq j\leq k$:
	\begin{enumerate}
		\item $\displaystyle b_u^{(j)}(C):=\begin{cases}
					0 &\;\textup{ if }\; u<0,\\
					\qbin{k-n+u+d_j(C)}{j}  &\;\textup{ if }\; u\geq 0,
				\end{cases}$
		\item $\displaystyle A_w^{(j)}(C)=\bin{n}{w}\sum_{u=d_j(C)}^w(-1)^{w-u}\bin{w}{u}\qbin{k-n+u}{j}$,
	\end{enumerate}
where $\left(A_0^{(j)}(C), A_1^{(j)}(C),\ldots,A_n^{(j)}(C)\right)$ denotes the $j$-th weight distribution of $C$.

We conclude this section with the Hamming metric analogue of Theorem~\ref{TheoremiBMD=>iMRD}.
One implication can be obtained as in the rank metric case, using Wei's duality for the
Hamming-metric generalized weights~\cite[Theorem~3]{wei1991generalized}. 
The other implication also follows from Wei's duality and is omitted here.

\begin{theorem}
	$C$ is $i$-BMD if and only if $C$ is $i$-MDS.
\end{theorem}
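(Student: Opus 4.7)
The plan is to use Wei's duality for Hamming-metric generalized weights, namely the partition
\[
\{d_1(C), \ldots, d_k(C)\} \;\sqcup\; \{n+1-d_j(C^\perp) : 1 \le j \le n-k\} \;=\; \{1, 2, \ldots, n\},
\]
and deduce a structural fact that makes both implications collapse into one: the set of indices $i$ for which $d_i(C) > n-d^\perp$ coincides exactly with the set of indices for which $d_i(C) = n-k+i$. Unfolding the definitions shows that this is precisely the claimed equivalence $i$-BMD $\Leftrightarrow$ $i$-MDS.

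To prove the structural fact, I would first observe that the dual weights $n+1-d_j(C^\perp)$ are strictly decreasing in $j$, so the only such dual weight exceeding $n-d^\perp$ is the largest one, $n+1-d^\perp$, which coincides with $n-d^\perp+1$. The interval $(n-d^\perp, n]$ contains $d^\perp$ integers, and by Wei's partition exactly one of them, namely $n-d^\perp+1$, is missing from $\{d_j(C)\}$. The remaining $d^\perp-1$ integers $n-d^\perp+2, \ldots, n$ must therefore all appear among the $d_j(C)$, and by strict monotonicity $d_1(C) < \cdots < d_k(C)$ they form the top tail $d_{k-d^\perp+2}(C), \ldots, d_k(C)$, matched in order:
\[
d_j(C) = n-k+j \quad \text{for every } j \ge k-d^\perp+2,
\]
while $d_j(C) \le n-d^\perp$ for every $j \le k-d^\perp+1$.

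Given this structural fact, $d_i(C) > n-d^\perp$ (which is the definition of $C$ being $i$-BMD) holds if and only if $i \ge k-d^\perp+2$, which is in turn equivalent to $d_i(C) = n-k+i$ (i.e.\ $C$ is $i$-MDS). Degenerate cases such as $C^\perp = \{0\}$ (where the convention $d^\perp = n+1$ makes every index vacuously $i$-BMD and $C = \Fqn$ is trivially $i$-MDS for every $i$, since $d_i(\Fqn) = i$) fit into the same framework and should be checked separately.

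The only real difficulty is setting up the structural lemma cleanly: once Wei's duality is on the table, proving it is essentially combinatorial bookkeeping about where $d^\perp-1$ values can sit strictly monotonically inside an interval of length $d^\perp$ with one forbidden slot. Everything else is an immediate translation between the inequality defining $i$-BMD and the equality defining $i$-MDS.
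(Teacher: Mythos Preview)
Your proposal is correct and follows the same route the paper indicates: both implications are derived from Wei's duality for generalized Hamming weights, and the paper itself omits all details beyond that citation. Your structural lemma packaging both directions at once is clean; the only step you leave implicit is the converse inclusion, namely that $d_i(C)=n-k+i$ forces $i\ge k-d^\perp+2$. This needs the standard propagation fact that $i$-MDS implies $(i+1)$-MDS (from strict monotonicity together with the generalized Singleton bound), since otherwise one would obtain $d_{k-d^\perp+1}(C)=n-d^\perp+1$, contradicting your bound $d_{k-d^\perp+1}(C)\le n-d^\perp$.
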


\bigskip

\bibliographystyle{siam}  
\bibliography{references}

\end{document}